\pgfplotsset{compat=1.18}
\newcommand{\xtic}[1]{(#1,0) -- ++(0,-0.25)}
\newcommand{\ytic}[1]{(0,#1) -- ++(-0.25,0)}
\newcommand{\I}[1]{I_{\set{#1}}}
\definecolor{OI-vermillion}{RGB}{213,94,0}
\newcommand\redout{\bgroup\markoverwith{\textcolor{red}{\rule[.5ex]{2pt}{0.4pt}}}\ULon}
\DeclareMathOperator*{\argmin}{arg\,min}
\long\def\/*#1*/{}
\newcommand{\MLC}{\beta}
\renewcommand{\L}{\mathcal{L}}
\newcommand{\X}{\mathcal{X}}
\newtheorem{proposition}{Proposition}
\newtheorem{theorem}{Theorem}
\newtheorem{definition}{Definition}
\newtheorem{lemma}{Lemma}
\newtheorem{corollary}{Corollary}
\newcommand{\spmf}[2]{\pmf{}{#2}}
\newcommand{\support}[2][n]{\mathcal{#2}_{#1}}
\newcommand{\lambar}{\bar{\lambda}}
\newcommand{\AU}{\mathcal{U}}
\newcommand{\AY}{\mathcal{Y}}
\newcommand{\TMaxL}{T_{\Lambda}}
\newcommand{\papertitle}{Balancing Timeliness and Privacy in Discrete-Time Updating Systems}
\title{\papertitle
\thanks{ 
A preliminary version of these results appeared in the Proceedings of the 2022 IEEE International Symposium on Information Theory (ISIT)~\cite{NityaPrivacy}. 
This work was supported in part by the US National Science Foundation under awards SaTC-1617849, CCF-1909468, and CNS-2148104.}
\thanks{The authors are with the Department of Electrical and Computer Engineering, Rutgers, The State University of New Jersey, 94 Brett Road, Piscataway, NJ 08854, USA (email: \href{mailto:nitya.s@rutgers.edu}{nitya.s@rutgers.edu}, \href{mailto:anand.sarwate@rutgers.edu}{anand.sarwate@rutgers.edu}, \href{narayan@winlab.rutgers.edu}{narayan@winlab.rutgers.edu}, \href{mailto:ryates@winlab.rutgers.edu}{ryates@winlab.rutgers.edu}).}
}
\author{
    Nitya Sathyavageeswaran,~\IEEEmembership{Student Member,~IEEE,}
\and
    Anand D.~Sarwate,~\IEEEmembership{Senior Member,~IEEE,}
\and
    Narayan B.~Mandayam,~\IEEEmembership{Fellow,~IEEE,} 
\and
    Roy D.~Yates,~\IEEEmembership{Fellow,~IEEE} 
}
\begin{document}

\maketitle

\begin{abstract}

We study the trade-off between Age of Information (AoI) and maximal leakage (MaxL) in discrete-time status updating systems. A source generates time-stamped update packets that are processed by a server that delivers them to a monitor. An adversary, who eavesdrops on the server-monitor link, wishes to infer the timing of the underlying source update sequence. The server must balance the timeliness of the status information at the monitor against the timing information leaked to the adversary. We consider a model with Bernoulli source updates under two classes of Last-Come-First-Served (LCFS) service policies: (1) Coupled policies that tie the server's deliveries  to the update arrival process in a preemptive queue; (2) Decoupled (dumping) policies in which the server transmits its freshest update according to a schedule that is independent of the update arrivals.
For each class, we characterize the structure of the optimal policy that minimizes AoI for a given MaxL rate. Our analysis reveals that decoupled dumping policies offer a superior age-leakage trade-off to coupled policies. When subject to a MaxL constraint, we prove that the optimal dumping strategy is achieved by dithering between two adjacent deterministic dump periods.

\end{abstract}
\begin{IEEEkeywords}
Age of information, maximal leakage, queueing systems,  status updates, Bernoulli process, Dinkelbach optimization.
\end{IEEEkeywords}

\section{Introduction}

\IEEEPARstart{I}n many applications involving monitoring using sensors such as smart homes/infrastructure, environmental monitoring, and healthcare, a central monitor or observer wishes to track local information at the sensors. 
In the parlance of Age of Information (AoI) analyzes, a \emph{source} (the sensor) generates time-stamped update packets that are processed by a \emph{server} that will forward them to a \emph{monitor} (see Figure \ref{fig:model}).
The utility of these updates is often tied to their timeliness. Stale information at the monitor can lead to incorrect decisions, reduced efficiency, or even critical failures. AoI quantifies the staleness of information: how much time has elapsed since the generation of the most recent update that has been received at the monitor~\cite{kaul2012real}. 

Privacy is another challenge in these sensing systems~\cite{porambage2016quest,lu2018internet}. Even if the data sent is encrypted, the packet timing remains correlated with user activity (the status updates). An adversary who observes packet timings could potentially infer sensitive information about the user~\cite{tahaei2020rise}. For instance, patterns in packet generation or arrival times might reveal when a user is home, their sleep schedule, or the frequency of certain medical events~\cite{apthorpe2017closing,das2016uncovering,buttyan2012traffic}. Approaches to mitigating this privacy loss include inserting transmission delays, altering transmission rates, sending dummy traffic, or otherwise shaping the traffic~\cite{diaz2004taxonomy,chaum1981untraceable,kesdogan1998stop,danezis2004traffic,apthorpe2019keeping,XiongSM:22shaping}. 

In this work we adopt the maximal leakage (MaxL) metric introduced by Issa, Kamath, and Wagner~\cite{IssaKW:16ciss} to quantify the privacy risk. MaxL is a measure of the information leaked about a sensitive quantity $X$ by observing  the output of a function $Y$ that is derived from $X$.  In our system, $X$ represents the packet generation times at the source, which implicitly encode sensitive user patterns. The output $Y$ represents the observed packet delivery timings at the monitor, which server policies modify to protect privacy. The relevance of MaxL and its interplay with AoI can be seen in a smart home setting. For instance, the original generation times of sensor updates, $X$, might reveal when a user is home, their sleep schedule, or appliance usage patterns.  An adversary, observing the obfuscated arrival times $Y$ at the monitor, aims to infer these sensitive activity patterns. 

MaxL is a suitable privacy metric in this context for a few reasons. The MaxL privacy measure has operational significance as it provides a guarantee on the difficulty an adversary faces in guessing a function of the sensitive information. A lower MaxL implies that the observed timings $Y$ provide less advantage to the adversary in  their guessing efforts, thus enhancing privacy~\cite{issa2019operational}. 
We note that MaxL satisfies a data processing inequality~\cite{IssaKW:16ciss}. This property guarantees that any obfuscation to the packet timings by our server cannot increase the information leakage.  

MaxL also offers advantages over other privacy metrics. MaxL is better suited for time-stamped data sequences such as  packet timings, compared to metrics like $k$-Anonymity~\cite{sweeney2002k} which is designed to make individual records indistinguishable from at least  $k-1$ other records.
Similarly, $\epsilon$-differential privacy~\cite{dwork2006calibrating} protects individual contributions in a dataset. In our context, the concern is not about hiding a single packet's existence but about protecting the overall user activity pattern $X$ (original packet timings) from the observation $Y$ (observed packet timings).  

To manage the competing goals of timeliness and privacy, we consider a server that has direct control over when updates are transmitted to the monitor. The fundamental mechanisms at the server's disposal are a controllable \textit{holding time} and an ability to discard updates. An update arriving at the server's buffer is deliberately held for a designed duration before its physical, single-slot transmission to the monitor. This allows for a wide range of strategies. At one extreme, the server could adopt a zero-delay policy, transmitting every update in the slot it becomes available. While this would achieve the lowest possible AoI, it would also create a transmission pattern that maximizes information leakage. 

To mitigate this leakage, the server must strategically manage these holding times to obscure the relationship between the arrivals and the departures. This designed holding period, followed by the physical transmission, constitutes what we will refer to as an update's effective service time. This enables us to analyze the server's privacy-enhancing strategies using queueing theory. The ``service" in our model is not a physical processing requirement but rather an artificial and controllable delay engineered for privacy. The core problem is  thus to design the optimal distribution for these effective service times to balance freshness and privacy.

The main problem we address is the inherent tension between minimizing AoI and reducing the information leakage associated with packet timings. Naively minimizing AoI often leads to revealing transmission patterns, thereby increasing the MaxL. Conversely, aggressively obfuscating packet timings to reduce MaxL can increase the AoI. This paper aims to characterize this fundamental trade-off. While First-Come-First-Served (FCFS) queueing is a common policy, its poor performance for AoI is well-established~\cite{kaul2012status}.  Our primary focus is therefore on policies that prioritize freshness by adopting  Last-Come-First-Served (LCFS) mechanisms that give priority to the freshest updates. We will employ FCFS only as a performance benchmark.
In this context, our paper makes the following contributions:
\begin{itemize}
    \item We formally model the AoI-MaxL trade-off in a discrete-time system in which a source generates updates as a Bernoulli process and the server policy then determines if/when these updates are transmitted to the monitor. 
   
    \item We investigate this trade-off for two classes of LCFS-based service policies, distinguished by whether the server's transmission schedule is {\em coupled} to the arrivals of individual updates:
    \begin{itemize}
        \item LCFS Coupled Policies: 
        The server operates as a Ber/G/1/1 LCFS queue with preemption; a newly arriving update immediately replaces any update currently in service, initiating a new service period. 
        \item LCFS Decoupled (Dumping) Policies: In this class, the server also stores only the single freshest update, but its decision to transmit is decoupled from the arrivals of individual updates. Instead, transmissions are governed by an independent timer. 
    \end{itemize}
    \item For the policy classes, we find optimal strategies that minimize the average AoI at the monitor subject to a MaxL constraint. 
    \item Our analysis provides insights into how system parameters and policy choices influence the balance between information freshness and privacy. 
\end{itemize}

Our analysis reveals several key insights into the age-leakage trade-off. We show that decoupled (dumping) policies outperform coupled policies. For the coupled LCFS policy class, we find that the optimal strategy is a ``greedy" service time distribution that prioritizes the shortest possible service durations. Within the superior decoupled class, we show that while randomized policies are generally suboptimal, the best age-leakage trade-off is achieved by dithering, a policy that probabilistically chooses between deterministic dump periods that differ by one slot in duration. These results show that the most effective approach to balance privacy and AoI combines discarding stale updates with decoupled schedules built upon deterministic principles.  

The remainder of the paper is organized as follows. 
Section \ref{relatedwork} reviews related work in privacy metrics, AoI analysis, and papers  at the intersection of timeliness and privacy. Section \ref{sysmodel} describes the system model. Section \ref{sec:coupled} and Section \ref{sec:decoupled} analyze the AoI-MaxL trade-off for the LCFS coupled and decoupled policies, respectively. In Section~\ref{sec:optimaldecoupled} we derive the optimal policy in the decoupled class. 
Section \ref{results} presents results based on simulations that explore different parameter regimes. In Section \ref{conclusion} we discuss the implications of our results along with different promising directions for future work.

\subsection{Related Work} \label{relatedwork}

The quantification of information leakage is a rich area of study. Classical information theoretic measures such as mutual information (MI)~\cite{shannon1949communication} $I(X;Y)$ capture the average reduction in uncertainty about a secret $X$ given an observation $Y$. Several works  have explored secrecy capacity and utility-privacy trade-offs using MI~\cite{wyner1975wire, gopala2008secrecy, sankar2013utility}. However, MI averages over all possible inputs and outputs and may not protect against scenarios where specific, highly sensitive functions of $X$ can be inferred with high probability~\cite{issa2019operational}. 

An alternate approach employs the maximal leakage (MaxL) metric, proposed by Issa et~al.~\cite{IssaKW:16ciss,issa2019operational}, which provides a worst-case leakage guarantee over all possible functions of $X$ an adversary might try to infer. Several extensions and variations of MaxL have been proposed. Liao et~al.~\cite{liao2020maximal} introduced a more flexible version of MaxL called maximal $\alpha$-leakage where $\alpha\in[1,\infty)$ can be tuned for different applications; for $\alpha=1$ the metric corresponds to MI and as $\alpha\to\infty$, the metric corresponds to MaxL. This framework has been used to study privacy-utility trade-offs for a data disclosure problem under \textit{hard distortion constraints} (which provide deterministic guarantees on fidelity)~\cite{Liaoprivacy2018}. 
More recently, Saeidian et~al.~\cite{saeidian2023pointwise} introduced another variation called Pointwise MaxL. Instead of looking at the average leakage over all possible observed data $Y$, this measure quantifies the information revealed about $X$ by observing a single outcome $Y=y$. 
Other information leakage metrics  studied include maximal correlation~\cite{li2018maximal} and total variation (TV) distance. Rassouli and G\"und\"uz~\cite{RassouliTV2020}   proposed TV distance as a privacy measure, showing that it satisfies key properties like the post-processing inequality, simplifies the privacy-utility optimization into a standard linear program, and provides bounds on other leakage metrics like MI and MaxL.  

Minimizing MaxL subject to various system utility or  cost constraints has also been studied~\cite{saeidian2021optimal, WuOptimalmechanismsLeakage2020, LiaoHypothesisleakage2017, Issaleakage2016}. 
Saeidian et~al.~\cite{saeidian2021optimal} studied the trade-off between MaxL and expected Hamming distortion, showing that the optimal privacy mechanism for a known data distribution involves fully disclosing the most probable symbols while suppressing the least probable ones. 
Liao et~al.~\cite{LiaoHypothesisleakage2017} examined the trade-off between MaxL and the Type-II error exponent as the metric for data utility for hypothesis testing. In the context of the Shannon cipher system, Issa et~al.~\cite{Issaleakage2016} studied MaxL with lossy communication, deriving a single letter characterization for the asymptotic behavior of normalized MaxL. Issa et al.~\cite{issa2019operational} also derived the MaxL for specific queueing models in continuous time, namely an M/M/1 queue and an ``accumulate-and-dump'' system, where packets are held and then released.

The above-mentioned works explore applications of MaxL in various contexts, including cost-based optimization, hypothesis testing under specific utility metrics, and classical communication systems. However, they do not address the problem of information leakage in modern status updating systems where data freshness is a primary concern. For \emph{discrete-time updating systems}, our work addresses this gap by analyzing both queuing and dumping policies with general holding times. We introduce the AoI as a critical performance metric, thereby investigating  age-leakage trade-offs for systems with Bernoulli arrivals.

The time-average AoI  for various systems has been extensively studied since its introduction by Kaul et~al.~\cite{kaul2012real}. Initial work focused on finding the AoI for various continuous time queueing models including the  first-come first-served (FCFS) M/M/1, M/D/1 and D/M/1 queues~\cite{kaul2012real}. With respect to minimizing AoI, the monitor is interested in the most recent updates  and last-come first-served (LCFS) queues, with and without preemption, are typically superior to FCFS queues~\cite{kaul2012status,najm2016age,chen2016age, yates2018agepreemption, bedewy2016optimizing}.  

In the context of rate-constrained age optimization for memoryless sources, \.{I}nan and Telatar~\cite{InanOptimalLabelling2022} studied server policies that optimized the tradeoff between the rate of updates delivered to the monitor and the average age. Their optimal policy waits for a deterministic time $T$ and then transmits the next arriving update. Although this policy utilizes an independent timer, within our framework, we classify it as a coupled policy because the actual transmission instants are tied to the arrival process. In decoupled policies, the transmission instants are determined solely by the schedule and do not wait for fresh arrivals. 

The scope of AoI research has expanded significantly since its introduction more than a decade ago. Various service disciplines and application-motivated system assumptions have been analyzed, including systems with multiple sources~\cite{yates2012real,yates2018age}, energy harvesting sensors~\cite{bacinoglu2017scheduling,farazi2018average, feng2018minimizing,baknina2018coded}, and  different packet management strategies at the source or queue~\cite{CostapacketmanagementAge2014}. tool~\cite{Yates2020TheAO}. 
Another line of research has addressed the problem of scheduling transmissions in multi-user networks to minimize the long-run average AoI~\cite{sun2018age,Hsu2020Scheduling, kadota2018optimizing}. Many of these scheduling problems have been formulated and solved using the theory of Markov Decision Processes~\cite{CeranHARQ, tang2020minimizing, bedewy2021optimal, hsu2018age, sathyavageeswaran2024timely}.  The impact of system parameters, such as the buffer capacity and the number of server on the AoI has also been studied~\cite{soysal2021age, bedewy2019minimizing, kosta2019queue}. In addition to average AoI, the related metric of Peak Age of Information~\cite{CostapacketmanagementAge2014}---the age just before an update is received was introduced. This metric has been analyzed for various scheduling and re-transmission policies in systems with update delivery errors~\cite{chen2016age}, and optimal update dropping policies have been derived to minimize age~\cite{kavitha2021controlling}. For a comprehensive overview, readers are referred to the surveys~\cite{yates2021age, kosta2017age, kahraman2023age}.

The majority of AoI papers focus on continuous time modeling, whereas we adopt a discrete time model in this paper. The work on AoI metrics for discrete time queues~\cite{9148775, tripathi2019age, article, talak2019optimizing} is more directly relevant to our work. In particular our AoI analysis of LCFS policies in Section \ref{sec:coupled} is based on results from Tripathi et al.~\cite{tripathi2019age}. Privacy and AoI for discrete-time systems have been studied under other privacy metrics in recent work, either implicitly or explicitly. Jin et~al.~\cite{jin2020minimizing} studied AoI minimization in mobile crowd-sensing systems, designing a payment mechanism to incentivize mobile agents to share their locations with DP guarantees. Ozel et~al.~\cite{ozel2022state} analyzed the minimization of timing information 
leakage in a discrete-time energy harvesting channel while ensuring status updates remain timely. Zhang et~al.~\cite{zhang2023age} also study a discrete time model and introduced age-dependent differential privacy which takes into account the underlying stochastic evolution of time-varying databases. They show that using stale inputs and/or delaying the outputs can protect privacy. Our work contributes to this emerging field by specifically analyzing the optimal trade-offs between MaxL and AoI.

\section{System Model}\label{sysmodel}

\noindent \textbf{Notation.} 
Random variables are denoted by capital letters, while their specific realizations are denoted by corresponding lower case letters. 
For any binary sequence $u^n = (u_1, u_2, \ldots, u_n)$, we denote its Hamming weight by $\mathrm{wt}(u^n) = \sum_{i = 1}^ n u_i$. We use the notation $[n]$ to denotes the set of integers $\{1,2,\ldots, n\}$. 
\begin{figure}[t]
\centering
\includegraphics[width=\textwidth]{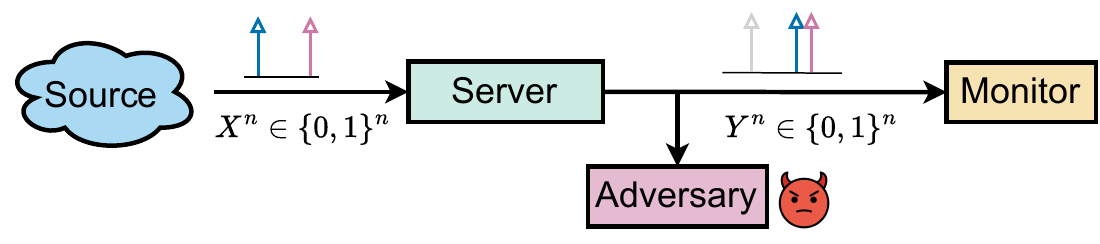}
\caption{Model for our general system setup in discrete time. Updates over $n$ time slots $X^n$ are generated at a source and must be scheduled (causally) to produce an output $Y^n$ observed by an eavesdropper. The mapping $X^n \to Y^n$ must balance the AoI at the monitor and the MaxL to the adversary. \label{fig:model}}
\end{figure}

Our discrete time system consists of a source, server, monitor and an adversary, as shown in Figure \ref{fig:model}. We let $t \ge 0$ denote the time interval $[t,t+1)$. At each time slot, the source may or may not generate an update. Transferring an update from the source to the server takes one time slot. Therefore, an update generated at time $t-1$ is received at the server at the beginning of slot $t$. We let $X_t \in \{0,1\}$ denote the arrivals of updates at the server; $X_t = 1$ indicates the source generated an update at time $t-1$ and otherwise $X_t = 0$. 

Within the time interval $[t,t+1)$ the order of actions is as follows:
    \begin{enumerate}
    \item If $X_t = 1$, the newly arrived update with time-stamp $t-1$ is stored in the buffer. 
    \item The server decides which (if any) of the updates in the buffer will be transmitted according to its policy.
    \item If the server transmits an update, that update  is removed from the buffer and the output for that slot is $Y_t = 1$. Otherwise, if no update is transmitted, $Y_t = 0$.
    \end{enumerate}
The adversary has access to the output $Y^n$ of the system from which information about the input process can be inferred. In this paper we assume the source sends fresh updates as a rate $\lambda$ Bernoulli process. This is both the simplest class of arrival processes and also allows us to simplify the calculation of the MaxL that we define next.

\subsection{Server Policies} \label{sec:policies}

A \emph{server policy} is a sequence of maps $\{ \pi_{t} \colon t \in \mathbb{N} \}$ where $\pi_t \colon \{0,1\}^t \times \{0,1\}^{t-1} \to \{0,1\} \times [t]$ where $[t] = \{1,2,\ldots, t\}$: the server generates $Y^n$ causally from $X^n$ using $(Y_t, t') = \pi_t(X^t, Y^{t-1})$, where $t'$ is the time stamp of an update. If no update is sent, i.e. $Y_t=0$, the time stamp takes a default value $t'=-1$, which is disregarded by the monitor. At time $t$, the adversary observes only $Y_t$ whereas the monitor observes $(Y_t,t')$. We focus on two broad classes of causal policies, depending on the capabilities of the server and the requirements of the monitor.

The goal of the server is to select a policy that balances privacy leakage to the adversary against the freshness of the information at the monitor. We quantify privacy leakage using MaxL~\cite{issa2019operational} and freshness using age of information (AoI)~\cite{kaul2012real}. A server policy determines when and which updates are transmitted from the server's buffer to the monitor. To minimize AoI, it is well known that policies should prioritize the freshest available information, following a LCFS principle. Our analysis therefore focuses on two broad classes of LCFS-based policies, and we compare with the traditional FCFS policy as a benchmark.
 \begin{itemize}
     \item \textbf {LCFS Coupled Policy (Ber/G/1/1)}: In this class, the server's transmission schedule is directly coupled to the arrival of updates. The server operates an LCFS queue with preemption: a newly arriving update immediately discards and replaces any update currently in service. The server is then occupied for a service time $S$ to process and transmit this new update, where $S$ is governed by a designable probability mass function (PMF).
     \item \textbf{ LCFS Decoupled  Policy (Dumping)}: In this class, the server's transmission schedule is decoupled from the update arrival process. The server stores only the single freshest update. Transmissions are governed by an independent timer, and in each designated slot, the server ``dumps" (forwards) the freshest update it currently holds, if any.
 \end{itemize}
For comparison, we also analyze the following baseline policy:
\begin{itemize}
    \item \textbf{FCFS Policy}: In this lossless policy, all arriving updates are placed in a buffer and serviced in the order of their arrival, each with an independent and identically distributed (i.i.d.) service time. 
\end{itemize}
We describe these policies more formally when we analyze them.

\subsection{Information Leakage}

We measure privacy leakage by the MaxL from the source sequence $X^n$ to the server transmission sequence $Y^n$. The updating policy of the source, coupled with the updating policy of the server induces a joint PMF $\pmf{X^n,Y^n}{x^n,y^n}$ on $\{0,1\}^n \times \{0,1\}^n$. We define the support set $\support{X}\triangleq\set{x^n\colon \pmf{X^n}{x^n}>0}$ of the PMF of $X^n$ and the support set $\support{Y}$ of the PMF of $Y^n$ analogously. The definition of MaxL is given as follows.

\begin{definition}[Issa et al.~\cite{issa2019operational}]
Given a joint distribution $P_{X^n Y^n}$ on finite alphabets $\mathcal{X}^n$ and $\mathcal{Y}^n$,  the  MaxL from $X^n\to Y^n$ is
    \begin{align}
    \L(X^n\to Y^n)
    \triangleq\log \sum_{y^n\in \support{Y}}
        \max_{x^n\in\support{X}} 
        \pmf{Y^n|X^n}{y^n|x^n}. 
        \eqnlabel{eqn:maxL_def}
    \end{align}
\end{definition}
In maximal leakage, for each $y^n$ we compute the maximum likelihood estimate of $x^n$ and then take the log of the sum of these likelihoods. 
The key technical challenge in computing \eqnref{eqn:maxL_def} is identifying the the maximum likelihood estimate input sequence $x^n\in \support{X}$ for each possible output sequence $Y^n=y^n$. A key difference between MaxL and privacy metrics like MI that incorporate the distribution of the arrival process is that MaxL depends on the arrival process $X^n$ only through its support $\support{X}$.  While we assume that the arrival process is Bernoulli, our analysis of the leakage applies to all arrival processes satisfying $\support{X}=\set{0,1}^n$ (i.e., with \emph{full support}.) Throughout this paper, logarithms are base 2 so the maximal leakage is measured in bits.

\subsection{Age of Information}

AoI was initially proposed for continuous-time processes; here we use the discrete-time age process model introduced by Kosta et al.~\cite{9148775}. 
In the parlance of AoI, the source arrivals are time-stamped updates  that are sent to the server. We assume the updates arrive at the beginning of a time slot and leave at the end of a time slot. An update generated by the source in slot $t$ (implying $X_{t + 1} = 1$) is based on a measurement of a process of interest that is taken at the beginning of the slot and has time-stamp $t$.  At the end of slot $t$, or equivalently at the start of slot $t+1$, that update will have age $1$. In slot $t+j$, this update will have age $j$. We say one update is fresher than another if its age is smaller. 

An observer of these updates measures timeliness by a discrete-time \emph{age} process $A(t)$ that is constant over a slot and equals the age of the freshest update received prior to slot $t$. When $u(t)$ denotes the time-stamp of the freshest update observed/received prior to slot $t$, the age in slot $t$ is $A(t)=t-u(t)$. We characterize the timeliness performance of the system by the average AoI at the monitor.  

\begin{definition}[Kaul et al.~\cite{kaul2012real}]
A stationary ergodic age process has age of information (AoI) 
\begin{align}
   \E{A(t)}=\limty{T} \frac{1}{T}\sum_{t=0}^{T-1} A(t)\eqnlabel{sum-age},
\end{align} 
where $\E{\cdot}$ is the expectation operator.
\end{definition}

Changing the assumptions on the arrival process does not affect MaxL provided the arrival process has full support. For example this would include sources generated by recurrent Markov chains. However, the AoI will change with the source statistics. An interesting future direction would be to study sources which are run-length limited~\cite{immink1990runlength} as these do not have full support.

\subsection{Trade-offs}

Let $\{X_t : t \in \mathbb{N}\}$ be a binary-valued stochastic process. We say a policy $\bm{\pi} = \{\pi_t : t \in \mathbb{N}\}$ achieves the leakage-age pair $(\Lambda,\Delta)$ if the age process $A(t)$ has AoI $\E{A(t)}=\Delta$ and the induced output sequence $Y^n$ has asymptotic leakage rate 
\begin{align}
    \Lambda = \lim_{n\to \infty} \frac{1}{n}\L(X^n\to Y^n).
\end{align} 

The question we ask in this paper is: what are the $(\Lambda, \Delta)$ pairs achievable by the policies in Section \ref{sec:policies}? 

One classical approach to enhancing privacy for packet-based communications is to insert dummy traffic~\cite{XiongSM:22shaping, fu2003analytical,wang2008dependent}. Under AoI such packets could have a time stamp of $-1$ so that the monitor knows they are fake but an adversary only given timing information could not tell the difference between a fake and a real update/packet. Allowing arbitrary transmissions of dummy packets makes the problem trivial: the output $Y^n$ can be set deterministically to all $1$s by sending a packet in every slot. That is,  in each slot $t$, the newly arrived packet is sent when $X_t = 1$ and a dummy packet is sent if $X_t = 0$.

However, in many practical monitoring systems, such as those using battery-powered sensors, every transmission incurs a significant energy or bandwidth cost. In such resource constrained settings, a policy that transmits in every time slot using dummy packets may be expensive. Hence, we focus here on policies that employ only real updates/packets that can be discarded or delayed by the server but no fake packets can be generated. This frames our problem around the efficient use of limited transmissions to balance the competing goals of freshness and privacy, rather than assuming an unlimited ability to obfuscate traffic at no cost. 

We now turn to analyzing leakage-age trade-offs for a variety of policies. Several results require the condition that the arrival process has full support so that every $x^n \in \{0,1\}^n$ has positive probability. For all of our analyzes, unless otherwise stated, we will assume that the $X_i$  are independent Bernoulli random variable with parameter $\lambda$. That is, packets arrive as a Bernoulli process with rate $\lambda$. 

\section{LCFS Coupled Policies}
\label{sec:coupled}

In this section, we analyze  LCFS Coupled Policies, where the server's transmission schedule is directly coupled to the arrival of updates. This system is modelled as a discrete-time Ber/G/1/1 queue with a preemptive LCFS discipline. A new arrival immediately replaces any update currently in service, ensuring only the freshest update is processed. 

We assume the service times are i.i.d.~random variables denoted by $S$ following a general PMF $g(s)$:
\begin{align}
    P(S=s)=\begin{cases}
    g(s) & s=1,2,3,\cdots \\
    0 & \text{otherwise}.
    \end{cases}
\end{align}
An output sequence $y^n$ is called \textit{achievable} if it can be generated by some input sequence $x^n$. The set of all achievable output sequences, denoted by $\support{Y}$, is constrained by the support of the service time distribution $g(s)$. Specifically, if the minimum possible service time is $s_{1}$, any two departures in an achievable sequence must be separated by at least $s_1$ time slots. That is, if $y_t=1$ and $y_{t'}=1$ for $t < t'$, then $t' - t \ge s_{1}$. Furthermore, assuming the server is idle prior to slot 1, the earliest an update can arrive is slot 0 and the first departure cannot occur before slot $s_{1}$.

\subsection{Shortest Most Probable (SMP) Leakage }

For the subsequent leakage analysis, we focus on the broad and practical class of service PMFs that we call \textit{Shortest Most Probable (SMP) distributions}. A service PMF $g(\cdot)$ is an SMP distribution if its minimum service time, 
\begin{align}
    s_{\min}(g) \triangleq \min \{s: g(s) > 0\},
\end{align} is also its most probable service time satisfying the condition $ g(s_{\min}(g)) \ge g(s)$ for all $s \ge s_{\min}(g)$.
With $\mathcal{P}$ denoting the set of all valid PMFs on the positive integers, the set of all valid SMP distributions, 
is defined as
\begin{align}
\mathcal{G}_\mathrm{SMP}\triangleq\set{g\in \mathcal{P}: g(s_{\min}(g))\ge g(s)  \text{ for } s \ge s_{\min}(g)}.
\end{align}
This class is general enough to include common distributions such as a geometric distribution or a uniform distribution over $\{s_{\min}(g),s_{\min}(g)+1,\ldots, s_{\min}(g)+k\}$. We note that the SMP property is preserved when an  SMP distribution is shifted in time. 

The SMP structure is key to simplifying the problem of finding the maximum likelihood input. To find the leakage for policies using SMP service times, we  first identify for each achievable output $y^n$, the input sequence $x^n$ that maximizes the  conditional probability $\pmf{Y^n|X^n}{y^n|x^n}$. The following lemma establishes this.

\begin{lemma}[SMP maximum likelihood input]
\label{maxLlemma}
Consider a server (LCFS w/ preemption or FCFS) with service times drawn from an SMP distribution $g(s)$ with minimum service time $s_1 = s_{\min}(g)$.
For any achievable output sequence $y^n \in \support{Y}$,  
\begin{align}
    \max_{x^n \in \support{X}} \pmf{Y^n|X^n}{y^n|x^n} = [g(s_1)]^{\mathrm{wt}(y^n)}
\end{align}
and this maximum is achieved by a time-shifted input sequence $x^n$ such that for all $i\in [n]$,
\begin{align}
    x_i = \begin{cases}
        1 & \text{if } y_{i + s_1 - 1} = 1,\\
        0 & \text{otherwise.}
    \end{cases} \eqnlabel{time-shifted}
\end{align}
\end{lemma}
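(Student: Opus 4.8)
The plan is to establish two things for each achievable output $y^n$: (i) a concrete input $x^n$ (the time-shifted one from \eqnref{time-shifted}) that is \emph{consistent} with $y^n$ and yields conditional probability exactly $[g(s_1)]^{\mathrm{wt}(y^n)}$, and (ii) an upper bound showing no input can do better, i.e. $\pmf{Y^n|X^n}{y^n|x^n} \le [g(s_1)]^{\mathrm{wt}(y^n)}$ for every $x^n$ in the support. Combining (i) and (ii) gives the claimed equality and identifies the maximizer.

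For step (i), I would argue that under the time-shifted input, each update that arrives is immediately useful: the update created by the arrival encoded at position $i$ (so that $x_i=1$ exactly when $y_{i+s_1-1}=1$) begins service and, because the next departure in $y^n$ is at least $s_1$ slots later (achievability constraint, from the $s_1 = s_{\min}(g)$ separation property noted before the lemma), no later arrival preempts it before its service of length $s_1$ completes. One must check the boundary/initialization carefully — the server is idle before slot $1$, the earliest arrival is slot $0$, and the first departure is at slot $s_1$ — so that the shift by $s_1-1$ is well-defined and lands departures exactly where $y^n$ has its ones. Each of the $\mathrm{wt}(y^n)$ departures then contributes a factor $g(s_1)$ (the probability that that particular service instance takes exactly $s_1$ slots), and slots with no arrival contribute factor $1$, giving the product $[g(s_1)]^{\mathrm{wt}(y^n)}$. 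For a preemptive LCFS queue this is clean; for FCFS one must additionally note that with the time-shifted input at most one update is ever in the system, so FCFS and LCFS coincide on this input.

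For step (ii), the key observation is that every departure in $y^n$ is produced by \emph{some} update whose service took \emph{some} value $s \ge s_1$, and by the SMP property $g(s) \le g(s_1)$; moreover distinct departures correspond to distinct service completions whose durations are independent, so the conditional probability of $y^n$ given any consistent $x^n$ factors (or is bounded by a product, accounting for preemption events and "silent" slots which only contribute factors $\le 1$) into at most $\mathrm{wt}(y^n)$ terms each $\le g(s_1)$, hence $\le [g(s_1)]^{\mathrm{wt}(y^n)}$. Inputs $x^n$ that cannot generate $y^n$ at all contribute $0$ and are irrelevant.

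The main obstacle I anticipate is making step (ii) rigorous in the preemptive case: one must carefully account for the fact that a single departure can be preceded by several preempted (aborted) service attempts, and that the probability of such a trajectory involves tail probabilities $P(S > k)$ for the aborted pieces as well as the $g(s)$ of the completed piece. The argument needs that inserting these extra factors only \emph{decreases} the probability (they are all $\le 1$) and that the completing service still contributes a factor $\le g(s_1)$ by SMP — so the bound survives — but writing this cleanly, including the bookkeeping of which arrivals map to which departures and handling the finite-horizon edge effects at slots near $n$, is the delicate part. The rest is essentially a direct verification once the trajectory-probability decomposition is set up.
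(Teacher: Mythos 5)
Your proposal is correct in outline but takes a genuinely different route from the paper. The paper argues by induction on the horizon length $n$: it verifies the claim for $n = s_1$ (only two achievable outputs), and in the inductive step peels off the suffix after the last departure time $m$ so that at each stage only one new service completion (or an idle tail) must be analyzed; this lets the recursion absorb all the multi-departure bookkeeping. You instead argue directly: (i) the time-shifted input is consistent and, because consecutive arrivals under it are at least $s_1$ slots apart, it forces each of the $\mathrm{wt}(y^n)$ departures to be an unpreempted service of exactly $s_1$ slots, giving conditional probability exactly $[g(s_1)]^{\mathrm{wt}(y^n)}$; and (ii) for any $x^n$, the event $\{Y^n = y^n\}$ decomposes over the i.i.d.\ service variables of the arrivals into pinned values $g(s)$ with $s \ge s_1$ for completed services (each $\le g(s_1)$ by SMP) and tail probabilities $\le 1$ for preempted or horizon-truncated ones, with the FCFS variant handled by noting the service order and all completed-service durations are deterministic in $(x^n, y^n)$. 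That factorization is precisely the content the induction is structured to avoid spelling out, and you correctly flag it as the delicate part. One small wording issue in (ii): you write ``at most $\mathrm{wt}(y^n)$ terms each $\le g(s_1)$,'' but since $g(s_1) \le 1$ you need \emph{exactly} $\mathrm{wt}(y^n)$ such terms for the claimed bound — which does hold, since each departure in $y^n$ is a distinct service completion. With that repaired, your two-sided argument is valid and somewhat more transparent about why the SMP property is exactly what is needed, at the cost of more explicit trajectory bookkeeping than the paper's recursion.
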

Lemma~\ref{maxLlemma} shows that for a given output sequence $y^n$ for the LCFS or FCFS servers, the MaxL is achieved by a time-shifted input  given by \eqnref{time-shifted}. The maximum probability of a single service completion time is $g(s_1)$, a property of the service time distribution itself. Since the queueing discipline (LCFS w/ preemption vs.~FCFS) does not alter this single-slot departure probability for the update in service, the result holds for both policies. The proof follows by an induction argument; details are in Appendix \ref{Lemma: FCFS Leakage}. 

\begin{theorem}[FCFS, LCFS Leakage for SMP Distributions]
    \label{thm:max_leakageFCFSLCFS}
    The MaxL for the LCFS with preemption and FCFS policies with SMP service times $g(s)$  with minimum service time $s_1 = s_{\min}(g)$ is given by 
    \begin{align}
        \L(X^n \to Y^n) = \log  \left( \sum_{k=0}^{\lfloor n/s_1 \rfloor} \binom{n - k(s_1 - 1)}{k} [g(s_1)]^k \right). \eqnlabel{smpleakage}
    \end{align}
\end{theorem}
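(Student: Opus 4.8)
The plan is to use Lemma~\ref{maxLlemma} to reduce the computation of $\L(X^n \to Y^n)$ from \eqnref{eqn:maxL_def} to a purely combinatorial count. By the lemma, for every achievable $y^n \in \support{Y}$ we have $\max_{x^n} \pmf{Y^n|X^n}{y^n|x^n} = [g(s_1)]^{\mathrm{wt}(y^n)}$, so
\begin{align}
    \L(X^n \to Y^n) = \log \sum_{y^n \in \support{Y}} [g(s_1)]^{\mathrm{wt}(y^n)} = \log \sum_{k \ge 0} N_k [g(s_1)]^k,
\end{align}
where $N_k$ is the number of achievable output sequences $y^n$ with exactly $k$ ones. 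Thus the entire problem collapses to counting $N_k$, and the claimed identity is equivalent to showing $N_k = \binom{n - k(s_1-1)}{k}$ for $0 \le k \le \lfloor n/s_1 \rfloor$ and $N_k = 0$ otherwise.

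First I would pin down exactly which binary strings of length $n$ are achievable. From the discussion preceding the SMP subsection, the constraints are: (i) any two ones must be separated by at least $s_1$ slots (since the minimum service time is $s_1$, consecutive departures are at least $s_1$ apart), and (ii) the first one cannot occur before slot $s_1$. I would argue these are also sufficient: given any string satisfying (i) and (ii), the time-shifted input of \eqnref{time-shifted} generates it (this is essentially the content of Lemma~\ref{maxLlemma}'s construction, run forward). Next I would count the strings with exactly $k$ ones satisfying (i)–(ii). A ones-pattern at positions $1 \le p_1 < p_2 < \cdots < p_k \le n$ is admissible iff $p_1 \ge s_1$ and $p_{j+1} - p_j \ge s_1$ for each $j$. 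The standard "gap substitution" $q_1 = p_1 - (s_1 - 1)$, $q_j = p_j - (j-1)(s_1-1)$ (equivalently $q_j = p_j - j(s_1-1) + (s_1-1)$, chosen so that $q_1 \ge 1$ and $q_{j+1} > q_j$) is a bijection onto strictly increasing sequences $1 \le q_1 < \cdots < q_k \le n - (k-1)(s_1-1)$. Hmm — I need to be careful with the lower endpoint so that constraint (ii), $p_1 \ge s_1$, becomes $q_1 \ge 1$: that forces the shift of the top index to be $k(s_1-1)$ rather than $(k-1)(s_1-1)$, giving $q_j = p_j - j(s_1-1) + (s_1-1)$ with range $q_j \in [1, \, n - k(s_1-1) + (s_1 - 1)]$... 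I would redo this substitution cleanly so the resulting count is exactly $\binom{n-k(s_1-1)}{k}$, and verify the range collapses to zero precisely when $k > \lfloor n/s_1 \rfloor$ (i.e., when $n - k(s_1-1) < k$).

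The main obstacle is getting the combinatorial bookkeeping in the gap-substitution exactly right, in particular handling the boundary constraint (ii) that the first departure is at slot $\ge s_1$ — this is what produces the $-k(s_1-1)$ rather than the more familiar $-(k-1)(s_1-1)$ shift one gets for "spread-out subsets with no boundary condition." I would sanity-check the formula at $s_1 = 1$ (no spacing constraint, no boundary delay), where it should give $\sum_k \binom{n}{k} [g(1)]^k = (1 + g(1))^n$ — recovering the expected leakage of a zero-delay-type server — and at $k=0$, where the single all-zeros string contributes $\binom{n}{0} = 1$. A secondary point worth a sentence is that achievability and the $N_k$ count are identical for LCFS-with-preemption and FCFS: both constraints (i) and (ii) depend only on $s_1 = s_{\min}(g)$ and not on the queueing discipline, exactly as Lemma~\ref{maxLlemma} already noted for the likelihood itself, so the same formula \eqnref{smpleakage} holds for both policies.
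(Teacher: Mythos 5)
Your proposal is correct and follows essentially the same route as the paper: apply Lemma~\ref{maxLlemma} to turn the sum in \eqnref{eqn:maxL_def} into $\sum_{y^n \in \support{Y}} [g(s_1)]^{\mathrm{wt}(y^n)}$, group achievable outputs by Hamming weight, and then observe that the number of length-$n$ binary strings with $k$ ones obeying the spacing and boundary constraints is $\binom{n-k(s_1-1)}{k}$. The paper simply asserts this count as the definition of $N(n,k,s_1)$ and does not carry out the stars-and-bars calculation; you correctly flag that the boundary condition $p_1 \ge s_1$ (the first departure cannot happen before slot $s_1$) is what produces the $-k(s_1-1)$ shift in place of the more familiar $-(k-1)(s_1-1)$ for spacing alone, which is a genuine subtlety the paper's one-line description ``any two ones are separated by at least $s_1-1$ zeros'' glosses over. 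Your worry about the substitution is unnecessary: both $q_j = p_j - j(s_1-1)$ (mapping $p_1 \ge s_1$ to $q_1 \ge 1$, $p_k \le n$ to $q_k \le n-k(s_1-1)$) and $q_j = p_j - (j-1)(s_1-1)$ (keeping $q_1 \ge s_1$, with upper end $n-(k-1)(s_1-1)$) land on $\binom{n-k(s_1-1)}{k}$ once you count subsets in the resulting interval, so either bookkeeping is fine. Your sanity checks ($s_1 = 1$ giving $(1+g(1))^n$; $k=0$ giving the single all-zeros string) and your note that achievability depends only on $s_1$ and hence is identical for LCFS and FCFS are all consistent with the paper's reasoning.
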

\begin{proof}
The result follows by applying Lemma \ref{maxLlemma} to the definition of maximal leakage. For either policy, we have 
\begin{align}
    \mathcal{L}(X^n\to Y^n) & = \log \left(\sum_{y^n \in \support{Y}} [g(s_1)]^{\mathrm{wt}(y^n)}\right) \\
    &= \log\left(\sum_{k=0}^{\lfloor n/s_1 \rfloor} N(n,k,s_1)  [g(s_1)]^k \right)\eqnlabel{maxlleakage}, 
\end{align}
where  $N(n,k,s_1) \triangleq \binom{n - k(s_1 - 1)}{k}$ is the number of valid binary sequences of length $n$ with exactly $k$ ones, such that any two ones are separated by at least $s_1-1$ zeros. 
\end{proof}

\begin{proposition}[Bounds on Asymptotic Leakage Rate for the SMP Class]
\label{prop:leakage_rate}
The asymptotic leakage rate $\Lambda$ for policies using an SMP distribution with minimum service time $s_1 = s_{\min}(g)$ is bounded by: 
    \begin{align}
        \frac{1}{s_1} \log (1 +  g(s_1)) \le \Lambda \le \log (1 + g(s_1)). 
    \end{align}
\end{proposition}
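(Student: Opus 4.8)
The plan is to argue directly from the closed form in Theorem~\ref{thm:max_leakageFCFSLCFS}. Abbreviate $p \triangleq g(s_1) \in (0,1]$ and $m \triangleq \lfloor n/s_1\rfloor$, and set
\[
f(n) \triangleq \sum_{k=0}^{m} \binom{n-k(s_1-1)}{k}\, p^{k}, \qquad \text{so that} \qquad \mathcal{L}(X^n\to Y^n) = \log f(n).
\]
Both inequalities reduce to sandwiching $f(n)$ between two powers of $1+p$ with exponents linear in $n$, after which dividing by $n$ and letting $n\to\infty$ gives the claim. The only structural fact I need is that $\binom{a}{k}$ is nondecreasing in the integer $a$ for fixed $k\ge 0$.

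For the upper bound I would use $n - k(s_1-1) \le n$ (since $s_1\ge 1$), so $\binom{n-k(s_1-1)}{k} \le \binom{n}{k}$ termwise; extending the summation range from $k\le m$ to all $k\in\{0,\dots,n\}$ and applying the binomial theorem yields $f(n) \le \sum_{k=0}^{n}\binom{n}{k}p^{k} = (1+p)^{n}$. Hence $\tfrac1n\mathcal{L}(X^n\to Y^n) \le \log(1+p)$ for every $n$, and so $\Lambda \le \log(1+p)$.

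For the lower bound I would bound the same binomial coefficients from below. For $0\le k\le m$ one has $ks_1\le n$, hence $n-k(s_1-1) = (n-ks_1)+k$ is decreasing in $k$ with minimum value $n - m(s_1-1) = (n-ms_1)+m \ge m$ at $k=m$; in particular the top argument is $\ge m \ge k$ throughout the range, so $\binom{n-k(s_1-1)}{k} \ge \binom{m}{k}$ by the same monotonicity. (Combinatorially, these count the valid output sequences obtained by placing at most one departure in the \emph{last} slot of each of the $m$ length-$s_1$ blocks tiling $\{1,\dots,ms_1\}$, which satisfies both the "first departure at slot $\ge s_1$" and the "separation $\ge s_1$" constraints.) Therefore $f(n) \ge \sum_{k=0}^{m}\binom{m}{k}p^{k} = (1+p)^{m} = (1+p)^{\lfloor n/s_1\rfloor}$, giving $\tfrac1n\mathcal{L}(X^n\to Y^n) \ge \tfrac{\lfloor n/s_1\rfloor}{n}\log(1+p) \to \tfrac1{s_1}\log(1+p)$.

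Combining the two chains, $\tfrac1{s_1}\log(1+p) \le \liminf_n \tfrac1n\mathcal{L}(X^n\to Y^n) \le \limsup_n \tfrac1n\mathcal{L}(X^n\to Y^n) \le \log(1+p)$, which is exactly the asserted bound (and pins $\Lambda$ to this interval whenever the limit defining it exists). I do not anticipate a genuine obstacle here; the points that need care are purely bookkeeping: checking that the summation range $k\le\lfloor n/s_1\rfloor$ keeps every binomial coefficient "honest" (top argument $\ge k$) so the monotonicity step is legitimate in both directions, and getting right the off-by-one translation between "consecutive departures separated by at least $s_1$ slots" and "$s_1-1$ interleaving zeros" in the combinatorial description of the lower-bound construction.
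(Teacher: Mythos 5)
Your proposal is correct and takes essentially the same route as the paper: both sandwich $S_n = \sum_k N(n,k,s_1)[g(s_1)]^k$ between $(1+g(s_1))^{\lfloor n/s_1\rfloor}$ and $(1+g(s_1))^n$ and then divide by $n$. The upper bound argument is identical, and your termwise inequality $\binom{n-k(s_1-1)}{k}\ge\binom{\lfloor n/s_1\rfloor}{k}$ is just a cleaner algebraic restatement of the paper's (somewhat terse) combinatorial lower bound, which counts the achievable sequences having departures restricted to $\lfloor n/s_1\rfloor$ fixed slots spaced $s_1$ apart.
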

\begin{proof}
Let $S_n = \sum_{k=0}^{\lfloor n/s_1 \rfloor} N(n,k,s_1) {[g(s_1)]}^k$. First, we find an upper bound for the sum $S_n$. For any $k$ such that $0\le k\le \lfloor n/s_1\rfloor$ and any $s_1 \ge 1$, $n - k(s_1 - 1) \le n$. Hence
$N(n,k,s_1)\le \binom{n}{k}$, implying 
    \begin{align}
        S_n &\le \sum_{k=0}^{\lfloor n/s_1 \rfloor} \binom{n}{k}[g(s_1)]^k
        \le \sum_{k = 0}^n \binom{n}{k} {[g(s_1)]}^k
        = {(1 + g(s_1))}^ n \eqnlabel{Snbound}.
    \end{align}
 Since $\mathcal{L}(X^n\to Y^n) = \log (S_n)$, it follows from \eqnref{Snbound} that the leakage rate $\Lambda$ satisfies
    \begin{align}
        \Lambda &= \lim_{n\to \infty} \frac{1}{n}\L(X^n\to Y^n) \le \log (1 + g(s_1)) \eqnlabel{upperbound}.
    \end{align}
    Next, we find a lower bound for the sum $S_n$ by counting a fraction of all possible valid sequences in $\support{Y}$.
    We can count only sequences which have a departure at times $s_1, s_1 + 1, \ldots$. There are $\lfloor n / s_1\rfloor$ maximum possible departures in this case. Hence we can write
    \begin{align}
        S_n &\ge \sum _{k = 0}^{\lfloor n/s_1\rfloor}  \binom{\lfloor n/s_1\rfloor}{k} {[g(s_1)]}^k
        = {(1 + g(s_1))}^ {\lfloor n/s_1\rfloor} \eqnlabel{Snlower}.
    \end{align}
     Since $\L(X^n\to Y^n) = \log (S_n)$, it follows from \eqnref{Snlower} that
    \begin{align}
        \L(X^n\to Y^n) &\ge   \lfloor n/s_1 \rfloor \log (1 + g(s_1)).
    \end{align}
    Hence the leakage rate $\Lambda$ satisfies
    \begin{align}
        \Lambda &= \lim_{n\to \infty} \frac{1}{n}\L(X^n\to Y^n) \ge \frac{1}{s_1}\log (1 + g(s_1)) \eqnlabel{lowerbound}.
    \end{align}
    Thus the proposition follows from \eqnref{upperbound} and \eqnref{lowerbound}.
\end{proof}

For $s_1= 1$, the summation in \eqnref{smpleakage} can be solved in closed form  to get the exact leakage:
\begin{align}
    \L(X^n\to Y^n) = \log \left(\sum_{k = 0}^n \binom{n}{k} {[g(1)]}^ k 1^ {n-k}\right) = n \log (1 + g(1)).
\end{align}
In this scenario, the exact leakage rate is $\Lambda = 1 + g(1)$, corresponding to the matching lower and upper bounds for the asymptotic leakage rate.

As a second illustrative example of the leakage calculation, consider the general deterministic policy $g(M)=1$ for some integer $M\ge 2$. This is a valid SMP distribution with $s_1=M$. In this case, any achievable output sequence $y^n$ is constrained such that every $1$ must be followed by at least $M-1$ zeroes. The general formula from Theorem~\ref{thm:max_leakageFCFSLCFS} simplifies to $\L(X^n\to Y^n) = \log (N_n)$, where $N_n$ is the number of such achievable sequences. The count $N_n$ follows the recurrence $N_n = N_{n-1} + N_{n-M}$. For the specific case of $M=2$, this simplifies to the Fibonacci recurrence $N_n = N_{n-1} + N_{n-2}$,  with base cases $N_1=1$ and $N_2=2$. This yields an exact asymptotic leakage rate of $\Lambda = \log (\phi)\approx 0.694$ bits/slot, where $\phi = (1 + \sqrt{5})/2$ is the golden ratio. In this example, both bounds of Proposition~\ref{prop:leakage_rate} are loose. This example also shows that the server can reduce the leakage without a randomized service policy; deterministic service policies that just  reduce the rate of delivered updates also reduce leakage. 

As the Proposition~\ref{prop:leakage_rate} bounds are not tight for general $s_1> 1$,  policies with $s_1>1$ would appear to offer a potential trade-off, decreasing the MaxL rate at the cost of a higher AoI. However, to evaluate this trade-off, we would need to compare the \textit{best policy} with $s_1=1$ against the best policy with $s_1>1$. Our analysis in  Section~\ref{sec:optimallcfs} will show this trade-off is not beneficial. We will prove in Theorem~\ref{thm:optimal_pmfLCFS}, that the LCFS SMP policy that minimizes AoI for any given leakage constraint must have $s_1=1$. This finding will imply that the set of optimal policies for the  $s_1=1$ case strictly dominates any policy with $s_1>1$.  This analytical result is also validated numerically later in Section~\ref{sec:coupledresults}. Therefore, we will focus our analysis on the $s_1=1$ case, for which our leakage rate calculation is exact. 

\subsection{LCFS Age Analysis}
We next derive the average age for the LCFS server with generally distributed service times. From Tripathi et~al.~\cite{tripathi2019age} which derived the average age for the discrete-time LCFS  G/G/1 with preemptive service, the average age is given by\footnote{We note that the formula derived in Tripathi et~al.~\cite[Theorem 3]{tripathi2019age} terminates with $-1/2$ replacing $1/2$ in \eqnref{LCFSTripathi}. In our system setup, an update generated in slot $i - 1$ arrives at the server in the beginning of slot $i$. If it is immediately served and delivered by the end of slot $i$, its age in the beginning of slot $i +1$ is $2$. In the model of Tripathi et~al.~\cite{tripathi2019age}, an update that arrives in slot $i$ can be served and delivered in that same slot, resulting in an age of $1$ at the start of slot $i +1$. This difference in the timing model means the age in our system is consistently one unit greater than the age calculated using the formula in Tripathi et~al.~\cite{tripathi2019age}.}
	\begin{align}
\Delta_{\mathrm{G/G/1}} &= \frac{1}{2}\frac{\E{X^2}}{\E{X}} 
		+ \frac{\E{\min(X,S)}}{P(S\le X)} 
		+ \frac{1}{2} \eqnlabel{LCFSTripathi}, 
	\end{align}
where $X$ and $S$ denotes the independent inter-arrival and service time random variables respectively.
\begin{corollary}[LCFS Ber/G/1 Age]
\label{cor:lcfsage}
    For the discrete-time LCFS Ber/G/1 with preemptive service, the average age is 
    \begin{align}
\Delta_\mathrm{LCFS-Ber/G/1 } &= 1 + \frac{1}{\lambda \E{\lambar ^ {S-1}}},\eqnlabel{LCFSage}
    \end{align}
where $\lambar = 1 - \lambda$.
\end{corollary}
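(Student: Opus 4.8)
The plan is to specialize the general preemptive-LCFS age formula \eqnref{LCFSTripathi} to Bernoulli($\lambda$) arrivals. For such arrivals the inter-arrival random variable $X$ is geometric on $\{1,2,\ldots\}$ with $P(X=k)=\lambda\lambar^{k-1}$ and tail $P(X\ge s)=\lambar^{s-1}$. The formula \eqnref{LCFSTripathi} involves three quantities, $\tfrac12\,\E{X^2}/\E{X}$, $\E{\min(X,S)}$, and $P(S\le X)$, and the idea is to evaluate each of these in closed form in terms of $\lambda$ and the single quantity $\E{\lambar^{S-1}}$, and then check that the algebra collapses to $1+1/(\lambda\,\E{\lambar^{S-1}})$.

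For the first term, the geometric moments give $\E{X}=1/\lambda$ and $\E{X^2}=(2-\lambda)/\lambda^2$, so $\tfrac12\,\E{X^2}/\E{X}=(2-\lambda)/(2\lambda)$. For the denominator of the second term, conditioning on $S$ and using independence of $S$ and $X$, $P(S\le X)=\sum_{s\ge 1}g(s)\,P(X\ge s)=\sum_{s\ge 1}g(s)\,\lambar^{s-1}=\E{\lambar^{S-1}}$. For the numerator, I would write $\E{\min(X,S)}=\sum_{j\ge1}P(\min(X,S)\ge j)=\sum_{j\ge1}P(X\ge j)\,P(S\ge j)=\sum_{j\ge1}\lambar^{j-1}P(S\ge j)$, again using independence; interchanging the two sums (all terms nonnegative) yields $\E{\min(X,S)}=\sum_{s\ge1}g(s)\sum_{j=1}^{s}\lambar^{j-1}=\tfrac1\lambda\bigl(1-\E{\lambar^{S}}\bigr)$.

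Substituting into \eqnref{LCFSTripathi} and using $\E{\lambar^{S}}=\lambar\,\E{\lambar^{S-1}}$, the middle term becomes $(1-\lambar\,\E{\lambar^{S-1}})/(\lambda\,\E{\lambar^{S-1}})=1/(\lambda\,\E{\lambar^{S-1}})-(1-\lambda)/\lambda$. Collecting the pieces that do not involve $\E{\lambar^{S-1}}$ gives $(2-\lambda)/(2\lambda)-(1-\lambda)/\lambda+\tfrac12=1$, so that $\Delta_\mathrm{LCFS\text{-}Ber/G/1}=1+1/(\lambda\,\E{\lambar^{S-1}})$, as claimed. There is no real obstacle here: the only point needing a word of justification is the identity $\E{\min(X,S)}=\sum_{j\ge1}P(X\ge j)\,P(S\ge j)$, which relies on the independence of the arrival and service processes assumed in the model; the remainder is bookkeeping.
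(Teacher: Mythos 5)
Your proposal is correct and follows essentially the same route as the paper's proof in Appendix~\ref{app: corollary1}: specialize \eqnref{LCFSTripathi} by computing $\E{X}$, $\E{X^2}$, $P(S\le X)=\E{\lambar^{S-1}}$, and $\E{\min(X,S)}=(1-\E{\lambar^S})/\lambda$, then simplify. The only cosmetic difference is how you group the terms in the final algebraic simplification; the substance is identical.
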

The proof is in appendix \ref{app: corollary1}.
For the specific case of a geometric service time distribution with parameter $\mu$, where $ g(s)= {(1-\mu)}^ {s-1}\mu $, we obtain
\begin{align}
    \Delta_\mathrm{LCFS-Ber/Geo/1 } &= \frac{1}{\lambda} + \frac{1}{\mu}. 
\end{align}
\subsection{Optimal LCFS Coupled Policy}
\label{sec:optimallcfs}

The policies we have analyzed so far, such as those with geometric  service times, are not necessarily optimal. We thus solve the following problem: 
For the class of LCFS coupled polices with SMP service time distributions, what is the best possible trade-off between age and leakage? We want to find the service time PMF $g(s)$ that minimizes the average AoI for a specified MaxL constraint. From Theorem \ref{thm:max_leakageFCFSLCFS}, we see that constraining the leakage is equivalent to fixing the value of $g(s_{\min}(g))$.  Our goal is to find the service time PMF $g(s)$ that minimizes the AoI given by \eqnref{LCFSage} for a specific leakage constraint. Our objective is to find the PMF $g^*_\MLC\in \mathcal{G}_\mathrm{SMP}$ that solves the following optimization problem for a given leakage constraint $\MLC\in (0,1]$:
\begin{align}
    g^*_\MLC = \argmin _{g\in \mathcal{G}_\mathrm{SMP}} \Delta(g) \quad\text{subject to} \quad g(s_{\min}(g)) = \MLC.
\end{align}

The main challenge is that the age expression is a non-linear function of the entire distribution $g(s)$. Intuitively, service should be completed as quickly as possible to minimize AoI. This suggests that the optimal policy should prioritize shorter service durations. The following theorem formalizes this intuition. It establishes that the optimal policy must use the smallest possible minimum service time (i.e. $s_{\min}(g) = 1$) and greedily allocate the maximum possible probability to the shortest service times allowed by the leakage constraint. We also note that the optimal policy dithers between the uniform service policy with support $\{1, \ldots, k\}$ that leaks too much and the uniform service policy with support $\{1, \ldots, k+1\}$ that satisfies the leakage constraint but consequently suffers an age penalty.

\begin{theorem}[Optimal LCFS Policy in the SMP Class]
\label{thm:optimal_pmfLCFS}
Given a leakage constraint $\MLC \in (0, 1]$, the service time PMF $g_\MLC\in \mathcal{G}_\mathrm{SMP}$ that minimizes the average LCFS AoI $\Delta(g)$ has a minimum service time $s_1 = s_{\min}(g) = 1$ and is given by 
\begin{align}
    g_\MLC^*(s) =
\begin{cases}
\MLC & \text{for } s = 1, 2, \ldots, k\\
1 - k\MLC & \text{for } s = k + 1 \\
0 & \text{for } s > k + 1,
\end{cases}
\end{align}
where $k = \floor{1/\MLC}$ is the maximum number of service times that can be assigned probability $\MLC$.
\end{theorem}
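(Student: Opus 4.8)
The plan is to exploit the fact, already established in Theorem~\ref{thm:max_leakageFCFSLCFS}, that the leakage depends on the SMP distribution $g$ only through the pair $(s_1, g(s_1))$, together with the explicit age formula \eqnref{LCFSage}, $\Delta(g) = 1 + 1/(\lambda\, \E{\lambar^{S-1}})$. Since $\lambar = 1-\lambda \in (0,1)$, minimizing $\Delta(g)$ is equivalent to \emph{maximizing} $\E{\lambar^{S-1}} = \sum_{s \ge 1} g(s)\,\lambar^{s-1}$. So the whole optimization reduces to: among all $g \in \mathcal{G}_\mathrm{SMP}$ with $g(s_{\min}(g)) = \MLC$, maximize the linear functional $\sum_s g(s)\lambar^{s-1}$. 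This is now a clean linear-programming-flavored problem over a structured feasible set, and the SMP constraint plus the fixed-mode constraint are exactly what make it tractable.

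First I would reduce to $s_1 = 1$. Suppose $g$ has $s_{\min}(g) = s_1 > 1$; define $\tilde g$ to be the time-shift of $g$ by $s_1 - 1$ slots, i.e. $\tilde g(s) = g(s + s_1 - 1)$. The excerpt notes the SMP property is preserved under time-shifting, and $\tilde g(s_{\min}(\tilde g)) = g(s_1) = \MLC$, so $\tilde g$ is feasible. But $\E{\lambar^{\tilde S - 1}} = \sum_s g(s+s_1-1)\lambar^{s-1} = \lambar^{-(s_1-1)}\sum_s g(s)\lambar^{s-1} = \lambar^{-(s_1-1)}\E{\lambar^{S-1}} > \E{\lambar^{S-1}}$ since $\lambar^{-(s_1-1)} > 1$. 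Hence any policy with $s_1 > 1$ is strictly improved (strictly lower age) by a feasible policy with $s_1 = 1$, so the optimum has $s_1 = 1$ and mode probability $g(1) = \MLC$.

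Next, fix $s_1 = 1$, $g(1) = \MLC$, and maximize $\sum_{s\ge 1} g(s)\lambar^{s-1}$ over $g \in \mathcal{G}_\mathrm{SMP}$. The remaining mass is $\sum_{s \ge 2} g(s) = 1 - \MLC$, and the SMP constraint requires $g(s) \le \MLC$ for every $s \ge 2$. Since the weights $\lambar^{s-1}$ are strictly decreasing in $s$, a greedy/exchange argument is the natural tool: if in an optimal $g$ there exist $2 \le s < s'$ with $g(s) < \MLC$ and $g(s') > 0$, then moving mass $\delta = \min(\MLC - g(s), g(s')) > 0$ from $s'$ to $s$ keeps feasibility (the SMP cap at $s$ is respected, the total mass is unchanged) and strictly increases the objective by $\delta(\lambar^{s-1} - \lambar^{s'-1}) > 0$ — a contradiction. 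Therefore the optimal $g$ must "fill" the slots $s = 2, 3, \ldots$ at the maximum allowed level $\MLC$ starting from the smallest index, until the budget $1 - \MLC$ is exhausted. Writing $k = \floor{1/\MLC}$, the slots $s = 1, \ldots, k$ each receive probability $\MLC$ (total $k\MLC \le 1$), the leftover $1 - k\MLC \in [0, \MLC)$ goes to slot $k+1$, and slots $s > k+1$ get $0$ — which is exactly $g_\MLC^*$. One should check the edge case $k\MLC = 1$ (i.e. $1/\MLC$ an integer), where the mass on slot $k+1$ is zero and the support is exactly $\{1,\ldots,k\}$; the formula still holds.

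The main obstacle is not really any single step — each is short — but rather making sure the feasible set is described correctly, in particular that the \emph{only} constraints binding the maximization are (i) total mass $= 1$, (ii) $g(s) \le \MLC$ for $s \ge s_{\min}(g)$, and (iii) $s_{\min}(g) = 1$ with $g(1) = \MLC$. One subtlety worth stating carefully: the SMP definition requires $g(s_{\min}(g)) \ge g(s)$ only for $s \ge s_{\min}(g)$, but since we have already forced $s_{\min}(g) = 1$, every slot $s \ge 1$ is covered and the cap $g(s) \le \MLC$ applies uniformly — so the greedy argument has nothing to dodge. With those observations in place the exchange argument closes the proof, and I would present it in the order: (1) age $\Leftrightarrow$ maximize $\E{\lambar^{S-1}}$; (2) shift argument forcing $s_1 = 1$; (3) greedy exchange argument pinning down the water-filling shape; (4) identify the resulting PMF as $g_\MLC^*$ and verify the $k\MLC = 1$ edge case.
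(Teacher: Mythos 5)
Your proposal is correct and follows essentially the same two-step route as the paper's Appendix proof: a time-shift argument showing $s_{\min}(g)=1$ is necessary (with the identical computation $\E{\lambar^{S-1}}$ scaling by $\lambar^{-(s_1-1)}>1$), followed by a probability-mass exchange argument that moves mass from a larger to a smaller service time to establish the greedy water-filling structure. The only cosmetic difference is that you specify $\delta=\min(\MLC-g(s),g(s'))$ explicitly and frame the step as an LP-type argument, whereas the paper takes ``a small $\epsilon>0$''; both close the proof the same way.
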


We first show that any policy  with a minimum service time greater than 1 can be strictly improved by shifting its entire PMF to start at $s=1$, as this strictly increases the term $\E{\lambar ^{S-1}}$ in Corollary~\ref{cor:lcfsage} and thus decreases age. Then for a policy with $s_1=1$, we show that if it does not have the ``greedy" structure, it can be strictly improved by shifting a small amount of probability mass, $\epsilon$ from a larger service time $s_b$ to a smaller service time $s_a$. This is reinforced by the numerical results presented in Section \ref{sec:coupledresults}.

\section{LCFS Decoupled (Dumping) Policies}
\label{sec:decoupled}
 
We  now analyze  \textit{accumulate-and-dump} server policies. Introduced by Issa et~al.~\cite{issa2019operational},  these decoupled policies operate in two distinct phases. First, the server simply \textit{accumulates} (i.e. stores) updates.  Second, triggered by an independent timer, the server \textit{dumps} stored updates by transmitting them to the monitor. In the original formulation \cite{issa2019operational}, all accumulated updates are dumped. Because including older staler updates in the dump does not reduce the age, we consider here an AoI-based variation in which only the single freshest update received in the first phase is dumped. Hence this variation on accumulate-and-dump can also be viewed as a form of LCFS service since an update arriving at the server preempts any prior received update and then only the last update received is dumped. However, unlike the previously considered preemptive LCFS system, the server's transmission schedule is decoupled from the update arrival process.

We will see this accumulate-and-dump approach is highly effective for minimizing age.  We will first derive the general age and leakage for any random dump schedule and then solve for the optimal schedule that provides the best possible age-leakage trade-off.

\subsection{Random Accumulate and Dump (RAD) Leakage}

In the Random Accumulate and Dump (RAD) policy, the server attempts to produce an output packet at a random number of time steps $D$ after the previous attempt. If one or more updates arrives in those $D$ time steps, the server outputs only the single most recently generated update. If no update arrived in the preceding $D$ time steps, the server outputs nothing. The server's dump attempts are governed by a sequence of i.i.d.~random inter-dump times, $\{D_i\}$, drawn from a general PMF $g(d)$. 
For analytical tractability, particularly for the optimization in Section~\ref{sec:optimaldecoupled}, we assume the inter-dump time distribution $g(d)$ has finite support. This means there is a maximum possible inter-dump time, $d_\mathrm{max}$. Let $D_1$, $D_2$,  $\ldots$, $D_n$ denote i.i.d.~samples of $D$. The policy therefore operates as follows: the server waits $D_1$ time slots and then attempts to dump the most recent packet received, then waits $D_2$ slots and attempts to dump the most recent packet stored and so on. For the RAD policy, the attempted dump time sequence is denoted by the $n$ length binary sequence $U^n=(U_1,U_2,\ldots, U_n)$, such that $U_k=1$ if the server attempts to transmit an update to the monitor in slot $k$. 

To analyze the MaxL  of this policy, we first need to formalize the relationship between a sequence of server outputs $y^n$, and the underlying sequence of dump attempts, $u^n$. We define two sets to capture this relationship from two perspectives.
First, for a fixed output sequence $y^n$, we define the set of all dump attempt sequences $U^n=u^n$ that could have produced it. An output can occur at a certain time only if the server attempts a dump at that time, and this condition is captured by
\begin{align}
    \AU(y^n)&\triangleq\set{u^n\in\{0,1\}^n \mid  u_i=1 \text{ if } y_i=1 \text{ for all } i\in [n]}.
\end{align}
Next, for a fixed sequence of dump attempts $u^n$, we can define the set of all possible outputs. If the server does not attempt a dump at a certain time, no output can occur at that time. This gives us
\begin{align}
    \AY(u^n)&\triangleq\set{y^n\in\{0,1\}^n \mid  y_i=0 \text{ if } u_i=0 \text{ for all } i\in[n]}.
\end{align}

These definitions are logically equivalent: $u^n \in \AU(y^n)\iff y^n\in \AY(u^n)$. This equivalence is a key step for simplifying subsequent calculations and is formally proved in Appendix~\ref{app:RADleakage}. 

To find the MaxL for this service policy we characterize the input sequence $x^n$ that maximizes $\pmf{Y^n|X^n}{y^n|x^n}$ in the following lemma. 

\begin{lemma}[RAD maximum likelihood input]
\label{lemmaRAD}
    Under a RAD policy with a full support arrival process, for each possible output sequence $Y^n=y^n$, a maximum likelihood input sequence is $x^n=y^n$.  
\end{lemma}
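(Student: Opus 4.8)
The plan is to show that for any output sequence $y^n$ and any input sequence $x^n \in \support{X}$, we have $\pmf{Y^n|X^n}{y^n|x^n} \le \pmf{Y^n|X^n}{y^n|y^n}$, with $x^n = y^n$ being a valid maximizer. The conditional probability $\pmf{Y^n|X^n}{y^n|x^n}$ should be decomposed over the randomness in the dump-attempt sequence $U^n$, which is independent of $X^n$: writing $\pmf{Y^n|X^n}{y^n|x^n} = \sum_{u^n} \P(U^n = u^n)\, \P(Y^n = y^n \mid U^n = u^n, X^n = x^n)$. Given $u^n$ and $x^n$, the output $y^n$ is actually \emph{deterministic} — in each inter-dump interval ending at a slot $i$ with $u_i = 1$, the server emits $y_i = 1$ precisely when at least one arrival occurred in that interval (equivalently, when the relevant block of $x^n$ is not all zero), and $y_i = 0$ otherwise; if $u_i = 0$ then $y_i = 0$. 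So $\P(Y^n = y^n \mid U^n = u^n, X^n = x^n)$ is an indicator, and $\pmf{Y^n|X^n}{y^n|x^n} = \sum_{u^n} \P(U^n = u^n)\,\mathbbm{1}\{f(u^n, x^n) = y^n\}$ where $f$ is the deterministic output map.

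From here the argument is: (i) identify, for the choice $x^n = y^n$, which $u^n$ satisfy $f(u^n, y^n) = y^n$, and show this set is exactly $\AU(y^n)$ (or contains enough mass); (ii) show that for \emph{any} $x^n$, the set $\{u^n : f(u^n, x^n) = y^n\}$ is contained in (or has no more probability than) $\{u^n : f(u^n, y^n) = y^n\}$. For (i): if $u^n \in \AU(y^n)$, then in every interval ending at a slot where $y_i = 1$, that slot itself has an arrival ($x_i = y_i = 1$), so the freshest arrival is present and $y_i = 1$ is produced; in intervals ending at a slot where $u_i = 1$ but $y_i = 0$, we need all arrivals in that interval to be zero — this holds when $x^n = y^n$ precisely because such an interval lies strictly between consecutive $1$'s of $y^n$ (using the equivalence $u^n \in \AU(y^n) \iff y^n \in \AY(u^n)$ from Appendix~\ref{app:RADleakage}, the $1$'s of $y^n$ are a subset of the $1$'s of $u^n$, so the block of $y^n$ strictly preceding a non-dumping-but-attempting slot is all zero). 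Hence $f(u^n, y^n) = y^n$ for all $u^n \in \AU(y^n)$, giving $\pmf{Y^n|X^n}{y^n|y^n} \ge \sum_{u^n \in \AU(y^n)} \P(U^n = u^n)$.

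For (ii), the key observation is that to produce $y_i = 1$ the server must attempt a dump in slot $i$, so any $u^n$ with $f(u^n, x^n) = y^n$ must have $u_i = 1$ whenever $y_i = 1$, i.e.\ $\{u^n : f(u^n,x^n) = y^n\} \subseteq \AU(y^n)$. Combined with (i), which shows \emph{every} element of $\AU(y^n)$ maps $y^n$ to $y^n$, we get
\begin{align}
    \pmf{Y^n|X^n}{y^n|x^n} = \sum_{u^n \in \AU(y^n)} \P(U^n = u^n)\,\mathbbm{1}\{f(u^n,x^n) = y^n\} \le \sum_{u^n \in \AU(y^n)} \P(U^n = u^n) = \pmf{Y^n|X^n}{y^n|y^n},
\end{align}
which establishes that $x^n = y^n$ is a maximum likelihood input. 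I would carry this out by first setting up the deterministic output map $f$ carefully, then proving the containment $\{u^n : f(u^n,x^n)=y^n\}\subseteq \AU(y^n)$ (easy), then proving $f(u^n,y^n) = y^n$ for all $u^n \in \AU(y^n)$ (the substantive step), and finally assembling the inequality. The main obstacle I anticipate is the bookkeeping in the substantive step: one must argue cleanly that when $u^n \in \AU(y^n)$, every "attempt-but-no-output" slot of $u^n$ sits in an interval whose $y$-entries (hence $x$-entries, since $x^n = y^n$) are all zero — this requires carefully using how inter-dump intervals partition $[n]$ relative to the positions of the $1$'s, and handling boundary effects (the first interval, and the server being idle before slot $1$) correctly. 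A secondary subtlety is confirming that full support of the arrival process is what guarantees $y^n = x^n$ is actually in $\support{X}$ so the maximizer is admissible.
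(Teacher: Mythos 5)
Your proposal is correct and follows essentially the same route as the paper: your deterministic map $f(u^n,x^n)$ is the paper's $Q(x^n,u^n)$, your set $\{u^n : f(u^n,x^n)=y^n\}$ is the paper's $B(x^n,y^n)$, and the two inclusions you prove are exactly the paper's $B(x^n,y^n)\subseteq \AU(y^n)$ and $B(y^n,y^n)=\AU(y^n)$. The one place you go beyond the paper is in spelling out why $f(u^n,y^n)=y^n$ for every $u^n\in\AU(y^n)$ (via the interval/partition bookkeeping), which the paper asserts in one line; that elaboration is welcome but does not change the structure of the argument.
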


\begin{proof}
    Fix an output sequence $Y^n=y^n$. 
    For any arrival sequence $x^n$ we can define 
    a deterministic function $Q(x^n,u^n)$ such that $y^n = Q(x^n,u^n)$. 
    Here, $Q(x^n,u^n)$ is the function that deterministically computes the output $y^n$ by applying the RAD policy to the inputs $x^n$ and $u^n$. Specifically, $y_i=1$ if and only if a dump is attempted ($u_i=1$) and at least one new update (an $x_j=1$) has arrived in the interval since the last dump attempt (or since $i=0$). 
    We can then define the set of attempted dump sequences that are consistent with $x^n$ and $y^n$ as 
    \begin{align}
        B(x^n, y^n)\triangleq\set{u^n \mid Q(x^n,u^n)=y^n}.
    \end{align}
    We have
    \begin{align}
    B(x^n,y^n)\subseteq \AU(y^n).
    \eqnlabel{Bxysubset}
    \end{align}
    If $x^n = y^n$, every $u^n\in \AU(y^n)$ is consistent with the input-output pair $x^n,y^n$ so 
    \begin{align}
B(y^n,y^n)=\AU(y^n).\eqnlabel{Byy}
\end{align}
Let $P_D(u^n)$ denote the probability of the specific sequence of dump attempts $u^n$, which is determined by the inter-dump distribution $g(d)$. 
If $x^n\neq y^n$, then there exists a $u^n \in \AU(y^n)$ such that $u^n \notin B(x^n,y^n)$. In this case we have an upper bound on the conditional probability:
  \begin{align}
        P(y^n|x^n)&=\sum_{u^n\in B(x^n,y^n)} P_D (u^n)
        \le \sum_{u^n\in \AU(y^n)}P_D (u^n).
\eqnlabel{PDupper}
    \end{align}
It follows from \eqnref{Bxysubset} and \eqnref{Byy} that $x^n=y^n$ achieves the upper bound in \eqnref{PDupper}.   Thus $x^n=y^n$ is the input sequence that maximizes $\pmf{Y^n|X^n}{y^n|x^n}$ over all possible arrival sequences. 
\end{proof}

\begin{theorem}[RAD Leakage]
    \label{thm: RADleakage}
    For the RAD policy, the MaxL is given by 
    \begin{align}
        \L(X^n\to Y^n)&= \log \left(\E{2^{\mathrm{wt}(u^n)}}\right), \eqnlabel{RADleakage}
    \end{align}
    where $\mathrm{wt}(u^n)$ is the number of attempted dumps in $n$ time slots.  
\end{theorem}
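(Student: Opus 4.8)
The plan is to feed the conclusion of Lemma~\ref{lemmaRAD} into the definition \eqnref{eqn:maxL_def} of maximal leakage and then perform a Fubini-style exchange in the order of summation. Lemma~\ref{lemmaRAD} identifies $x^n = y^n$ as a maximizing input for each achievable $y^n$, and its proof (see \eqnref{Byy}) shows that the resulting likelihood equals $P(y^n \mid y^n) = \sum_{u^n \in \AU(y^n)} P_D(u^n)$. Substituting this into \eqnref{eqn:maxL_def} gives
\begin{align}
\L(X^n \to Y^n) = \log\!\left( \sum_{y^n \in \support{Y}} \ \sum_{u^n \in \AU(y^n)} P_D(u^n) \right).
\end{align}

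Next I would swap the two summations. Since $P_D(u^n) = 0$ for any $u^n$ outside the support of the inter-dump process, the inner sum is effectively over positive-probability dump sequences; using the equivalence $u^n \in \AU(y^n) \iff y^n \in \AY(u^n)$ proved in Appendix~\ref{app:RADleakage}, the double sum is a sum over all pairs $(y^n,u^n)$ with $P_D(u^n) > 0$ and $y^n \in \AY(u^n)$, and hence equals $\sum_{u^n} P_D(u^n)\, \lvert \AY(u^n) \rvert$. The final step is an elementary count: $\AY(u^n) = \{ y^n : y_i = 0 \text{ whenever } u_i = 0 \}$, so coordinates with $u_i = 1$ are unconstrained and coordinates with $u_i = 0$ are forced to $0$, giving $\lvert \AY(u^n) \rvert = 2^{\mathrm{wt}(u^n)}$. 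Therefore $\L(X^n \to Y^n) = \log\big( \sum_{u^n} P_D(u^n) 2^{\mathrm{wt}(u^n)} \big) = \log \E{2^{\mathrm{wt}(U^n)}}$, which is \eqnref{RADleakage}.

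The only step requiring genuine care is the support bookkeeping inside the exchange of sums: one must verify both that every $y^n \in \support{Y}$ comes from some positive-probability dump sequence $u^n$ — so the outer sum loses nothing — and, conversely, that for each such $u^n$ every $y^n \in \AY(u^n)$ is actually achievable — so the count $2^{\mathrm{wt}(u^n)}$ is exact rather than an over-count. Both facts follow from the full-support assumption on the Bernoulli arrivals: given any dump schedule $u^n$ and any target output $y^n$ with $y_i \le u_i$ for all $i$, one can choose an arrival realization in which precisely the desired dump attempts succeed, because arrivals in distinct slots are independent and each slot may independently carry an update or not. I expect this to be the main (and essentially only) obstacle; the rest is just substitution and counting.
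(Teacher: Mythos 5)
Your proof is correct and takes essentially the same route as the paper's: substitute the maximizer from Lemma~\ref{lemmaRAD} into the maximal leakage definition, swap the order of summation via the equivalence $u^n \in \AU(y^n) \iff y^n \in \AY(u^n)$, and count $\lvert \AY(u^n)\rvert = 2^{\mathrm{wt}(u^n)}$. You are slightly more careful than the paper in explicitly flagging that the exchange of sums is exact only because the full-support (Bernoulli) arrival assumption guarantees both that every $y^n \in \support{Y}$ arises from some positive-probability $u^n$ and that every $y^n \in \AY(u^n)$ is achievable; the paper relies on this but does not spell it out.
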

This result is obtained by re-writing the sum in \eqnref{eqn:maxL_def}, swapping the order of summation and then grouping terms according to their Hamming weight. The proof is provided in Appendix~\ref{app:RADleakage}.

\subsection{Random Accumulate and Dump (RAD) Asymptotic Leakage Rate}
For large $n$, we now derive the asymptotic leakage rate of the RAD server using tools from renewal theory~\cite{lalleyRenewal}. Let $m(n) = \E{2^{K_n}}$ , where $K_n=\sum_{i=1}^n U_i$ is the number of attempted dumps in $n$ slots. Let the first dump attempt occur at $D_1=d$ . From the law of total expectation we have
\begin{align}
    m(n) = \E{2^{K_n}} &= \E{\E{2^{K_n}|D_1}}. \eqnlabel{totalexp}
\end{align}

If $d> n$, no dump attempt happened in $n$ slots and we have $K_n = 0$, which implies 
\begin{align}
    \E{2^{K_n}} = 1. \eqnlabel{exp1}
\end{align}
If $d \le n$, the first dump happens at $d$, and the total number of dumps is given by one plus the remaining dumps in $n - d$ slots. For $d\le n$, we can write
\begin{align}
    \E{2^{K_n}| D_1 = d} & = \E{2 ^{1 + K_{n - d}}}
    = 2 m(n - d). \eqnlabel{firstdump}
\end{align}

Substituting \eqnref{exp1} and \eqnref{firstdump} into \eqnref{totalexp} yields
\begin{align}
    m(n) &= \sum_{d=1} ^ n g(d) \cdot 2 m(n - d) + \sum_{d = n + 1}^ \infty g(d) \cdot 1\\
    &= 2 \sum_{d = 1}^ n g(d) m(n - d) + P(D> n) .\eqnlabel{recursion}
\end{align}
 
To solve for $m(n)$, we use generating functions defined as follows:
\begin{align}
    M(z) &\triangleq \sum_{n = 0} ^\infty m(n) z^n,\\
    G(z)& \triangleq \sum_{d = 1}^{d_\mathrm{max}} g(d) z^ d,\\
    F(z) &\triangleq \sum_{n = 0}^ \infty P(D> n) z^n.
\end{align}
With these generating functions, \eqnref{recursion} becomes
\begin{align}
    \sum_{n = 0 }^\infty m(n) z^ n &= 2\sum _{n = 0 }^\infty \left(\sum_{d = 1}^n g(d) z^ d \cdot m(n - d) z^{n - d}\right) + \sum_{n = 0}^ \infty P(D > n) z^ n .
\end{align}
This implies
\begin{align}
    M(z) &= 2 \sum_{d = 1}^ \infty g(d) z^d \sum _{n = d}^ \infty m(n - d ) z^{n - d} + F (z)\\
    &= 2 \sum _{d = 1}^ \infty g(d) z^ d \sum _{j = 0}^ \infty m(j) z^ j + F (z)\\
    &= 2 G(z) M(z) + F (z).
\end{align}
Now we solve for $M(z)$:
\begin{align}
    M(z) &= \frac{F(z)}{1 - 2 G(z)} \eqnlabel{tf}. 
\end{align}

Let $z_0$, $z_1, \ldots, z_{d_\mathrm{max} - 1}$ be the roots of $\E{z^{-D}}= 1/2$. The asymptotic behavior of $m(n)$ is determined by these roots $z_i$, as they correspond to the poles $p_i=1/z_i$ of $M(z)$ (where $G(p_i) = 1/2$).  

\begin{lemma}[Properties of the Roots of $\E{z^{-D}} = 1/2$]
\label{lemma:unique}
    Let $f(z) = \E{z^{-D}}= \sum_{d = 1}^ \infty z^{-d} g(d)$. The equation $f(z) = 1/2$ has a unique positive real root $z_0$, and this root has the maximum magnitude among all roots (real or complex). 
\end{lemma}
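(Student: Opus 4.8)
The plan is to treat $f(z) = \sum_{d=1}^{d_{\max}} z^{-d} g(d)$ on the positive reals first, then extend to the complex plane by a domination argument. For the real root, I would substitute $w = 1/z$ and study $h(w) \triangleq f(1/w) = \sum_{d=1}^{d_{\max}} g(d) w^d = G(w)$ for $w > 0$. Since $g$ is a PMF on $\{1,\ldots,d_{\max}\}$ with $g(d) \ge 0$ and $\sum_d g(d) = 1$, the function $h$ is continuous and strictly increasing on $(0,\infty)$ (every coefficient is nonnegative and at least one with $d \ge 1$ is positive, so the derivative is strictly positive for $w>0$), with $h(0) = 0$ and $h(1) = 1 > 1/2$. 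By the intermediate value theorem there is a unique $w_0 \in (0,1)$ with $h(w_0) = 1/2$; this gives the unique positive real root $z_0 = 1/w_0 > 1$ of $f(z) = 1/2$.

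For the maximality claim, let $z$ be any complex root of $f(z) = 1/2$, and write $w = 1/z$. Then $\sum_{d=1}^{d_{\max}} g(d) w^d = 1/2$, so by the triangle inequality
\begin{align}
    \tfrac12 = \left| \sum_{d=1}^{d_{\max}} g(d) w^d \right| \le \sum_{d=1}^{d_{\max}} g(d) |w|^d = h(|w|).
\end{align}
Since $h$ is strictly increasing on $(0,\infty)$ and $h(w_0) = 1/2$, the inequality $h(|w|) \ge 1/2 = h(w_0)$ forces $|w| \ge w_0$, i.e.\ $|z| = 1/|w| \le 1/w_0 = z_0$. Hence no root has magnitude exceeding $z_0$, which is what we wanted.

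The only subtlety — and the main thing to be careful about — is the case of equality, $|w| = w_0$: one should check that this does not produce a \emph{complex} root of the same magnitude as $z_0$ in a way that would undermine how $z_0$ is used later (namely, that $p_0 = 1/z_0$ is the dominant pole of $M(z)$ in~\eqnref{tf}, governing the asymptotics of $m(n)$). If equality holds, then the triangle inequality is tight, which requires all the terms $g(d) w^d$ for $d$ in the support of $g$ to have the same argument; since consecutive powers $w^d, w^{d+1}$ differ in argument by $\arg w$, and the support contains $d = s_{\min}(g) \ge 1$ together with at least the structure needed for a valid PMF, one argues that $\arg w \equiv 0$ unless the support is supported on a single common multiple — and even then the relevant root set can be handled by the substitution $w^{\gcd}$. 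In the generic case where $\gcd$ of the support is $1$, equality forces $w = w_0$ real and positive, so $z_0$ is the \emph{strictly} dominant root; in the periodic case the roots of maximum magnitude are $z_0$ times roots of unity, all still of magnitude $z_0$, so the stated conclusion (maximum magnitude, attained by $z_0$) still holds. I would state the lemma's conclusion exactly as given — $z_0$ has maximum magnitude — and relegate the periodicity discussion to a remark, since the partial-fraction expansion of~\eqnref{tf} only needs that $p_0 = 1/z_0$ is a pole of largest modulus, not necessarily a simple dominant one.
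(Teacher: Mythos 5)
Your proof is correct and follows essentially the same route as the paper: monotonicity plus the intermediate value theorem give the unique positive real root, and the triangle inequality bounds every other root's magnitude; the substitution $w = 1/z$ (working with the increasing $G(w) = \E{w^D}$ rather than the decreasing $f(z) = \E{z^{-D}}$) is purely cosmetic. One small bonus: your careful treatment of the equality case improves on the paper, whose proof asserts the strict inequality $z_0 > |z|$ although the weak triangle inequality only yields $z_0 \ge |z|$ — which, as you correctly observe, is all the lemma's conclusion (maximum magnitude, possibly with ties from a periodic support) actually requires.
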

The proof is in Appendix~\ref{app:unique}. With Lemma~\ref{lemma:unique}, the generating function $M(z)$ can be used to analyze the underlying renewal process associated with dumping, yielding the next result.

\begin{theorem}
[Asymptotic Leakage Rate of RAD]
\label{thm:RADleakagerate}
    For the RAD policy, with a generally distributed inter-dump time $D$, the asymptotic MaxL rate $\Lambda$ is given by 
    \begin{align}
        \Lambda = \log (z_0)
    \end{align}
where $z_0$ is the unique positive real root of the equation $\E{z^{-D}} = 1/2$.
\end{theorem}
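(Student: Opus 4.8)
The plan is to convert the leakage rate into the exponential growth rate of $m(n)=\E{2^{K_n}}$ and then pin that rate down by a change of measure that turns the recursion \eqnref{recursion} into a proper discrete renewal equation. By Theorem~\ref{thm: RADleakage} we have $\L(X^n\to Y^n)=\log m(n)$, so $\Lambda=\lim_{n\to\infty}\tfrac1n\log m(n)$, and it suffices to prove $m(n)=\Theta(z_0^n)$, where $z_0$ is the unique positive real root of $\E{z^{-D}}=1/2$ supplied by Lemma~\ref{lemma:unique}. Note that $\E{z^{-D}}=\sum_d g(d)z^{-d}$ is strictly decreasing on $(0,\infty)$ and equals $1$ at $z=1$, so in fact $z_0>1$ and $\Lambda>0$, as one expects of a leaky channel.

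First I would perform the exponential tilt. Because $\sum_d g(d)z_0^{-d}=\E{z_0^{-D}}=\tfrac12$ by definition of $z_0$, substituting $m(n)=z_0^n\,\tilde m(n)$ into \eqnref{recursion} gives
\begin{align}
    \tilde m(n) &= \sum_{d=1}^{n}\tilde g(d)\,\tilde m(n-d) + \tilde h(n),
    \qquad \tilde g(d)\triangleq 2g(d)z_0^{-d},\quad \tilde h(n)\triangleq z_0^{-n}P(D>n).
\end{align}
Here $\sum_d\tilde g(d)=2\E{z_0^{-D}}=1$ with $\tilde g$ supported on $\{1,\dots,d_\mathrm{max}\}$, so $\tilde g$ is a bona fide inter-arrival PMF with finite mean $\tilde\mu=\sum_d d\,\tilde g(d)\in(0,\infty)$; and $\tilde h$ is nonnegative, finitely supported, with $\tilde h(0)=P(D>0)=1$ and total mass $\sum_n\tilde h(n)=F(1/z_0)\in(0,\infty)$. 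This is a non-defective discrete renewal equation $\tilde m=\tilde h+\tilde g*\tilde m$, whose unique solution is $\tilde m=\tilde v*\tilde h$ with $\tilde v(m)=\sum_{j\ge 0}\tilde g^{*j}(m)$ the renewal function of $\tilde g$.

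Then I would invoke the discrete renewal theorem. When $\tilde g$ is aperiodic, $\tilde v(m)\to 1/\tilde\mu\in(0,\infty)$, so $\tilde v$ is bounded; combining $\tilde m(n)\ge\tilde h(0)\,\tilde v(n)=\tilde v(n)$, which stays bounded away from $0$ for large $n$, with $\tilde m(n)\le\big(\sup_m\tilde v(m)\big)\sum_k\tilde h(k)<\infty$, yields $\tilde m(n)=\Theta(1)$, hence $m(n)=\Theta(z_0^n)$ and $\Lambda=\log z_0$. The one genuine obstacle is the lattice case: if $g$ (equivalently $\tilde g$) is supported on multiples of some span $\delta\ge 2$, then $\tilde v(m)\to 0$ off a single residue class and the lower bound degrades. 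This is resolved by noting that if every inter-dump time is a multiple of $\delta$, then every dump attempt falls at a multiple of $\delta$, so $K_n=K_{\delta\lfloor n/\delta\rfloor}$ and $m(n)=m(\delta\lfloor n/\delta\rfloor)$; applying the aperiodic argument to the down-sampled process, whose inter-dump time is $D/\delta$ and whose associated root is $z_0^{\delta}$, gives $m(\delta m)=\Theta(z_0^{\delta m})$ and therefore $m(n)=\Theta(z_0^n)$ once more.

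A self-contained alternative avoids the renewal theorem and argues directly from the rational generating function $M(z)=F(z)/(1-2G(z))$ of \eqnref{tf}: its poles are the reciprocals of the roots of $\E{z^{-D}}=1/2$, so by Lemma~\ref{lemma:unique} the pole $p_0=1/z_0$ has smallest modulus; it is simple, since $\tfrac{d}{dz}(1-2G(z))=-2\sum_d d\,g(d)z^{d-1}$ does not vanish at $p_0>0$, and it is not cancelled by $F$, since $F(1/z_0)>0$ (all coefficients of $F$ are nonnegative and $P(D>0)=1$). A partial-fraction expansion then gives $m(n)=c_0 z_0^n+o(z_0^n)$ with $c_0=F(p_0)\big/\big(2p_0\sum_d d\,g(d)p_0^{d-1}\big)>0$, so $\tfrac1n\log m(n)\to\log z_0$. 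In either route the algebra is routine; the care lies entirely in the periodicity bookkeeping and in confirming that the dominant pole is present and uncancelled.
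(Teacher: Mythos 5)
Your second ``self-contained alternative'' is essentially the paper's own proof (Appendix~\ref{app: RADleakagerate}): the paper also forms $M(z)=F(z)/(1-2G(z))$, does a partial-fraction expansion in the poles $p_i=1/z_i$, and uses Lemma~\ref{lemma:unique} to argue that the term in $z_0^n$ dominates. Your primary route is genuinely different: after the exponential tilt $m(n)=z_0^n\tilde m(n)$ with $\tilde g(d)=2g(d)z_0^{-d}$ a probability mass function (since $\E{z_0^{-D}}=1/2$), the recursion \eqnref{recursion} becomes a proper discrete renewal equation and the key-renewal/elementary-renewal theorem does the rest. That route buys you a probabilistic rather than complex-analytic argument and avoids any bookkeeping about multiplicities of poles; what the paper's route buys is an explicit asymptotic coefficient $C_0$ and a shorter path once Lemma~\ref{lemma:unique} is in hand. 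A genuine strength of your write-up, in both routes, is that you address points the paper's proof silently passes over: (i) the lattice case, where $z=z_0 e^{2\pi ik/\delta}$ also satisfies $|z|=z_0$ and $f(z)=1/2$ if the support of $g$ lies in $\delta\N$, so the strict inequality $z_0>|z_i|$ asserted in Lemma~\ref{lemma:unique} and used to send $\epsilon(n)\to 0$ can fail, and the paper's partial-fraction argument then needs exactly the down-sampling fix you give; (ii) simplicity of the dominant pole $p_0$, which the paper's expansion $M(z)=\sum_i C_i/(1-z/p_i)$ tacitly assumes and which you verify via $(1-2G)'(p_0)\ne 0$; and (iii) non-cancellation, i.e.\ $C_0\ne 0$, which you get from $F(p_0)>0$. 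Both routes are correct; yours is more complete.
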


The proof is provided in Appendix~\ref{app: RADleakagerate}. This theorem reveals the explicit relationship between the timing of a RAD policy, as specified by the PMF of $D$,  and the resulting privacy loss (the rate $\Lambda$). The relationship $\Lambda = \log(z_0)$ means minimizing leakage is equivalent to minimizing $z_0$. The equation  $\E{z^{-D}} = 1/2$ defines $z_0$ via a strictly decreasing function $f(z) = \E{z^{-D}}$. We achieve a smaller $z_0$ and thus less leakage by making $D$ stochastically larger, i.e., dumping less frequently. Thus we have the fundamental trade-off that stochastically larger $D$ reduces leakage but increases the age, as we will show later in Theorem~\ref{thm: RAD age}. 

\subsection{Random Accumulate and Dump (RAD) Age}

We employ the sampling of age processes approach~\cite{Yates2020TheAO} for the age analysis of the RAD policy. To use this framework, we use a different model than we did for the leakage analysis.  For the purpose of AoI analysis, if no update arrives before the dump attempt, then the server resends the previously dumped update, which has been called a ``fake update'' \cite{yates2018agepreemption}. With respect to the age at the monitor, in the absence of an update to dump, it makes no difference if the server sits idle  or if the server repeats sending the prior update. Neither changes the age at the monitor. It is important to note that this model is used exclusively for the age analysis. The leakage calculation is strictly based on the physical model where an empty dump attempt results in no transmission ($Y_t = 0$), as this is the channel observed by the adversary.  However this fake update approach 
simplifies the age analysis.  As observed in \cite{Yates2020TheAO}, we can now view the monitor as \emph{sampling} the update process of the server. With each dump instance, the monitor receives the freshest update of the server,  resetting the age at monitor to the age of the dumped update. In the same way, we can view the server as  sampling the always-fresh update process of the source. 

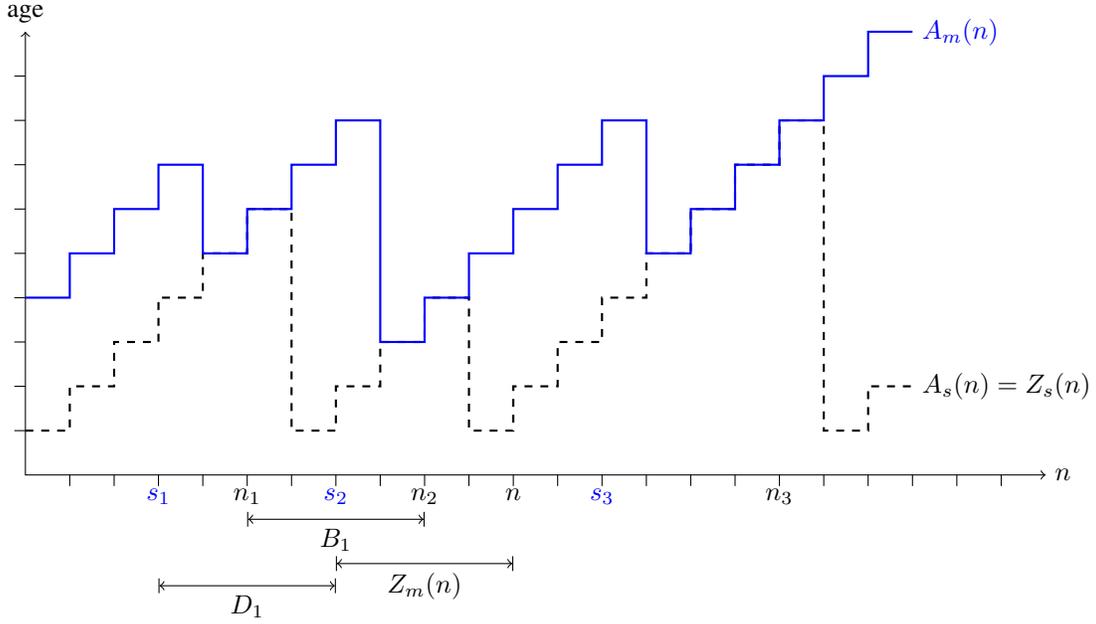
\begin{figure}[t]
    \centering
\begin{tikzpicture}[scale=\linewidth/28cm]
\draw [<->] (0,10) node [above] {age} -- (0,0) -- (23,0) node [right] {$n$};
\draw [thin]\xtic{1}\xtic{2}\xtic{3}\xtic{4}\xtic{5}\xtic{6}\xtic{7}
\xtic{8}\xtic{9}\xtic{10}\xtic{11}\xtic{12}\xtic{13}\xtic{14}\xtic{15}\xtic{16}\xtic{17}\xtic{18}\xtic{19}\xtic{20}\xtic{21}\xtic{22}; 
\draw
[thin]\ytic{1}\ytic{2}\ytic{3}\ytic{4}\ytic{5}\ytic{6}\ytic{7}\ytic{8}\ytic{9};
\draw [thick, dashed] 
(0,1) -- ++(1,0) -- 
++(0,1) -- ++(1,0) --
++(0,1) -- ++(1,0) --
++(0,1) -- ++(1,0) --
++(0,1) -- ++(1,0) --
++(0,1) -- ++(1,0) --
++(0,-5)-- ++(1,0) -- 
++(0,1) -- ++(1,0) --
++(0,1) -- ++(1,0) --
++(0,1) -- ++(1,0) --
++(0,-3)-- ++(1,0) -- 
++(0,1) -- ++(1,0) --
++(0,1) -- ++(1,0) --
++(0,1) -- ++(1,0) --
++(0,1) -- ++(1,0) --
++(0,1) -- ++(1,0) --
++(0,1) -- ++(1,0) --
++(0,1) -- ++(1,0) --
++(0,-7) -- ++(1,0) -- 
++(0,1) -- ++(1,0) node [right] {$A_s(n) = Z_s(n)$};
\draw [thick,blue] 
(0,4) -- ++(1,0) -- 
++(0,1) -- ++(1,0) --
++(0,1) -- ++(1,0) --
++(0,1) -- ++(1,0) --
++(0,-2)-- ++(1,0) -- 
++(0,1) -- ++(1,0) --
++(0,1) -- ++(1,0) --
++(0,1) -- ++(1,0) --
++(0,-5)-- ++(1,0) -- 
++(0,1) -- ++(1,0) --
++(0,1) -- ++(1,0) --
++(0,1) -- ++(1,0) --
++(0,1) -- ++(1,0) --
++(0,1) -- ++(1,0) --
++(0,-3)-- ++(1,0) -- 
++(0,1) -- ++(1,0) --
++(0,1) -- ++(1,0) --
++(0,1) -- ++(1,0) --
++(0,1) -- ++(1,0) --
++(0,1) -- ++(1,0) node [right] {$A_m(n)$};
\draw 
(5,-0.1) node [below] {$n_1$} 
(9, -0.1) node [below] {$n_2$} 
(11,-0.1) node [below] {$n$} 
(17,-0.1) node [below] {$n_3$};
\draw 
(3,-0.1) node [below,blue] {$s_1$} 
(7,-0.1) node [below,blue] {$s_2$} 
(13,-0.1) node [below,blue] {$s_3$};
\draw [|<->|] (7,-2) to node [below] {$Z_m(n)$} ++(4,0);

\draw [|<->|] (5,-1) to node [below] {$B_1$} (9,-1); 
\draw [|<->|] (3,-2.5) to node [below] {$D_1$} (7,-2.5); 
\end{tikzpicture}
\caption{The source sends fresh updates to the server in slots $N_k=n_k$, inducing the age process $A_s(n)$ at the input to the server. The server sends samples of the most recent update to the monitor in time slots $S_k=s_k$, inducing the age process $A_m(n)$ at the monitor.}
\label{fig:DTsampler}
\end{figure}
\subsubsection{Sampling the Source: The Age of Fresh Updates}
\label{sec:fresh}
Referring to Figure~\ref{fig:model}, the source can generate a fresh (age zero) update in a slot $n$ and forward it to the server in that same slot. This fresh update arrives at the server at the end of slot $n$ with age $1$.  The age process $A_s(n)$ of an observer at the server input is reset at the start of slot $n+1$ to $A_s(n+1)=1$. This process $A_s(n)$ is the age of the source's renewal process, which is labeled as $Z_s(n)$ in Figure~\ref{fig:DTsampler}.
When a source generates fresh updates in slots $N_k=n_k$, the age $Z_s(n)$ evolves as the sequence of staircases shown in Figure~\ref{fig:DTsampler}. The source update times $N_k$ form a renewal process with i.i.d.~ interarrival times $B_k = N_k - N_{k-1}$. 
Defining the indicator $\I{A}$ to be $1$ if event $A$ occurs and zero otherwise, we can employ Palm probabilities~\cite{BacelliBremaudBook} to calculate the age PMF 
\begin{align}
P(A_s(n)=a)=\limty{N}\frac{1}{N}\sum_{n=0}^{N-1} \I{A_s(n)=a}.\eqnlabel{PalmZ}
\end{align}
The sum on the right side of \eqnref{PalmZ} can be accumulated as rewards over each renewal period. In the $k$th renewal period, we set the  reward $R_k$ equal to the number of slots in the $k$th renewal period $k$ in which $A_s(n)=a$. When the renewal period has duration $B_k\ge a$, there exists exactly one slot $n'$ in the renewal period in which $A_s(n')=a$ and thus the reward is $R_k=1$; otherwise $R_k=0$.  Thus, for $a=1,2,\ldots$,  $R_k=\I{B_k\ge a}$. From renewal reward theory, it follows from \eqnref{PalmZ} that
\begin{align}\eqnlabel{Zpmf}
    P(A_s(n)=a)=\frac{\E{R_k}}{\E{B_k}}=\frac{P(B_k\ge a)}{\E{B_k}},\quad a=1,2,\ldots.
\end{align}
This is the discrete-time version of the well-known distribution of the age of a renewal process. It follows from \eqnref{Zpmf} that the average age of the current update at the server has average age
\begin{align}
  \E{A_s(n)}&=\sum_{a=1}^\infty a P(A_s(n)= a)=\frac{\E{B_k^2}}{2\E{B_k}}+\frac{1}{2}.
  \eqnlabel{EZ-input-age}
\end{align}

\subsubsection{Sampling the Server: The Age at the Monitor}
Now we examine age at the monitor for the RAD server. As depicted in Figure~\ref{fig:DTsampler}, the server (
sends its freshest update to the monitor at sample times $S_1,S_2,\ldots$.  These sample times $S_k$ also form a renewal process with i.i.d.~inter-sample times $D_k=S_k-S_{k-1}$. We now analyze the average age at the monitor in terms of the moments of $D_k$. 

The age process $A_s(n)$ at the input to the server is as defined in Section~\ref{sec:fresh}. When the server sends its most recent update to the monitor in slot $s_k$, this update has age $A_s(s_k)$ at the start of the slot. The monitor receives the update at the end of the slot with age $A_s(s_k)+1$. Thus, at the start of slot $s_k+1$, the age at the monitor is reset to $A_m(s_k+1)=A_s(s_k)+1$.  Graphically, this is depicted in Figure~\ref{fig:DTsampler}. 

To describe the monitor age $A_m(n)$ in an arbitrary slot $n$, we look backwards in time and define $Z_m(n)$ as the age of the renewal process defined by the inter-renewal times $D_k$ associated with sampling the server.   In Fig.~\ref{fig:DTsampler} for example,  in slot  $n=11$, the last update was sampled by the server at time $s_2=7$ and $Z_m(11)=11-7=4$. We note that the $Z_m(n)$ process is the same as the $A_s(n)$ process, modulo the inter-renewal times now being labeled $D_k$ rather than $B_k$. In particular the PMF $P(Z_m(n) = a)$ and expected value $\E{Z_m(n)}$ are described by \eqnref{Zpmf} and \eqnref{EZ-input-age} with $D_k$ replacing $B_k$. Hence we obtain
\begin{align}
    \E{Z_m(n)} &= \sum_{a=1}^\infty a P(Z_m(n) = a) = \frac{\E{{D_k}^2}}{2\E{D_k}} + \frac{1}{2}\eqnlabel{expZmn}.
\end{align}

The age at the monitor in slot $n$  is then 
\begin{align}
    A_m(n)=A_s(n- Z_m(n))+Z_m(n).
\end{align}
When the age process $A_s(n)$ at the input to the server is stationary and the sampling process  that induces $Z_m(n)$ is independent of the $A_s(n)$ age process, it follows that $A_s(n)$ and $A_s(n-Z_m(n))$ are identically distributed. Thus,
the average age at the monitor is
\begin{align}
    \E{A_m(n)}&=\E{A_s(n-Z_m(n))}+\E{Z_m(n)}\nn&=\E{A_s(n)}+\E{Z_m(n)} \eqnlabel{samplingeqn}.
\end{align}
The inter-arrival times $B_k$ and the inter-sample times $D_k$ are sequences of i.i.d.~random variables, we can therefore drop the index $k$ when considering their moments. Employing  \eqnref{EZ-input-age} and \eqnref{expZmn} to evaluate both $\E{A_s(n)}$ and $\E{Z_m(n)}$, we obtain
\begin{align}
    \E{A_m(n)}= \frac{\E{B^2}}{2\E{B}}+ \frac{1}{2}+ \frac{\E{D^2}}{2\E{D}} +\frac{1}{2} \eqnlabel{agesampling}.
\end{align}
The source generates packets as a rate $\lambda$ Bernoulli process with 
$\E{B}=1/\lambda$ and $\E{B^2}=(2-\lambda)/\lambda^2$. With these observations, \eqnref{agesampling} yields the following theorem. 
\begin{theorem}[RAD Age]
\label{thm: RAD age}
When the source emits updates as a rate $\lambda$ Bernoulli process, the average age at the monitor for the RAD policy is 
\begin{align}
    \E{A_m(n)}&= \frac{1}{\lambda}+ \frac{\E{D^2}}{2 \E{D}} + \frac{1}{2}.
\end{align} \eqnlabel{Radagetheorem}
\end{theorem}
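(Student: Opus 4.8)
The plan is to obtain the claimed expression by specializing the general RAD age formula \eqnref{agesampling} to Bernoulli arrivals; essentially all of the structural work has already been carried out in the preceding sampling-of-age-processes argument, so the proof reduces to a moment computation for the source inter-arrival process. First I would recall the two ingredients that feed \eqnref{agesampling}: (i) the age $A_s(n)$ at the server input is the age of the source renewal process and, by the renewal-reward / Palm computation in \eqnref{Zpmf}--\eqnref{EZ-input-age}, is stationary with $\E{A_s(n)} = \E{B^2}/(2\E{B}) + 1/2$; and (ii) the age $Z_m(n)$ of the dump renewal process satisfies $\E{Z_m(n)} = \E{D^2}/(2\E{D}) + 1/2$ by the identical argument with $D_k$ in place of $B_k$ (this is \eqnref{expZmn}). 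Because the inter-dump times $\{D_i\}$ are i.i.d.\ and drawn independently of the arrival sequence, $Z_m(n)$ is independent of the $A_s(\cdot)$ process, so $A_s(n)$ and $A_s(n - Z_m(n))$ are identically distributed and $A_m(n) = A_s(n - Z_m(n)) + Z_m(n)$ gives the additive decomposition $\E{A_m(n)} = \E{A_s(n)} + \E{Z_m(n)}$.

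It then remains to evaluate the source-side term for a rate-$\lambda$ Bernoulli process. The inter-arrival time $B$ is geometric on $\{1,2,\dots\}$ with $P(B = b) = \lambda(1-\lambda)^{b-1}$, so $\E{B} = 1/\lambda$ and $\E{B^2} = \mathrm{Var}(B) + (\E{B})^2 = (1-\lambda)/\lambda^2 + 1/\lambda^2 = (2-\lambda)/\lambda^2$. Substituting into $\E{A_s(n)} = \E{B^2}/(2\E{B}) + 1/2$ gives $\E{A_s(n)} = (2-\lambda)/(2\lambda) + 1/2 = 1/\lambda$, where the last equality uses $(2-\lambda) + \lambda = 2$. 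Adding $\E{Z_m(n)} = \E{D^2}/(2\E{D}) + 1/2$ then yields $\E{A_m(n)} = 1/\lambda + \E{D^2}/(2\E{D}) + 1/2$, which is the claimed formula.

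I do not expect a genuine obstacle in the final substitution; the only points that require care were already dispatched in the derivation leading to \eqnref{agesampling}, namely the independence of the dump sampling process from the source age process (which is what makes the additive decomposition in \eqnref{samplingeqn} valid) and the applicability of the stationary renewal-age PMF in \eqnref{Zpmf}, which needs $\E{B} < \infty$ and $\E{D} < \infty$ — both hold here since $B$ is geometric and $D$ has finite support by assumption. If anything, the one thing worth double-checking is the arithmetic collapse $(2-\lambda)/(2\lambda) + 1/2 = 1/\lambda$, which is where the pleasantly simple final form comes from.
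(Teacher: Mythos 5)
Your proof is correct and follows essentially the same route as the paper: it specializes the sampling decomposition \eqnref{agesampling} to Bernoulli arrivals by substituting $\E{B}=1/\lambda$ and $\E{B^2}=(2-\lambda)/\lambda^2$, exactly as the paper does immediately before stating the theorem. The arithmetic collapse $(2-\lambda)/(2\lambda)+1/2=1/\lambda$ is verified correctly.
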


\subsection{Analysis of Special RAD Policies}
Here we evaluate three simple RAD policies, corresponding to deterministic, geometric, and discrete uniform inter-dump times. For each policy, parameters are chosen so that $\E{D}=\tau$ and thus the average dump attempt rate is always $1/\tau$. 

The age analysis of these specific policies is a direct application of Theorem~\ref{thm: RAD age}. We can calculate the AoI for each distribution by substituting its specific first and second moments $\E{D}$ and $\E{D^2}$ into the general age formula~\eqnref{Radagetheorem}. The results are summarized in the following corollary.

\begin{corollary}[AoI of Special RAD Policies]
\label{Cor: Special-RAD-age}
When the source emits updates as a rate $\lambda$ Bernoulli process, the average age at the monitor for the three special RAD policies, each parameterized by its mean inter-dump time $\E{D}=\tau$, are
\begin{align}
    \E{A_m(n)}&=\begin{cases}
    1/\lambda + (\tau + 1)/2 & \text{DAD server,}\\
    1/\lambda + (2\tau + 1)/3 & \text{Uniform RAD server,}\\ 
    1/\lambda + \tau & \text{Geometric RAD server}.
    \end{cases}
\end{align}
\end{corollary}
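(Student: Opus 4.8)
The plan is to derive this as an immediate consequence of Theorem~\ref{thm: RAD age}. That theorem already reduces the monitor age to the form $\E{A_m(n)} = 1/\lambda + \E{D^2}/(2\E{D}) + 1/2$, which depends on the inter-dump policy only through the first two moments $\E{D}$ and $\E{D^2}$ of the inter-dump time. Hence the whole proof amounts to fixing, for each of the three policies, the distribution of $D$ so that $\E{D}=\tau$, computing its second moment $\E{D^2}$, and substituting into the formula of Theorem~\ref{thm: RAD age}.

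For the \emph{DAD server}, $D=\tau$ is deterministic, so $\E{D^2}=\tau^2$ and $\E{D^2}/(2\E{D}) = \tau/2$; together with the trailing $1/2$ this gives $1/\lambda + (\tau+1)/2$. For the \emph{uniform RAD server}, I take $D\sim\Unif\{1,2,\ldots,2\tau-1\}$, which has mean $\tau$ and, from the standard identity $\E{D^2}=(m+1)(2m+1)/6$ with $m=2\tau-1$, second moment $\E{D^2} = \tau(4\tau-1)/3$; then $\E{D^2}/(2\E{D}) = (4\tau-1)/6$, and adding $1/2 = 3/6$ yields $1/\lambda + (2\tau+1)/3$. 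For the \emph{geometric RAD server}, I take $D$ geometric on $\{1,2,\ldots\}$ with success probability $1/\tau$, so that $\E{D}=\tau$ and $\E{D^2} = \mathrm{Var}(D) + \tau^2 = (\tau^2-\tau) + \tau^2 = 2\tau^2-\tau$; thus $\E{D^2}/(2\E{D}) = \tau - 1/2$, and adding $1/2$ gives $1/\lambda + \tau$.

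Since each case is a one-line moment computation substituted into the age formula of Theorem~\ref{thm: RAD age}, there is no substantive obstacle; the only point worth stating carefully is that in each case $D$ is a genuine PMF on the positive integers with the prescribed mean. In particular, the uniform parametrization requires $\tau$ to be such that $2\tau-1$ is a positive integer, and if one wishes to respect the finite-support assumption used for the optimization in Section~\ref{sec:optimaldecoupled}, the ``geometric'' policy should be interpreted as a truncated geometric, whose second moment converges to $2\tau^2-\tau$ as $d_\mathrm{max}\to\infty$, recovering the stated value in the limit. With these parametrizations pinned down, the corollary follows by inspection.
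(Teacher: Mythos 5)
Your proof is correct and takes exactly the approach the paper does: the corollary is obtained by computing $\E{D}$ and $\E{D^2}$ for each of the three inter-dump distributions and substituting into Theorem~\ref{thm: RAD age}, and all three moment calculations and the resulting simplifications check out. Your parenthetical remark about the truncated geometric reconciling the finite-support assumption of Section~\ref{sec:optimaldecoupled} is a careful observation that the paper itself glosses over, but it does not change the substance of the argument.
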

For any given mean inter-dump time $\tau > 1$, the zero-variance DAD policy achieves the lowest AoI. The Geometric RAD policy, which has the highest variance ($\mathrm{Var}(D) = \tau(\tau-1)$), performs the worst. The Uniform RAD policy, with a moderate variance ($\mathrm{Var}(D) = (\tau^2-1)/3$), has an AoI that lies between these two extremes. However, this age perspective is not the complete story. From a leakage perspective, the policies may employ different values of $\tau$ to meet a MaxL constraint.  We now derive the leakage for each of these RAD policies.  

\subsubsection{Deterministic Accumulate-and-Dump (DAD)} 
For the Deterministic Accumulate-and-Dump (DAD), the server dumps the freshest update after every $\tau$ slots.  
In this policy, the output sequence $Y^n$ is  a deterministic function of the input $X^n$. Defining $K=\floor{n/\tau}$, $Y^n$ has the form 
\begin{align}\eqnlabel{Yn-DADsupport}
Y^n=(0^{\tau-1},Y_\tau,0^{\tau-1},Y_{2\tau},\ldots,Y_{K\tau},0^{n-K\tau}),
\end{align}
where $Y_{k\tau}=0$ if and only if $(X_{(k-1)\tau+1},\cdots, \X_{k\tau})=0^\tau$  and otherwise  $Y_{k\tau}=1$. This structure simplifies the leakage calculation.

\begin{theorem}[DAD Leakage]
\label{thm: DAD-leakage}
The DAD policy has an asymptotic MaxL rate given by 
\begin{align}
\Lambda 
&=\frac{1}{n}\L(X^n\to Y^n) = \frac{1}{\tau}. 
\eqnlabel{DADLeakage}
\end{align}
\end{theorem}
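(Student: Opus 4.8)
The plan is to compute $\L(X^n\to Y^n)$ exactly for the DAD policy and then divide by $n$ and take the limit. From the structure in \eqnref{Yn-DADsupport}, the output $Y^n$ is a deterministic function of $X^n$: in each block of $\tau$ slots the server emits exactly one potential departure slot (at positions $\tau, 2\tau,\ldots, K\tau$ with $K=\floor{n/\tau}$), and that slot carries a $1$ if and only if the corresponding block of inputs is not all-zero. Since $Y^n$ is a deterministic function of $X^n$, the conditional law $\pmf{Y^n|X^n}{y^n|x^n}$ is $0$ or $1$, so for every achievable $y^n\in\support{Y}$ we have $\max_{x^n\in\support{X}}\pmf{Y^n|X^n}{y^n|x^n}=1$. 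Hence by the definition \eqnref{eqn:maxL_def},
\begin{align}
\L(X^n\to Y^n)=\log|\support{Y}|.
\end{align}

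The next step is to count $|\support{Y}|$. Because the arrival process has full support $\support{X}=\{0,1\}^n$, each of the $K$ blocks can independently produce either a $0$ (all-zero block, possible) or a $1$ (non-all-zero block, possible), and the trailing $n-K\tau$ slots are forced to $0$. Therefore every one of the $2^K$ binary patterns on the $K$ designated slots is achievable and these are all distinct, giving $|\support{Y}|=2^K=2^{\floor{n/\tau}}$. Consequently
\begin{align}
\L(X^n\to Y^n)=\log 2^{\floor{n/\tau}}=\floor{n/\tau}.
\end{align}

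Finally, dividing by $n$ and letting $n\to\infty$,
\begin{align}
\Lambda=\lim_{n\to\infty}\frac{1}{n}\L(X^n\to Y^n)=\lim_{n\to\infty}\frac{\floor{n/\tau}}{n}=\frac{1}{\tau},
\end{align}
which is the claim. I do not expect any serious obstacle here; the only point requiring a little care is justifying that all $2^K$ output patterns are genuinely achievable (both an all-zero and a non-all-zero arrival block are possible under full support, and blocks are disjoint so the choices are independent), and that the leftover tail of length $n-K\tau<\tau$ contributes nothing to the count. One could alternatively obtain the same answer by noting that $\mathrm{wt}(u^n)=\floor{n/\tau}$ deterministically for the DAD dump schedule and invoking Theorem~\ref{thm: RADleakage}, since $\E{2^{\mathrm{wt}(u^n)}}=2^{\floor{n/\tau}}$; I would mention this as a consistency check rather than the main argument.
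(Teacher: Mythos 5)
Your proof is correct and takes essentially the same route as the paper: both note that $Y^n$ is a deterministic function of $X^n$, so the maximum conditional probability is $1$ for each achievable output, reducing $\L(X^n\to Y^n)$ to $\log|\support{Y}|=\floor{n/\tau}$, then divide by $n$. Your explicit argument that all $2^K$ block patterns are achievable and the consistency check via Theorem~\ref{thm: RADleakage} are minor additions to the paper's argument rather than a different method.
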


\begin{proof}

For the DAD policy, the output $y^n$ is a deterministic function of the input $x^n$. For any achievable output sequence $y^n \in \support{Y}$, the input $x^n = y^n$ (where an update arrives only in the slot of a dump) yields $\pmf{Y^n|X^n}{y^n|y^n}=1$. Thus we have
\begin{align}
    \max_{x^n\in\support{X}} \pmf{Y^n|X^n}{y^n|x^n} = 1, \quad \text{for all } y^n \in \support{Y}.
\end{align}

The MaxL calculation \eqnref{eqn:maxL_def} thus simplifies to counting the number of achievable output sequences as follows
\begin{align}
    \L(X^n\to Y^n) &= \log \sum_{y^n\in \support{Y}} 1 = \log |\support{Y}|.
\end{align}
Since $Y^n$ has the form \eqnref{Yn-DADsupport}, the support set $\support{Y}$ has size $\abs{\support{Y}}=2^K=2^{\floor{n/\tau}}$. This yields 
\begin{align}
    \L(X^n\to Y^n)&=\log {2}^{\left \lfloor{n/\tau}\right \rfloor} \eqnlabel{dump_instance}
    =\floor{\frac{n}{\tau}}.
\end{align}
Since $\limty{n}\floor{n/\tau}/n=1/\tau$, the claim \eqnref{DADLeakage} follows.
\end{proof}

\subsubsection{RAD with Geometric Dumps}

For this policy, the time $D$ between dump attempts follows a geometric distribution. To achieve a mean of $\E{D} = \tau$, the underlying Bernoulli dump attempt probability must be $\mu = 1/\tau$.

\begin{corollary}[Geometric RAD Leakage]
\label{cor:RADGeo}
    If the server employs a RAD policy where the time $D$ between dump attempts is geometrically distributed with  expected value $\E{D}=\tau$, the asymptotic MaxL rate is given by 
    \begin{align}
        \Lambda 
        &= \log (1 + 1/\tau).
    \end{align}

\end{corollary}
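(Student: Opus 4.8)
The cleanest route exploits the fact that geometric inter-dump times are equivalent to an i.i.d.\ Bernoulli dump-attempt process. Since $\E{D}=\tau$ forces the underlying Bernoulli dump probability to be $\mu = 1/\tau$, the attempted-dump sequence $U^n = (U_1,\ldots,U_n)$ consists of i.i.d.\ $\Bern(\mu)$ entries, so $\mathrm{wt}(U^n) = K_n \sim \Binomv(n,\mu)$. Plugging into the RAD leakage formula of Theorem~\ref{thm: RADleakage},
\begin{align}
    \L(X^n\to Y^n) = \log\bracks*{\E{2^{K_n}}} = \log\parens*{\bracks[\big]{(1-\mu) + 2\mu}^n} = \log\parens[\big]{(1+\mu)^n} = n\log(1+\mu),
\end{align}
using the binomial theorem for $\E{2^{K_n}} = \sum_{k=0}^n \binom{n}{k}\mu^k(1-\mu)^{n-k}2^k = (1-\mu+2\mu)^n$. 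Dividing by $n$ and substituting $\mu = 1/\tau$ gives $\Lambda = \log(1 + 1/\tau)$ exactly (not merely asymptotically), which is the claim.

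\textbf{Alternative route.} One can instead invoke Theorem~\ref{thm:RADleakagerate}, which reduces the problem to finding the unique positive real root $z_0$ of $\E{z^{-D}} = 1/2$. For $g(d) = (1-\mu)^{d-1}\mu$ one computes the geometric series $\E{z^{-D}} = \mu \sum_{d\ge 1}\parens*{\frac{1-\mu}{z}}^d \cdot \frac{1}{1-\mu} \cdot (1-\mu) = \frac{\mu}{z-(1-\mu)}$, valid whenever $z > 1-\mu$. Setting this equal to $1/2$ and solving yields $z_0 = 1+\mu = 1 + 1/\tau$, which indeed satisfies $z_0 > 1 > 1-\mu$ and is positive real, hence is the root guaranteed by Lemma~\ref{lemma:unique}; then $\Lambda = \log z_0 = \log(1+1/\tau)$.

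\textbf{Anticipated obstacle.} There is no substantive difficulty here; the only point requiring care is that the RAD analysis leading to Theorem~\ref{thm:RADleakagerate} was set up under a finite-support assumption on $g(d)$, which the geometric distribution violates. The first route sidesteps this entirely by applying Theorem~\ref{thm: RADleakage} (which holds for any inter-dump PMF) directly, so I would lead with it; if one prefers the second route, it suffices to note the closed form $M(z) = F(z)/(1-2G(z))$ remains valid for the geometric case since all the relevant series converge in a neighborhood of $z_0$, or to take a limit of truncated geometrics. Beyond that, the remaining work is the routine geometric-series and binomial-theorem manipulations sketched above.
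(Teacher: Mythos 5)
Your primary route is correct and is essentially identical to the paper's proof in Appendix~\ref{app:RADgeo}: it rewrites the geometric inter-dump process as i.i.d.\ $\Bern(1/\tau)$ dump attempts, applies Theorem~\ref{thm: RADleakage}, and uses the binomial theorem to get $\E{2^{K_n}} = (1+\mu)^n$. Your alternative route via Theorem~\ref{thm:RADleakagerate} (solving $\E{z^{-D}}=1/2$ to get $z_0=1+\mu$) is also correct, and you rightly flag the finite-support caveat, which is why the paper leads with the direct computation as well.
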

The proof is in Appendix~\ref{app:RADgeo}.

The DAD policy's leakage rate is $\Lambda_{\mathrm{DAD}} = 1/\tau$. For the Geometric RAD policy, the rate is $\Lambda_{\mathrm{Geo}} = \log(1 + 1/\tau)$. We can compare these two for all $\tau \ge 1$:
\begin{itemize}
    \item At $\tau=1$, both policies are identical (a dump attempt in every slot), and their leakage rates are equal.
    \item For all $\tau > 1$, the variable $x = 1/\tau$ is in the interval $(0, 1)$. We have $\log(1+x) > x$ for all $x \in (0,1)$.
\end{itemize}
This implies $\Lambda_{\mathrm{Geo}} > \Lambda_{\mathrm{DAD}}$ for all $\tau > 1$. Thus, for any mean inter-dump time $\tau > 1$, the Geometric RAD policy always has a strictly higher (worse) leakage rate than the deterministic DAD policy.

\subsubsection{RAD with Uniform Dumps}
For this policy, the inter-dump time $D$ is drawn from a discrete uniform distribution over the set $\{1, 2, \ldots, k\}$ for some integer $k \ge 1$. To achieve a mean inter-dump time of $\E{D} = (k+1)/2 = \tau$, we must have $k = 2\tau - 1$. This implies that this policy is only defined for $\tau$ values that are integers or half-integers (i.e., $\tau \in \{1, 1.5, 2, 2.5, \ldots\}$).

\begin{corollary}[Uniform RAD Leakage]
\label{cor:RADunifleakage}
    If the server employs a RAD policy where the time between dump attempts is uniformly distributed with mean $\E{D}=\tau$, the asymptotic MaxL rate is given by $\Lambda = \log(z_0)$, where $z_0$ is the unique positive real root of the equation
    \begin{align}
        \frac{1- z_0^{-(2\tau-1)}}{(2\tau-1)(z_0 - 1)} = \frac{1}{2}.
    \end{align}
\end{corollary}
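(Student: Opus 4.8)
The plan is to apply Theorem~\ref{thm:RADleakagerate}: since the inter-dump time $D$ is uniform on $\{1,2,\ldots,k\}$ with $k = 2\tau-1$, the asymptotic leakage rate is $\Lambda = \log(z_0)$ where $z_0$ is the unique positive real root of $\E{z^{-D}} = 1/2$. So the only real work is to evaluate $\E{z^{-D}}$ for the uniform distribution and rewrite the defining equation in the stated closed form.

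First I would write $\E{z^{-D}} = \frac{1}{k}\sum_{d=1}^{k} z^{-d}$. This is a finite geometric sum: with $w = z^{-1}$, it equals $\frac{1}{k}\sum_{d=1}^{k} w^d = \frac{1}{k}\cdot \frac{w(1-w^k)}{1-w}$. Substituting back $w = 1/z$ and simplifying the compound fraction (multiplying numerator and denominator by $z$) gives $\E{z^{-D}} = \frac{1}{k}\cdot\frac{1 - z^{-k}}{z - 1}$. Setting this equal to $1/2$ and substituting $k = 2\tau - 1$ yields
\begin{align}
    \frac{1 - z_0^{-(2\tau-1)}}{(2\tau-1)(z_0 - 1)} = \frac{1}{2},
\end{align}
which is exactly the claimed equation.

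I would then note the one edge case: the algebraic simplification above divides by $z - 1$, which is invalid at $z = 1$. One should check separately that $z = 1$ is not the relevant root — indeed $\E{1^{-D}} = 1 \neq 1/2$, so $z_0 \neq 1$ and the manipulation is valid at $z_0$. (Equivalently, the original equation $\frac{1}{k}\sum_{d=1}^k z^{-d} = 1/2$ at $z=1$ reads $1 = 1/2$, false.) Existence and uniqueness of the positive real root $z_0$, and the fact that it is the dominant root, are already guaranteed by Lemma~\ref{lemma:unique}, since the uniform distribution on $\{1,\ldots,k\}$ has finite support; so no separate argument is needed there.

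There is essentially no obstacle here — the result is a direct corollary of Theorem~\ref{thm:RADleakagerate} combined with a one-line geometric-series computation. The only point requiring minor care is the bookkeeping in clearing the nested fraction and handling the $z=1$ singularity of the simplified form, but this is routine. For completeness one might also remark that $z_0 > 1$ (so $\Lambda = \log z_0 > 0$), which follows because $\E{z^{-D}}$ is strictly decreasing in $z$ with value $1$ at $z=1$ and limit $0$ as $z\to\infty$, hence it crosses $1/2$ at some $z_0 > 1$.
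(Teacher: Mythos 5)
Your proof is correct and follows the same route as the paper: invoke Theorem~\ref{thm:RADleakagerate}, evaluate $\E{z^{-D}}$ as a finite geometric sum, substitute $k = 2\tau-1$, and set it equal to $1/2$. The extra remarks about the $z=1$ singularity and $z_0 > 1$ are sound but not part of the paper's (terse) argument.
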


The proof is in Appendix~\ref{app:RADunifleakage}.
 
\section{Optimal Decoupled Policy}
\label{sec:optimaldecoupled}

Here  we find the optimal dump policy, which is characterized by the pmf $g(d)$ of the attempted inter-dump time $D$, that minimizes the age for a given fixed asymptotic MaxL rate. The leakage rate constraint is given by $\L _\mathrm{rate}(g) = \Lambda$. From Theorem~\ref{thm:RADleakagerate}, a RAD policy with PMF $g(d)$ achieves a target leakage rate of $\Lambda$ when $z_0= 2 ^ \Lambda$ satisfies the condition
\begin{equation}
    \E{z_0^{-D}} = \sum_{d=1}^{\infty} g(d) z_0^{-d} = \frac{1}{2}.
\end{equation}
 Since Theorem~\ref{thm: RAD age} implies minimizing $\Delta(g)$ is equivalent to minimizing the fractional term $\E{D^2}/\E{D}$, the optimization problem is
\begin{subequations} \eqnlabel{eq:OptimalDumpProblem}
\begin{align}
    \gamma^* = \min_{g(d)} \quad &  \frac{\E{D^2}}{\E{D}}  \eqnlabel{eq:Objective} \\
    \text{s.t.} \quad & \sum_{d=1}^{\infty} g(d) z_0^{-d} = \frac{1}{2} \eqnlabel{eq:LeakageConstraint} \\
    & \sum_{d=1}^{\infty} g(d) = 1, \quad g(d) \ge 0 .\eqnlabel{eq:ProbabilityConstraint}
\end{align}
\end{subequations}

We will see that the solution to \eqnref{eq:OptimalDumpProblem} is an optimal decoupled policy $g^*(d)$ that we call  \textit{Dithering DAD} (D-DAD) because it dithers between the DAD policy with inter-dump time $\floor{1/\Lambda}$ that leaks too much and the DAD policy with inter-dump time $\ceiling{1/\Lambda}$ that satisfies the leakage constraint but consequently suffers an age penalty. 

\begin{theorem}[Dithering DAD (D-DAD) Policy]
\label{thm:optimaldadts}
    Given a target leakage rate $\Lambda > 0$, the decoupled policy $g^*$ that minimizes the average AoI $\Delta_{\mathrm{RAD}}(g)$  is supported only at $i = \floor{1/\Lambda}$ and $j=i + 1$. The policy is zero everywhere else. With $z_0 = 2^\Lambda$, the non-zero probabilities $p_i = g^*(i)$ and $p_j = g^*(j)=1-p_i$ are found by solving $p_i z_0^{-i} + p_j z_0^{-j} = 1/2$.     
\end{theorem}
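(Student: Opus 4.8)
The plan is to treat \eqnref{eq:OptimalDumpProblem} as a linear-fractional program, apply a Dinkelbach reduction to obtain a parametric \emph{linear} program, and then read the support of the minimizer off a convex dual certificate. Since $\E{D}\ge 1>0$ on the feasible set, for each $\gamma$ put $\psi(\gamma)\triangleq\min_g\{\E{D^2}-\gamma\E{D}\}$ subject to \eqnref{eq:LeakageConstraint} and \eqnref{eq:ProbabilityConstraint}. Then $\psi$ is continuous and strictly decreasing, its unique root $\gamma^*$ equals the optimal value of \eqnref{eq:OptimalDumpProblem}, and any $g$ attaining $\psi(\gamma^*)=0$ attains the minimum of $\E{D^2}/\E{D}$, hence of $\Delta_{\mathrm{RAD}}$ by Theorem~\ref{thm: RAD age}. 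By the finite-support assumption on $g$, the inner problem $\min_g\sum_d g(d)(d^2-\gamma^* d)$ subject to $\sum_d g(d)z_0^{-d}=\tfrac12$, $\sum_d g(d)=1$, $g\ge 0$ is a finite linear program with exactly two equality constraints; its minimum is attained (by compactness of the feasible set) at a basic feasible solution, which has at most two nonzero coordinates. Thus $|\supp(g^*)|\le 2$.

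To identify which two atoms, I pass to the LP dual. Dualizing the two equalities with multipliers $\nu$ (leakage) and $\eta$ (normalization), dual feasibility forces $\phi(d)\triangleq d^2-\gamma^* d-\nu z_0^{-d}-\eta\ge 0$ for all integers $d\ge 1$, and complementary slackness gives $\phi(d)=0$ on $\supp(g^*)$, i.e. $\supp(g^*)\subseteq\argmin_d\phi(d)$. Eliminating $\eta$, the dual is $\max_\nu\bigl[\nu/2+\rho(\nu)\bigr]$ with $\rho(\nu)\triangleq\min_d\bigl(d^2-\gamma^* d-\nu z_0^{-d}\bigr)$, a concave piecewise-linear function whose slope on the piece where index $d^*(\nu)$ attains the inner minimum is $1/2-z_0^{-d^*(\nu)}$. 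Since $d^*(\nu)$ is nonincreasing in $\nu$ and $1/2-z_0^{-d}>0$ exactly when $d>\log_{z_0}2=1/\Lambda$, i.e. when $d\ge i+1$ with $i\triangleq\floor{1/\Lambda}\ge 1$ (feasibility of the target forces $\Lambda\le 1$, since $D\ge 1$ gives $\E{z_0^{-D}}\le z_0^{-1}$), the dual maximum sits at a breakpoint $\nu^*$ whose active indices are some $d_-\le i$ and $d_+\ge i+1$. Solving $\phi(d_-)=\phi(d_+)$ gives $\nu^*=(d_+-d_-)\bigl[(d_++d_-)-\gamma^*\bigr]/(z_0^{-d_+}-z_0^{-d_-})<0$, since $d_++d_-\ge 2i+1>\gamma^*$ — the bound $\gamma^*\le(i+1)+\tfrac{1}{4i}<2i+1$ comes from evaluating $\E{D^2}/\E{D}=\E{D}+\mathrm{Var}(D)/\E{D}$ on any feasible two-atom law on $\{i,i+1\}$, which exists because $z_0^{-(i+1)}\le\tfrac12\le z_0^{-i}$ — while $z_0^{-d_+}-z_0^{-d_-}<0$. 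With $\nu^*<0$, the continuous second derivative $\phi''(d)=2-\nu^*(\ln z_0)^2 z_0^{-d}$ is positive, so $\phi$ is strictly convex on the integers; a strictly convex integer function can vanish at at most two consecutive points, which forces $d_-=i$, $d_+=i+1$, hence $\supp(g^*)\subseteq\{i,i+1\}$.

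Feasibility then closes the argument: a law on $\{i,i+1\}$ can satisfy \eqnref{eq:LeakageConstraint} only if $z_0^{-(i+1)}\le\tfrac12\le z_0^{-i}$, i.e. $z_0^i\le 2\le z_0^{i+1}$, which — using $z_0=2^\Lambda$ — is precisely $i=\floor{1/\Lambda}$ and $j\triangleq i+1=\ceiling{1/\Lambda}$; the two equalities $p_i+p_j=1$ and $p_i z_0^{-i}+p_j z_0^{-j}=\tfrac12$ then determine $p_i,p_j$ uniquely, with $p_i,p_j\in(0,1)$ unless $1/\Lambda\in\N$, in which case $z_0^{-i}=\tfrac12$, $p_j=0$, and the policy collapses to DAD with period $1/\Lambda$. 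I expect the main obstacle to be establishing $\nu^*<0$, equivalently that the dual certificate $\phi$ is convex on the integers: this is what promotes the generic ``at most two atoms'' (from LP basic feasibility) to ``two \emph{adjacent} atoms,'' and it is the only place the specific value $\floor{1/\Lambda}$ enters. A duality-free alternative is a direct exchange argument — starting from an arbitrary feasible two-atom law, show that sliding its lower atom up toward $\floor{1/\Lambda}$ and its upper atom down toward $\ceiling{1/\Lambda}$, each step rescaling the two probabilities so that \eqnref{eq:LeakageConstraint} still holds, strictly decreases $\E{D^2}/\E{D}$ — elementary but requiring a somewhat tedious monotonicity computation.
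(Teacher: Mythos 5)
Your argument is correct in substance and reaches the same conclusion by a genuinely different route. The paper also performs the Dinkelbach reduction and observes that the inner problem is a two-equality LP whose vertices have at most two atoms, but it then changes variables to $x = z_0^{-d}$, proves the transformed cost $f(x)$ is strictly convex (for $\gamma$ in the relevant range), and runs a primal exchange argument: a two-atom law on non-adjacent indices $\{i,k\}$ is beaten by one on $\{i,j\}$ or $\{j,k\}$ because strict convexity orders the chord slopes $m_{ij}>m_{ik}$. You instead pass to the LP dual, use complementary slackness to pin $\supp(g^*)\subseteq\argmin_d\phi(d)$ for the dual certificate $\phi(d)=d^2-\gamma^*d-\nu^*z_0^{-d}-\eta^*$, and show $\nu^*<0$ so that $\phi$ is strictly convex in continuous $d$; a nonnegative strictly convex function vanishing at two integers forces them to be adjacent. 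In both arguments the final step relies on a sign/range check on the optimal multiplier ($\gamma^*$ below the paper's convexity threshold; your $\nu^*<0$), certified by evaluating $\E{D^2}/\E{D}$ on an explicit feasible two-atom law. The dual route is a nice alternative: it makes the "at most two atoms'' and "adjacent atoms'' steps fall out of one object ($\phi$), at the price of needing the $\nu^*<0$ sign computation. One small slip: you assert $d_++d_-\ge 2i+1$, but a priori you only know $d_-\ge 1$ and $d_+\ge i+1$, giving $d_++d_-\ge i+2$; this is not circular-free as written since $d_-\ge i$ is what you are trying to prove. Fortunately the weaker bound suffices, since your own estimate $\gamma^*\le(i+1)+\tfrac{1}{4i}<i+2$ (valid for $i\ge 1$) already gives $\gamma^*<d_++d_-$ and hence $\nu^*<0$. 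With that correction the proof is sound.
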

We note that if $1/\Lambda$ is an integer, then $i = 1/\Lambda$, which implies $z_0^{-i} = 1/2$, and D-DAD  policy reduces to the DAD policy with $p_i=1$ and $p_j=0$. The remainder of this section is dedicated to proving this optimality.

Since the optimization problem in \eqnref{eq:OptimalDumpProblem} has a non-convex fractional objective, we use the Dinkelbach fractional programming method~\cite{dinkelbach1967nonlinear} to transform it into an equivalent average-cost problem. We note that the method requires the inter-dump time to be bounded by a large finite value $d_\mathrm{max}$. We will see that $d_{\max}$ has no impact on the optimization because a policy that puts non-zero probability on a very large inter-dump time is AoI suboptimal.
With the $d_{\max}$ limitation, the Dinkelbach transformation is valid because our problem satisfies a set of key conditions that we verify in Appendix~\ref{app: dinkelbach}. 

We start by defining $J(\gamma) \triangleq \min_g \E{D^2} - \gamma \E{D}$, where $\gamma$ is the Dinkelbach variable. 

The following lemma formally guarantees the equivalence between solving the original problem in \eqnref{eq:OptimalDumpProblem} and finding the specific parameter $\gamma^*$ for which  $J(\gamma^*)= 0$. 

\begin{lemma}[\cite{dinkelbach1967nonlinear}, Theorem]
    The problem of finding $\min_g \{\E{D^2}/\E{D}\}$ is equivalent to finding the value $\gamma^*$ that solves the problem
    \begin{align}
        J(\gamma^*) = \min_{g} \left\{ \E{D^2} - \gamma^* \E{D} \right\} = 0,
    \end{align}
    where the minimization is performed over all PMFs $g(d)$ that satisfy the constraints \eqnref{eq:LeakageConstraint} and \eqnref{eq:ProbabilityConstraint}.
\end{lemma}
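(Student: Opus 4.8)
The plan is to follow Dinkelbach's classical argument for a single-ratio fractional program, specialized to our setting. Write $N(g)\triangleq\E{D^2}$ and $R(g)\triangleq\E{D}$ for the numerator and denominator functionals; both are affine in $g$, and since the inter-dump time satisfies $D\ge 1$ we have $R(g)\ge 1>0$ for every feasible PMF. Let $\mathcal{F}$ be the feasible set of PMFs $g$ supported on $\{1,\dots,d_\mathrm{max}\}$ that satisfy the leakage constraint \eqnref{eq:LeakageConstraint} and the normalization \eqnref{eq:ProbabilityConstraint}. Because these are equality and nonnegativity constraints on a finite-dimensional simplex, $\mathcal{F}$ is compact, and it is nonempty (e.g.\ the two-point dithering policy of Theorem~\ref{thm:optimaldadts} lies in it). Hence $N/R$, a ratio of continuous functions with nonvanishing denominator, attains its minimum on $\mathcal{F}$; call a minimizer $g_0$ and set $\gamma^* = N(g_0)/R(g_0)$.

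The core of the argument is the elementary sign identity: since $R(g)>0$, for every $g\in\mathcal{F}$ and every real $\gamma$,
\[
\frac{N(g)}{R(g)}\ge\gamma \iff N(g)-\gamma R(g)\ge 0,
\]
with the analogous equivalences for ``$>$'' and ``$=$''. Applying the ``$\ge$'' version with $\gamma=\gamma^*$ to every $g\in\mathcal{F}$ gives $N(g)-\gamma^* R(g)\ge 0$, so $J(\gamma^*)\ge 0$; evaluating at $g=g_0$ gives $N(g_0)-\gamma^* R(g_0)=0$, so in fact $J(\gamma^*)=0$. Thus the fractional optimum is \emph{a} root of $J$, and any minimizer of $N-\gamma^* R$ (in particular $g_0$) is optimal for \eqnref{eq:OptimalDumpProblem}.

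To see this root is unique — so that ``find $\gamma^*$ with $J(\gamma^*)=0$'' is unambiguous — I would show $J$ is strictly decreasing. For $\gamma_1<\gamma_2$, let $g_1$ attain $J(\gamma_1)$ (a minimizer exists, again by compactness, since $N-\gamma_1 R$ is continuous). Then
\[
J(\gamma_2)\le N(g_1)-\gamma_2 R(g_1) = J(\gamma_1)-(\gamma_2-\gamma_1)R(g_1)<J(\gamma_1),
\]
using $R(g_1)\ge 1>0$. Conversely, if $\tilde\gamma$ satisfies $J(\tilde\gamma)=0$ with inner minimizer $\tilde g$, then $N(\tilde g)-\tilde\gamma R(\tilde g)=0$ gives $N(\tilde g)/R(\tilde g)=\tilde\gamma$, while $N(g)-\tilde\gamma R(g)\ge 0$ for all $g\in\mathcal{F}$ gives $N(g)/R(g)\ge\tilde\gamma$; together these force $\tilde\gamma=\min_{g\in\mathcal{F}} N(g)/R(g)=\gamma^*$ and show $\tilde g$ is a fractional minimizer. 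This closes the equivalence in both directions.

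The only genuine subtlety — the ``hard part'' — is justifying the existence of the minimizers used above, i.e.\ invoking compactness of $\mathcal{F}$; this is exactly why the bound $d_\mathrm{max}$ is imposed, and the paper argues separately (Appendix~\ref{app: dinkelbach}) that $d_\mathrm{max}$ does not bind the optimum. Everything else is the sign manipulation above, so in the write-up I would keep the proof short and defer the general statement to Dinkelbach~\cite{dinkelbach1967nonlinear}.
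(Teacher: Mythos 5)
The paper does not give its own proof of this lemma: it is cited directly from Dinkelbach~\cite{dinkelbach1967nonlinear}, and the paper's only supporting work is Appendix~\ref{app: dinkelbach}, which merely verifies that the hypotheses of Dinkelbach's theorem (compactness and connectedness of the feasible set, continuity of the numerator and denominator functionals, strict positivity of the denominator) hold in this setting. You instead supply a full, self-contained proof of the Dinkelbach equivalence, and the argument is correct: the sign identity $N(g)/R(g)\ge\gamma\iff N(g)-\gamma R(g)\ge 0$ (valid because $R(g)\ge 1>0$) immediately gives $J(\gamma^*)=0$ in the forward direction; the converse direction, showing any root $\tilde\gamma$ with inner minimizer $\tilde g$ must satisfy $\tilde\gamma=\gamma^*$ and that $\tilde g$ is a fractional optimizer, closes the equivalence; and the strict-monotonicity computation $J(\gamma_2)\le J(\gamma_1)-(\gamma_2-\gamma_1)R(g_1)<J(\gamma_1)$ gives uniqueness of the root. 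Your proof makes the structure transparent — it shows exactly where compactness (existence of the various minimizers) and positivity of $R$ enter — whereas the paper's citation-plus-verification is shorter and appropriately defers a classical fact.

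One small point to tighten: to argue that $\mathcal{F}$ is nonempty, you appeal to ``the two-point dithering policy of Theorem~\ref{thm:optimaldadts}.'' That theorem is proved \emph{using} the present lemma, so referencing it here reads as circular even though you only need existence, not optimality. It is cleaner to exhibit feasibility directly: with $i=\lfloor 1/\Lambda\rfloor$ and $j=i+1$, the two-point PMF $g(i)=p_i$, $g(j)=1-p_i$ determined by $p_i z_0^{-i}+(1-p_i)z_0^{-j}=1/2$ has $p_i\in(0,1]$ because $z_0^{-j}\le 1/2\le z_0^{-i}$, and hence lies in $\mathcal{F}$; no appeal to a later theorem is needed.
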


While Dinkelbach's method typically requires an iterative line search to find the value of $\gamma^*$ that drives $J(\gamma^*)$ to zero, the specific structure of our problem allows for a more direct, analytical solution. 

For any given $\gamma$, we define the cost for each possible dump duration $d$ as
\begin{align}
    C_\gamma(d) \triangleq d^2 - \gamma d.
\end{align}
For any given $\gamma$, we first solve the following problem:
\begin{subequations}
    \begin{align}
    \min_{g(d)} \quad &  \sum_{d=1}^{d_{\max}} g(d) C(d) 
    \\
    \text{s.t.} \quad & \sum_{d=1}^{d_{\max}} g(d) z_0^{-d} = \frac{1}{2}, 
    \\
    & \sum_{d=1}^{d_{\max}} g(d) = 1, \quad g(d) \ge 0.
\end{align}
\end{subequations}

\begin{figure}[t]
   \centering
    \includegraphics[width=0.5\textwidth]{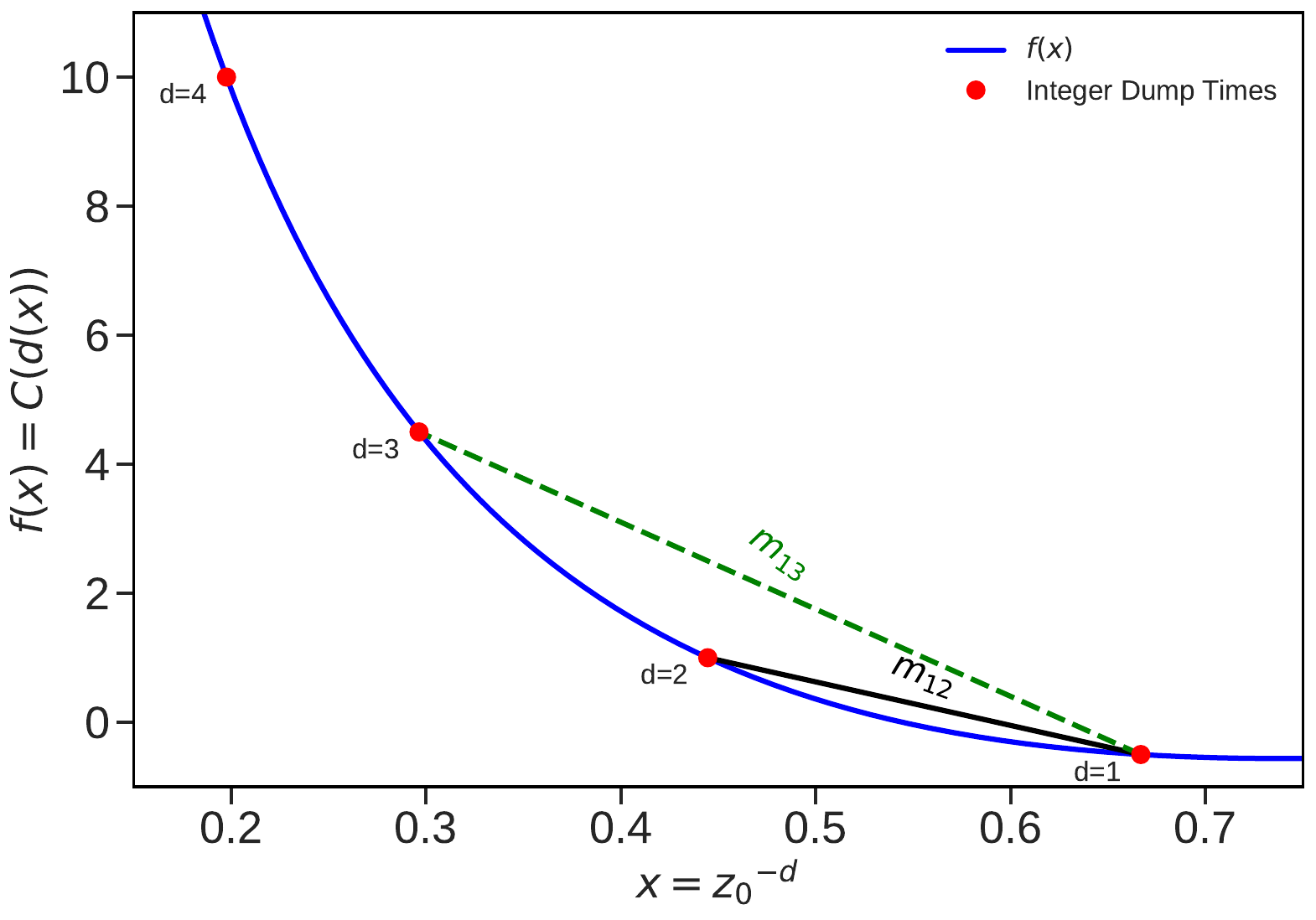}
    \caption{The cost function $f(x)= C(d(x))$ plotted against $x = z_0^{-d}$ for $z_0 = 1.5$ and $\gamma = 1.5$. }
    \label{fvsz}
\end{figure}

We introduce a change of variables from the integer duration $d$ to the real-valued $x = z_0^{-d}$. 
This allows us to to define a continuous cost curve $f(x) \triangleq C(d(x))$, where $d(x) = -\frac{\log x}{\log z_0}$, as illustrated in Fig.~\ref{fvsz}. The original integer durations $d \in \{1, 2, \dots\}$ now correspond to a discrete set of points $\{x_1, x_2, \dots\}$ on this curve, where each 
\begin{align}
    x_i = z_0^{-i} \eqnlabel{xiequals}.
\end{align}
The optimization problem now becomes
\begin{subequations}
\eqnlabel{optnew}
    \begin{align}
    \min_{g(i)} \quad &  \sum_{i=1}^{d_{\max}} g(i) f(x_i) 
    \\
    \text{s.t.} \quad & \sum_{i=1}^{d_{\max}} g(i) x_i = \frac{1}{2} ,\eqnlabel{pmfconstraint}
    \\
    & \sum_{i=1}^{d_{\max}} g(i) = 1, \quad g(i) \ge 0 \eqnlabel{nonnegativec}.
\end{align}
\end{subequations}

Since the objective and the constraints are linear functions of the variables $g(i)$, \eqnref{optnew} is a primal linear program (LP). From the fundamental theorem of linear programming, we have that the optimal solution to a primal LP always occurs at a vertex (corner point) of the feasible region. For a feasible set defined by two equality constraints, a vertex corresponds to a distribution with at most two non-zero probabilities.

Let the distribution $g$ be non-zero at two points corresponding to indices $i$ and $k$, with $i < k$. From the problem constraints \eqnref{pmfconstraint} and \eqnref{nonnegativec}, we have
\begin{subequations}
\eqnlabel{gprob}
    \begin{align}
    g(i) = \frac{\frac{1}{2} - x_k}{x_i - x_k},\\
    g(k) = \frac{x_i - \frac{1}{2}}{x_i - x_k} .
\end{align}
\end{subequations}
Since $g(i)$ and $g(k)$ must be non-negative,
\begin{align}\eqnlabel{x-constraint-sandwich}
    x_i \ge \frac{1}{2} \ge x_k.
\end{align}
Taking logarithm base 2 across the inequality, it follows from \eqnref{xiequals}
\begin{align}
    i \le \frac{1}{\log_2 (z_0)} \le k\eqnlabel{nonnegative}.
\end{align}
We want to find a $\gamma$ such that $J(\gamma)= 0$. For a general $\gamma$, the problem is to find the indices $i$ and $k$ that minimize the expected cost
    \begin{align}
     \min_{i, k} \quad & g(i)f(x_i) + g(k) f(x_k)\nn
    = \min_{i, k} \quad &\frac{(\frac{1}{2} - x_k) f(x_i) + (x_i - \frac{1}{2})f(x_k)}{x_i - x_k}. \eqnlabel{Jgamma}
\end{align}

\begin{lemma}\label{lemma:convexity}
    For $\gamma < 2 + \frac{2}{\log(z_0)}$, $f(x)$ is strictly convex over $[0, z_0^{-1}]$. 
\end{lemma}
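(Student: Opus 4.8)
The plan is to prove strict convexity by a direct computation of $f''(x)$, reducing the desired inequality to an affine condition in $d(x)$ whose worst case is the right endpoint $x = z_0^{-1}$. (One cannot simply invoke ``convex $\circ$ convex'' here: $d(x) = -\log x/\log z_0$ is decreasing and $\phi(t) = t^2 - \gamma t$ is not monotone on the range of $d$, so a derivative argument is genuinely needed.) Writing $f(x) = \phi(d(x))$, the first thing I would record is that, since $\Lambda > 0$ forces $z_0 = 2^\Lambda > 1$, on $(0, z_0^{-1}]$ we have $d'(x) = -1/(x\ln z_0) < 0$ and $d''(x) = 1/(x^2\ln z_0) > 0$, so $d$ is strictly decreasing, strictly convex, and satisfies $d(x)\ge 1$ with equality only at $x = z_0^{-1}$ (since $x \le z_0^{-1}$ is equivalent to $-\log x \ge \log z_0$).

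Next I would differentiate the composition twice: using $\phi'(t) = 2t-\gamma$ and $\phi''(t)=2$,
\begin{align}
f''(x) = \phi''(d(x))\,(d'(x))^2 + \phi'(d(x))\,d''(x) = 2(d'(x))^2 + (2d(x)-\gamma)\,d''(x),
\end{align}
and substituting the expressions above and pulling out the positive factor $1/(x^2\ln z_0)$ shows that $f''(x)$ has the same sign as $2/\ln z_0 + 2d(x) - \gamma$. Because $d$ is strictly decreasing, this affine-in-$d$ quantity is minimized over the interval at $x = z_0^{-1}$, where $d = 1$, giving the value $2/\ln z_0 + 2 - \gamma$; this is strictly positive exactly when $\gamma < 2 + 2/\ln z_0$, which is the stated hypothesis (reading $\log z_0$ in the statement as the natural logarithm that the differentiation produces). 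Hence $f'' > 0$ on $(0, z_0^{-1}]$, and since $d(x)\to\infty$ (so $f(x)\to+\infty$) as $x\downarrow 0$, strict convexity extends to the closed interval $[0, z_0^{-1}]$.

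The hard part is essentially just careful bookkeeping rather than any real difficulty: one must differentiate the composition correctly — the only subtlety being that $\tfrac{d}{dx}\log x$ contributes the factor $1/\ln z_0$, which is precisely what produces the threshold $2 + 2/\log z_0$ in the statement — and one must verify that the minimum of the resulting affine expression over $(0, z_0^{-1}]$ occurs at the endpoint $x = z_0^{-1}$; both are immediate once the reduction $f''(x) \propto 2/\ln z_0 + 2d(x) - \gamma$ is in place.
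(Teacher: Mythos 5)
Your proof is correct and follows essentially the same route as the paper's: compute $f''$ via the chain rule, extract the positive factor $1/(x^2\ln z_0)$, and observe that the remaining affine-in-$d(x)$ (equivalently, affine-in-$\log x$) quantity is tightest at the endpoint $x = z_0^{-1}$, yielding the threshold $\gamma < 2 + 2/\ln z_0$. The added remarks — that convex-of-convex does not apply because $\phi$ is not monotone, and that the paper's $\log$ in this lemma must be read as the natural logarithm — are accurate but inessential clarifications.
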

The proof is in Appendix~\ref{app:convexity}.

\noindent \textbf{Claim:} For any $\gamma$ that satisfies Lemma \ref{lemma:convexity}, the optimal policy that minimizes $J(\gamma)$ is supported on two consecutive integers, with indices $i = \left\lfloor \frac{1}{\log_2(z_0)}\right\rfloor$ and $j = \left\lceil\frac{1}{\log_2(z_0)}\right\rceil$. 

\begin{proof}
    The proof follows by contradiction. Assume the optimal policy that minimizes $J(\gamma^*)$ is supported on two non-consecutive integers, $i$ and $k$, where $k > i + 1$. Let $j$ be an integer such that $i< j < k$.  
    The cost of this assumed optimal policy is 
    \begin{align}
        J_{ik}(\gamma) &= \frac{\left(\frac{1}{2} - x_k\right)f(x_i) + \left(x_i - \frac{1}{2}\right)f(x_k)}{x_i - x_k}.
        \eqnlabel{Jik-cost}
    \end{align}
     
Since $i < j < k$ and $x(d)$ is strictly decreasing, we have $x_i > x_j > x_k$. The constraint value $1/2$ from \eqnref{x-constraint-sandwich} must lie in either $(x_i,x_j]$ or $(x_j,x_k]$. Let's assume the case where $x_i > 1/2 \ge x_j$ (The other case, $x_j > 1/2 \ge x_k$, follows in the same way by comparing $J_{jk}$ and $J_{ik}$). The condition $x_i > \frac{1}{2} \ge x_j$ allows us to construct a new feasible two-point policy on the consecutive integers $(i, j)$ that satisfies the constraints \eqnref{gprob} and \eqnref{nonnegative}. The cost of this new policy is 
\begin{align}
    J_{ij}(\gamma) &= \frac{\left(\frac{1}{2} - x_j\right)f(x_i) + \left(x_i - \frac{1}{2}\right)f(x_j)}{x_i - x_j}.
    \eqnlabel{Jij-cost}
\end{align}

To compare the costs $J_{ik}$ and $J_{jk}$,  we define the slope of the line segment between the points $i$ and $k$, and the point $j$ and $k$ as follows:
\begin{align}
m_{ik} &\triangleq \frac{f(x_k) - f(x_i)}{x_k - x_i},\\
    m_{ij} &\triangleq \frac{f(x_j) - f(x_i)}{x_j - x_i}.
\end{align}
We re-write the cost expressions $J_{ij}(\gamma)$ and $J_{ik}(\gamma)$ in terms of these slopes. Starting with $J_{ik}$ in \eqnref{Jik-cost}, we obtain
\begin{align}
    J_{ik}(\gamma) &= \frac{\left(\frac{1}{2} - x_k\right)f(x_i) + \left(x_i - \frac{1}{2}\right)f(x_k)}{x_i - x_k}\\
    &= \frac{f(x_i) (x_i - x_k) + \frac{1}{2} f(x_i) + x_i f(x_k) - \frac{1}{2} f(x_k) - x_i f(x_i)}{x_i - x_k}\\
    & = f(x_i) + \frac{\left(\frac{1}{2} - x_i\right)(f(x_i) - f(x_k))}{x_i - x_k}\\
    &= f(x_i) + \left(\frac{1}{2} - x_i\right) m_{ik}.
\end{align}

Similarly we can write $J_{ij}$ in \eqnref{Jij-cost} in terms of $m_{ij}$, yielding
\begin{align}
    J_{ij}(\gamma) &=f(x_i) + \left(\frac{1}{2} - x_i\right) m_{ij}.
\end{align}

Strict convexity of the curve $f(x)$ implies $m_{ij} > m_{ik}$.  Also  $\left(\frac{1}{2} - x_i\right)$ is negative. Thus $J_{ij}(\gamma) < J_{ik} (\gamma) $. This contradicts our initial hypothesis that the policy on $(i, k)$ was optimal. The same logic applies if  $x_j > 1/2 \ge x_k$. Therefore the optimal policy cannot be supported on non-consecutive integers. 
\end{proof}

We have proven that the optimal policy must be supported on two consecutive integers, $(i, i + 1)$. It then follows  from \eqnref{nonnegative} that
$i \le \frac{1}{\Lambda} < i + 1$. This uniquely determines the optimal support integers as
\begin{equation}\eqnlabel{ijsandwich}
    i = \left\lfloor \frac{1}{\Lambda}\right\rfloor, \quad
    j = \left\lceil \frac{1}{\Lambda}\right\rceil.
\end{equation}

If $1/\Lambda$ is an integer, then $i = j =1/\Lambda$ and the policy becomes the DAD policy. 

Now, we must find the specific value $\gamma^*$ that satisfies the Dinkelbach condition $J(\gamma^*) = 0$. From the preceding proof, we know that for a given $\gamma$, the optimal policy that minimizes $J(\gamma)$ is supported on two consecutive integers, $i = \lfloor 1/\Lambda \rfloor$ and $j = \lceil 1/\Lambda \rceil$. Therefore, we need to find $\gamma^*$ such that
\begin{align}
    J(\gamma^*) = \min_g \left\{ \E{D^2} - \gamma^* \E{D} \right\} = g(i)(i^2 - \gamma^* i) + g(j)(j^2 - \gamma^* j) = 0,
\end{align}
where $g(i)$ and $g(j)$ are given by \eqnref{gprob}.
Rearranging the terms to solve for $\gamma^*$, we get
\begin{align}
    \gamma^* &= \frac{g(i)i^2 + g(j)j^2}{g(i)i + g(j)j}.
\end{align}
Finally, we need to show that this optimal $\gamma^*$ satisfies the convexity condition from Lemma \ref{lemma:convexity}. 
We write
\begin{align}
    \gamma^* - i &= \frac{g(i)i^2 + g(j)j^2}{g(i)i + g(j)j} - i = \frac{g(j)j(j - i)}{\E{D}}.
\end{align}
Since $g(j) > 0$, $j > 0$, and $j > i$, $\gamma^* - i > 0$, which implies $\gamma^* > i$. 
Similarly, 
\begin{align}
    \gamma^* - j &= \frac{g(i)i^2 + g(j)j^2}{g(i)i + g(j)j} - j 
= \frac{g(i)i(i - j)}{\E{D}}.
\end{align}
Since $g(i) > 0$, $i \ge 1$, and $i < j$,  $\gamma^* - j < 0$, implying $\gamma^* < j$. Thus, we have shown that $i < \gamma^* < i+1$. This establishes that the optimal value of the Dinkelbach variable $\gamma^*$ satisfies
\begin{align}
    \left\lfloor \frac{1}{\log(z_0)}\right\rfloor \le \gamma^* 
    <  \left\lceil \frac{1}{\log(z_0)}\right\rceil.
\end{align}
This value of $\gamma^*$ is guaranteed to satisfy the convexity condition in Lemma \ref{lemma:convexity}. 
The optimal policy is therefore uniquely determined by the probabilities $g(i)$ and $g(j)$ on the support points $i = \lfloor 1/\Lambda \rfloor$ and $j = \lceil 1/\Lambda \rceil$.

To align with the parameterization of our other policies, we define $\tau \triangleq \E{D}$ as the mean inter-dump time. Since $\tau$ must lie between the support integers, we have $i = \lfloor \tau \rfloor$ and $j = \lceil \tau \rceil$.  

Let $p_j$ be the probability of dumping at the later time $j$, and $p_i = 1-p_j$ be the probability of dumping at the earlier time $i$. The mean inter-dump time is 
\begin{align}
    \tau = \E{D} = i  p_i + j  p_j = i(1-p_j) + (i+1)p_j = i + p_j.
\end{align}
This gives an expression for the probabilities in terms of the mean $\tau$, as follows
\begin{align}
    p_j = \tau - i = \tau - \lfloor \tau \rfloor \quad \text{and} \quad p_i =1-(\tau-\floor{\tau}).
\end{align}
We now state the specific age and leakage for this optimal policy.

From Theorem~\ref{thm:RADleakagerate}, the D-DAD policy has  asymptotic MaxL rate 
\begin{align}
    p_i z_0^{-i} + p_j z_0^{-j} = \frac{1}{2},
\end{align}
which has a solution with $0<p_i\le 1$ if \eqnref{ijsandwich} is satisfied.

The average age for the RAD policy in Theorem~\ref{thm: RAD age} can also be expressed in terms of the mean and variance of the inter-dump time $D$ as 
\begin{align}
    \E{A_{m}(n)} &= \frac{1}{\lambda}+ \frac{\E{D}}{2} + \frac{\mathrm{Var}(D)}{2\E{D}} + \frac{1}{2} \eqnlabel{Radnewage}.
\end{align}

We substitute $\E{D} = \tau$. The variance for this two-point distribution is $\mathrm{Var}(D) = p_i p_j$.

Plugging these into \eqnref{Radnewage} yields the average age for the D-DAD policy as
\begin{align}
    \E{A_m(n)}
    &= \frac{1}{\lambda} + \frac{\tau}{2} + \frac{p_i p_j}{2 \tau} + \frac{1}{2}.
\end{align}

\section{Numerical results}
\label{results}

In this section, we present numerical results to illustrate and compare the age-leakage trade-offs for the service and dump policies that we have analyzed. While the previous sections derived the mathematical trade-offs in terms of the leakage rate $\Lambda$ (in bits/slot), we now re-frame the results using its reciprocal, which we define as the average leakage time:
\begin{align}
    \TMaxL \triangleq \frac{1}{\Lambda}.
\end{align}
This metric, $\TMaxL$, represents the average number of slots required to leak one bit of timing information. This allows us to compare timeliness (age) and privacy-efficiency (leakage time) in the same units (slots). A policy that leaks less timing information about the source will have a larger $\TMaxL$.

We will plot our results in the $(\Delta, \TMaxL)$ space. To benchmark these trade-offs, we first define a baseline based on the  \textit{zero-delay server} which transmits each arriving update immediately in the slot that it becomes available for transmission. We note the FCFS system, the LCFS system, and the dumping server all become zero-delay servers when $g(1)=1$, or equivalently when $\tau=1$.  The zero-delay server policy has an age of $\Delta_1 = 1 + 1/\lambda$, a leakage rate of $\Lambda=1$ bit/slot, and thus a leakage time of $\TMaxL=1$ slot. All non-zero-delay servers must sacrifice timeliness with age $\Delta > \Delta_1$ to gain privacy-efficiency with $\TMaxL > 1$.
This framework allows us to define the key performance metric for any policy, its \textit{Server Efficiency}, denoted by $\eta$. This is the slope of the line connecting the baseline point $(\Delta_1, 1)$ to any other point $(\Delta, \TMaxL)$ on the trade-off curve and is given by
\begin{align}
    \eta \triangleq  \frac{\TMaxL - 1}{\Delta - \Delta_1}.\footnotemark
\end{align}
\footnotetext{Policies like DAD are defined only for integer parameters, resulting in a discrete set of achievable $(\Delta, \TMaxL)$ points.}
A policy with a steeper slope is more efficient, as it provides a greater privacy gain for each unit of age sacrificed. 
For practical systems, where a single time slot might be 1 ms or less, the relevant operational regime could involve  average service times of $\tau = 100$ slots or perhaps even higher. We now proceed to compare our policies using this framework.

Our numerical evaluations will also include comparisons with an  FCFS benchmark system. In particular, the  leakage characteristic of the FCFS system is specified by Theorem \ref{thm:max_leakageFCFSLCFS} and is identical to the preemptive LCFS policy  and AoI analysis of these systems is summarized in the following subsection. 

\subsection{FCFS Benchmark Policy}
\label{sec:FCFSbenchmark}
The average AoI for the FCFS Ber/G/1 queue was studied by Tripathi et al.~\cite{tripathi2019age}:
\begin{align}
	\Delta_{\mathrm{FCFS}}(g)= 1+ \E{S}
		+ \frac{\lambar(1-\lambda\E{S})}{\lambda M_g(\lambar)}
		+ \frac{\lambda(\E{S^2}-\E{S})}{2(1-\lambda\E{S})} \eqnlabel{FCFStripathi}, 
\end{align}
where
\begin{align}
    M_g(\lambar) &= \sum_{s=1}^\infty g(s) \lambar^s.
\end{align}
A specific instance of this class that is robust to low service rates is the \textit{Memoryless with Bernoulli Thinning} (MBT) policy, which we introduced in our prior work~\cite{NityaPrivacy}. In this policy, the server admits each arriving update into the FCFS queue with probability $\alpha$ and admitted arrivals have memoryless (i.e. geometric) service times. The thinning results in an effective arrival rate of $\lambda_{\mathrm{eff}} = \alpha \lambda$ that ensures queue stability ($\lambda_{\mathrm{eff}} \E{S} < 1$) even when high privacy requires a low service rate. Since the service time is geometric with parameter $\mu$, the age is given by {\cite[Eq.~10]{NityaPrivacy}
\begin{align}
    \Delta_{\mathrm{MBT}}(\alpha, \mu) = \frac{1}{\alpha\lambda} + \frac{1}{\mu} + \frac{\alpha ^ 2\lambda^2 (1-\mu)}{\mu^2(\mu - \alpha \lambda)}.
\end{align}
The thinning strategy can be applied to any service distribution $g(s)$. While \eqnref{FCFStripathi} enables AoI evaluation of any service distribution $g(s)$, the optimal service distribution in the FCFS SMP class remains unresolved because of the  non-linear dependence of the FCFS age $\Delta_{\mathrm{FCFS}}(g)$ on the distribution $g(s)$. However, our numerical evidence does suggest that the same ``greedy'' distribution from Theorem~\ref{thm:optimal_pmfLCFS} may also be optimal for the FCFS case. Therefore, in Section~\ref{sec:coupledresults} we plot the performance of an FCFS policy employing this greedy service distribution to serve as a benchmark for the FCFS policy class.

\subsection{Performance of Coupled Policies}
\label{sec:coupledresults}

We first validate our analytical finding from Theorem~\ref{thm:optimal_pmfLCFS} that the optimal LCFS coupled policy with service PMF $g(s)$ must have a minimum service time $s_{\min}(g)=1$. Fig.~\ref{optimalsmin} plots the best achievable privacy-age trade-off in the $(\Delta, \TMaxL)$ space for policies constrained to $s_{\min}(g)=1$, $s_{\min}(g)=2$, and $s_{\min}(g)=3$. For each curve, we use the optimal ``greedy'' SMP distribution structure derived in Theorem~\ref{thm:optimal_pmfLCFS}, shifted to start at the respective $s_{\min}$. The corresponding average age is calculated using the exact analytical formula from Corollary~\ref{cor:lcfsage}. The leakage time $\TMaxL = 1/\Lambda$ is derived from the asymptotic leakage rate $\Lambda$. For the $s_{\min}(g)=1$ case, we use the closed-form expression $\Lambda = \log_2(1 + \MLC)$, while for $s_{\min}(g) > 1$, $\Lambda$ is calculated numerically using \eqnref{smpleakage} for $n=10000$. The results show that the $s_{\min}(g)=1$ curve is strictly higher than all others, confirming that any coupled LCFS policy with $s_{\min}(g)>1$ is suboptimal. 

\begin{figure}[t]
    \centering
    \includegraphics[width=0.5\textwidth]{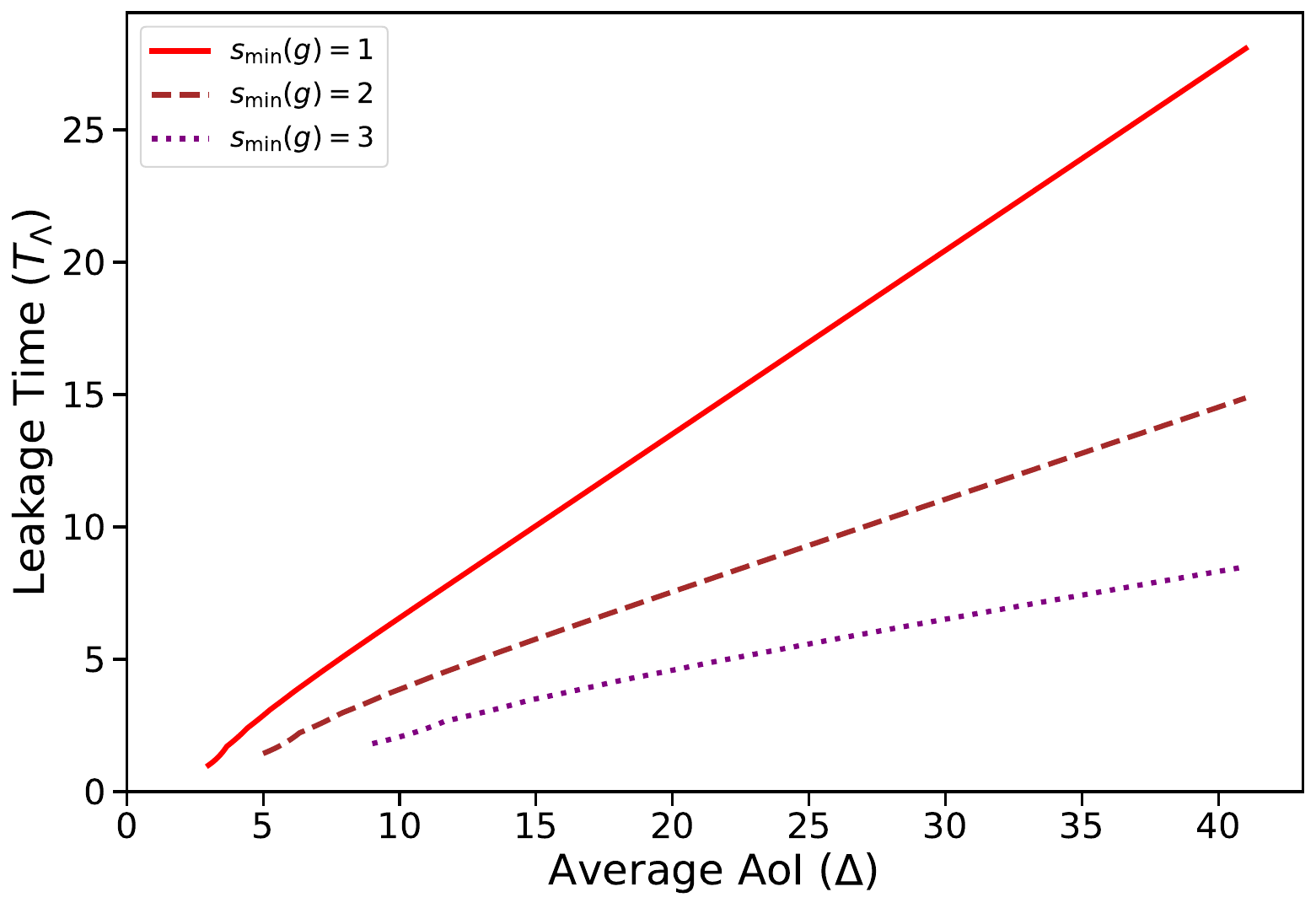}
    \caption{Comparison of the optimal age-leakage trade-off in the $(\Delta, \TMaxL)$ space for LCFS Coupled Policies. Each curve is generated by applying the optimal ``greedy" SMP distribution from Theorem~\ref{thm:optimal_pmfLCFS} for a given leakage constraint $\MLC= g(s_{\min}(g))$. The constraint $\MLC$ is varied in the range $[0.025, 1.0]$ for $s_{\min}(g)=1$, $[0.05, 1.0]$ for $s_{\min}(g)=2$, and $[0.1, 1.0]$ for $s_{\min}(g)=3$. The source arrival rate is $\lambda = 0.5$. Leakage rate for $s_{\min}(g) > 1$ is calculated numerically for $n=10000$ time slots.}
    \label{optimalsmin}
\end{figure}

Fig.~\ref{queueing} compares the age-leakage trade-offs for the FCFS and LCFS coupled policies in the $(\Delta, \TMaxL)$ space, for a source arrival rate of $\lambda = 0.5$. The plot evaluates five distinct policies: the LCFS Greedy (Optimal SMP) and LCFS (Geo) policies, two ``thinned" FCFS policies: MBT and FCFS Greedy with Bernoulli thinning (FCFS GBT), and the unthinned  FCFS Greedy policy. For the thinned policies, the server admits updates into the FCFS queue with probability $\alpha$. For every point on their respective curves (fixed service parameters $\mu$ or $\MLC$), the age is calculated by numerically optimizing $\alpha$ to minimize the age expression~\eqnref{FCFStripathi}. 

For both FCFS and LCFS, we plot the performance of a policy with geometric service times against  the greedy service time distribution, which is shown in Theorem \ref{thm:optimal_pmfLCFS} to be optimal for the SMP class of coupled LCFS systems. The optimal curve is plotted by varying the leakage constraint $\MLC = g(1)$. For each value of $\MLC$, the MaxL rate is determined as $\Lambda = \log(1 + \MLC)$, and the corresponding minimum age is computed using the LCFS age formula from Corollary~\ref{cor:lcfsage} with the uniquely determined greedy PMF. In the geometric LCFS case, varying the service rate $\mu=1/\tau$ controls the operating point on its curve, with the MaxL rate given by $\Lambda = \log(1 + 1/\tau)$ and the age by $\Delta = 1/\lambda + \tau$. The leakage rate for each FCFS curve is generated similarly though the age $\Delta_{\mathrm{FCFS}}(g)$ is calculated using \eqnref{FCFStripathi}.
\begin{figure}[t]
    \centering
    \includegraphics[width=0.5\textwidth]{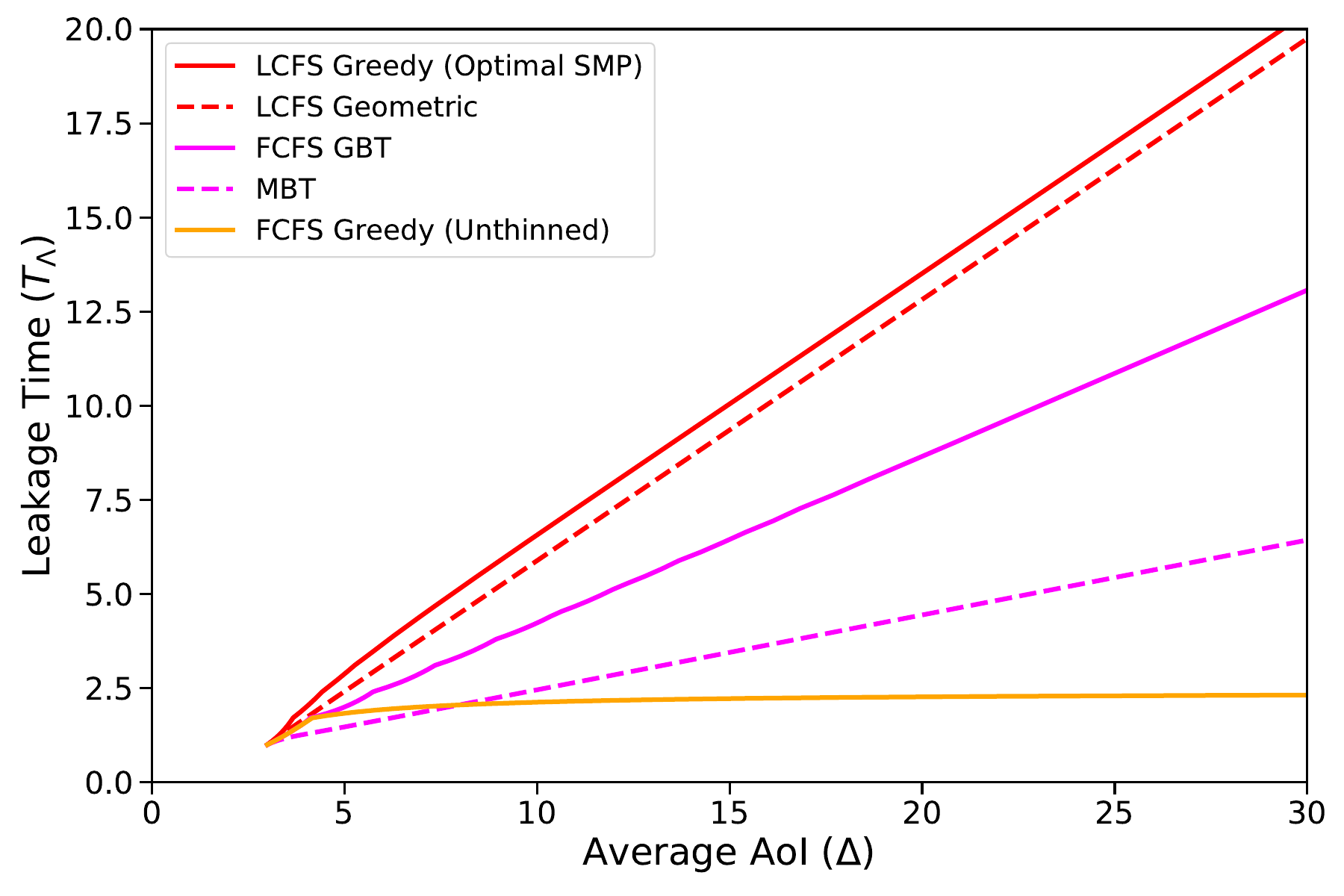}
    \caption{Leakage time as a function of the average AoI for Coupled Policies with $\lambda = 0.5$. The plot compares LCFS policies against thinned and unthinned FCFS policies. For the greedy policies, the leakage constraint $\MLC$ is varied from 0.1 to 1. For the geometric policies, the service rate $\mu=1/\tau$ is varied from 0.1 to 1. The thinned policies optimize the admission probability $\alpha$ at every point on the curve to minimize the age for that specific service rate. }
    \label{queueing}
\end{figure}

As shown in the Fig.~\ref{queueing} plot, the left-most point for all policies is the zero-delay server baseline. This point, $(\Delta, \TMaxL) = (3, 1)$, corresponds to the privacy constraint ($\MLC=1$ or $\mu=1$), where the age is $\Delta_1 = 1 + 1/\lambda = 3$ and the leakage time is $\TMaxL=1$. As we sacrifice timeliness (age increases to the right), a more efficient policy will have a higher leakage time $\TMaxL$. The results 
demonstrate the significant advantage of the LCFS discipline. On this $(\Delta, \TMaxL)$ plot, the LCFS curves are strictly higher than the FCFS curves. This shows that LCFS policies are far more efficient. For any given age $\Delta$, they provide a much greater leakage time $\TMaxL$. An analysis of the asymptotic slopes of the $(\Delta, \TMaxL)$ curve in the  high age regime, as  $\Delta\to \infty$ reveals a key difference between the LCFS and FCFS policies. Both the LCFS policies, have an asymptotic slope of $\ln(2)$ indicating that increasing the AoI by one time unit increases the leakage time by $\ln(2)$ units of time. As the performance difference between the LCFS policies is a constant age offset, they can be interpreted as having equal efficiency in the high-age regime. The LCFS service prioritizes timeliness by serving the newest update and discarding older ones, which is an effective strategy for minimizing age. FCFS, in contrast, must process every update, including stale ones, in their arrival order, which is detrimental to age. 

The thinned policies improve upon unthinned FCFS Greedy by maintaining stability. The thinned policies have non-zero efficiency  but they are not as good as LCFS. For the unthinned FCFS Greedy policy, as $\Delta\to \infty$, the asymptotic slope converges to $0$. This shows that this policy provides zero additional  privacy despite large sacrifices in timeliness.

\subsection{Performance of Decoupled Policies}

If all updates need not be delivered to the monitor, policies can drop stale updates to prioritize transmission. Since we can design the service (or ``dump") distribution, a key question is whether a deterministic or random schedule is better for minimizing age. Fig.~\ref{RandomDeterministic} compares the performance of the DAD policy against two RAD policies: one using uniform dumps and another using geometric dumps for various source arrival rates $\lambda$. It is important to note that the LCFS policy with geometric service is functionally equivalent to the RAD policy with geometric dumps.~\footnote{Throughout this section, we refer to a RAD policy where the inter-dump times $D_k$ follow a Geometric distribution as ``Geometric dumps'' (or ``RAD Geometric''), and similarly for ``Uniform dumps'' when $D_k$ is Uniform.} In both policies, an available update is transmitted in any given slot with a constant probability, leading to identical age-leakage performance curves. 

\begin{figure}[t]
    \centering
    \includegraphics[width=0.5\textwidth]{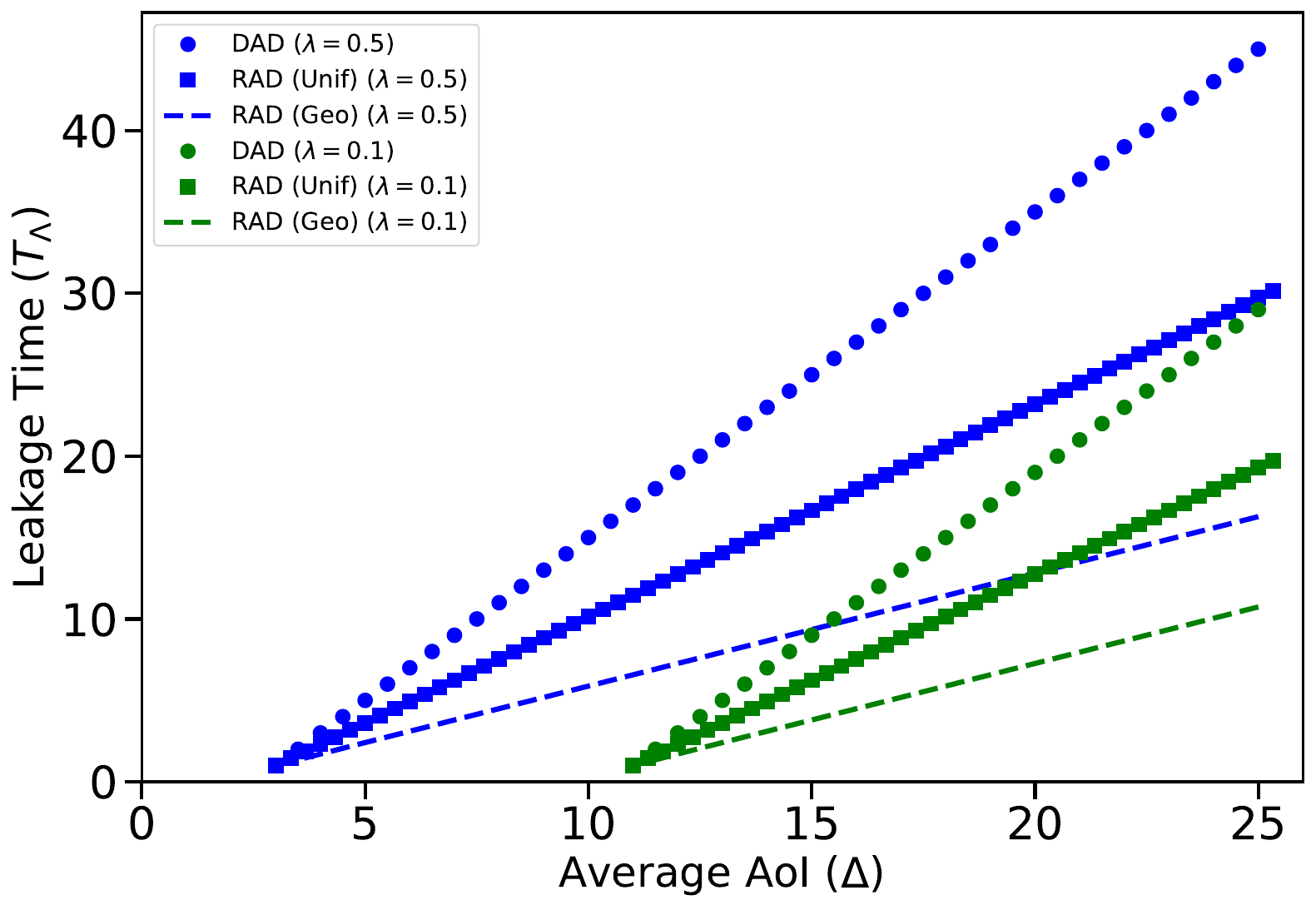}
    \caption{Leakage time as a function of the average AoI for three decoupled policies: DAD, RAD with uniform dumps, and RAD with geometric dumps. Each policy is plotted for arrival rates $\lambda = 0.1$ and $\lambda = 0.5$. All three policies are parameterized by their mean inter-dump time $\tau= \E{D}$. The geometric RAD policy is plotted by varying $\tau$ continuously. The DAD and uniform RAD policies are plotted by varying their underlying integer parameters, which restricts their mean $\tau$ to discrete integer (for DAD) or half-integer (for uniform) values. }
\label{RandomDeterministic}
\end{figure}

For a given age, the DAD policy achieves the highest leakage time $\TMaxL$. To understand this trade-off, we recall from Theorem~\ref{thm: DAD-leakage} and Corollary~\ref{Cor: Special-RAD-age} that both the leakage time and age for all RAD policies (including the optimal D-DAD policy) are parameterized by the deterministic dump period $\tau$. Specifically, for the DAD policy, $\TMaxL = 1/\Lambda = \tau$ and $\Delta = 1/\lambda + (\tau+1)/2$. As $\tau$ increases, the age and leakage time both increase. The DAD policy is followed by the RAD policy with uniform dumps which has leakage rate  given by Corollary~\ref{cor:RADunifleakage} and $\Delta = 1/\lambda + (2\tau+1)/3$, and  the geometric RAD policy performs the worst, with leakage time $\TMaxL = 1/\log(1+1/\tau)$ and $\Delta = 1/\lambda + \tau$. This performance gap stems from the variance of the dump schedule, as shown by the general RAD formula in \eqnref{Radnewage}. The zero-variance DAD policy is optimal for integer-valued dump periods $\tau$, while the low-variance uniform policy is nearly as good. The high-variance geometric policy is significantly worse for age.

Fig.~\ref{RandomDeterministic} also illustrates that for a fixed value of the age $\Delta$, the leakage time $\TMaxL$ is increasing in the source rate $\lambda$. For a given total age, a more active source has a smaller source age penalty ($1/\lambda$) as seen from \eqnref{Radnewage}. This allows the server to use a longer, more private dump period $\tau$, to achieve the same total age, resulting in a higher $\TMaxL$.

\subsection{Comparison between Coupled and Decoupled Policies}
 Fig. \ref{DADLCFS} compares the best-performing policy from each class: The DAD policy vs.~the optimal LCFS coupled policy. The discrete operating points of the DAD policy are shown as blue circles on the plot, corresponding to integer dump periods $\tau$. The solid black curve connects these points, representing the continuous trade-off achieved by the optimal D-DAD policy. It is important to note that this connecting path is not a straight line, a property that becomes evident in a magnified view of any segment. The observed curvature arises because both the average age and the asymptotic leakage rate are non-linear functions of the dithering probability. 
 \begin{figure}[t]
    \centering    \includegraphics[width=0.5\textwidth]{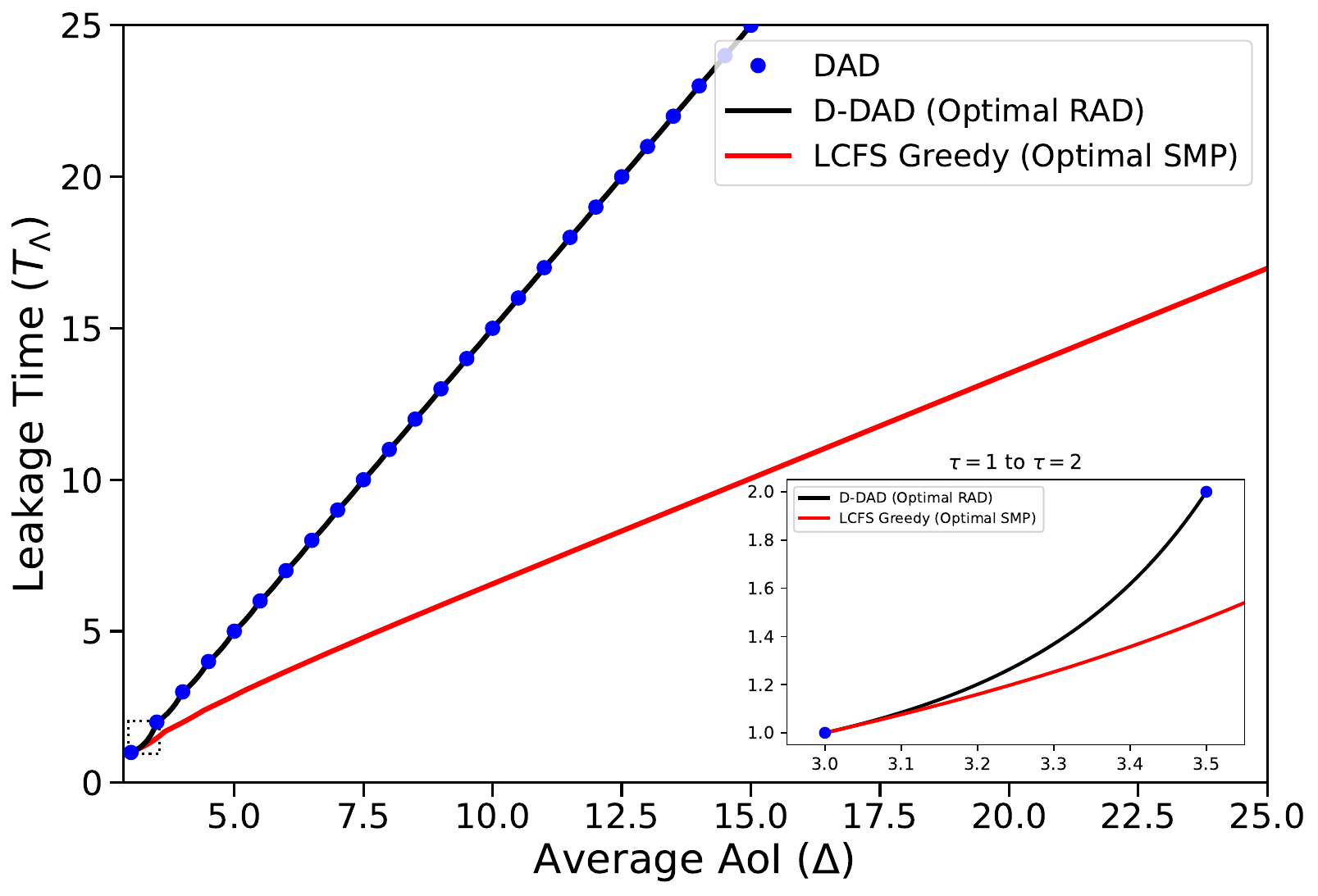}
    \caption{Leakage time as a function of the average AoI time for the DAD and  SMP classes for  $\lambda=0.5$. For the optimal LCFS,  $\MLC$, which specifies the leakage  is varied from 0.01 to 1. For DAD,  $\tau$ varies from $1$ to $25$.}
    \label{DADLCFS}
\end{figure}

For any given age $\Delta$, the D-DAD policy provides a strictly higher leakage time $\TMaxL$ than the optimal LCFS Greedy policy. This shows that the decoupled class is more efficient. 

The fundamental advantage of the D-DAD policy is revealed in the asymptotic limit as $\Delta\to \infty$. The asymptotic slope for the greedy policy converges to $\ln(2)$, whereas the asymptotic limit for the D-DAD policy converges to $2$. 
This shows that the D-DAD policy is far superior in the high-age regime, providing significantly more privacy-efficiency for any given age. This regime is relevant for applications where timeliness is measured on a slower time scale than individual packet delivery times. 

This highlights a fundamental advantage of decoupling the transmission schedule from the arrival process. The optimal LCFS SMP is a coupled policy; its decision to transmit is directly triggered by an update's arrival, committing the server for a specific service time. In contrast, the D-DAD policy is decoupled. It follows its own deterministic schedule, which allows it to optimally dither between waiting to accumulate an even fresher update and dumping the freshest available one at the moment that balances the age-leakage trade-off. 

\subsection{Results for Markovian sources}

Our results on leakage require only that the distribution on $X^n$ has full support: in that case, we can identify the maximum likelihood input for every output. The age analysis, however, depends on the actual distribution of the arrivals $X^n$. One of the simplest non-i.i.d.~arrival distributions with full support is a Markov source, which we can use to model bursty traffic.
We present simulation results for our main coupled (FCFS, LCFS)  and decoupled (DAD) policies under a two-state discrete-time Markov source. The state in each time slot determines the source's activity. 
\begin{itemize}
    \item State 1 (Active): The source generates an update
    \item State 0 (Inactive): The source does not generate an update.
\end{itemize}
With $P_{ij}$ denoting the probability of a state transition from $i$ to $j$, the effective arrival rate $\lambda_{\mathrm{eff}}$ is defined as the steady-state probability of the source being in the active state and is given by
\begin{align}
    \lambda_{\mathrm{eff}} = \frac{P_{01}}{P_{01} + P_{10}}.
\end{align}
For each scenario, we compare an i.i.d.~Bernoulli source against a bursty two-state Markov source with transition matrix $P$ that has the same effective rate:
\begin{itemize}
    \item Low-rate ($\lambda_{\mathrm{eff}}= 0.2$): The Markov source ($P_{01}=0.05, P_{10}=0.2$) is characterized by long periods of silence (average inactive period of $1/P_{01} = 20$ slots) and short bursts of activity (average active period of $1/ P_{10} = 5$ slots).
    \item High-rate ($\lambda_{\mathrm{eff}}= 0.8$): The Markov source ($P_{01}=0.2, P_{10}=0.05$) is characterized by short periods of silence (average inactive period of $1/P_{01} = 5$ slots) and long, sustained bursts of activity (average active period of $1/ P_{10} = 20$ slots).
\end{itemize}
The results for the low-rate scenario are shown in Fig.~\ref{fig:iidbursty_TMaxL}, and for the high-rate scenario in Fig.~\ref{fig:iidburstylamb0.8_TMaxL}. For each policy, the leakage rate is calculated using 
Theorem~\ref{thm:max_leakageFCFSLCFS} for the FCFS and LCFS policies and Theorem~\ref{thm: DAD-leakage} for the DAD policy, as the Markov source still has full support. The average age for the FCFS policy is found via simulation ($5\times 10 ^5$ slots). 
For the LCFS and DAD policies under Markovian arrivals, we again use the sampling of age processes approach to get $\E{A_m} = \E{A_s} + \E{Z_m}$ as given in \eqnref{samplingeqn}. The LCFS policy with geometric service is functionally equivalent to the RAD policy with geometric dumps. Using the exact analytical formula for the age at the input to the server $\E{A_s}$ derived in Appendix~\ref{app:markov_age}, and the server sampling age $\E{Z_m}$ from Corollary~\ref{Cor: Special-RAD-age}, the average age at the monitor is
\begin{align}
    \E{A_m} = 1 + \frac{P_{10}}{P_{01}(P_{01}+P_{10})} + \E{Z_m},
\end{align}
where $\E{Z_m} = \tau$ for the LCFS policy (geometric service with mean $\tau$) and $\E{Z_m} = (\tau+1)/2$ for the DAD policy (deterministic dumps with period $\tau$).

\begin{figure}[t]
    \centering
    \includegraphics[width=0.5\textwidth]{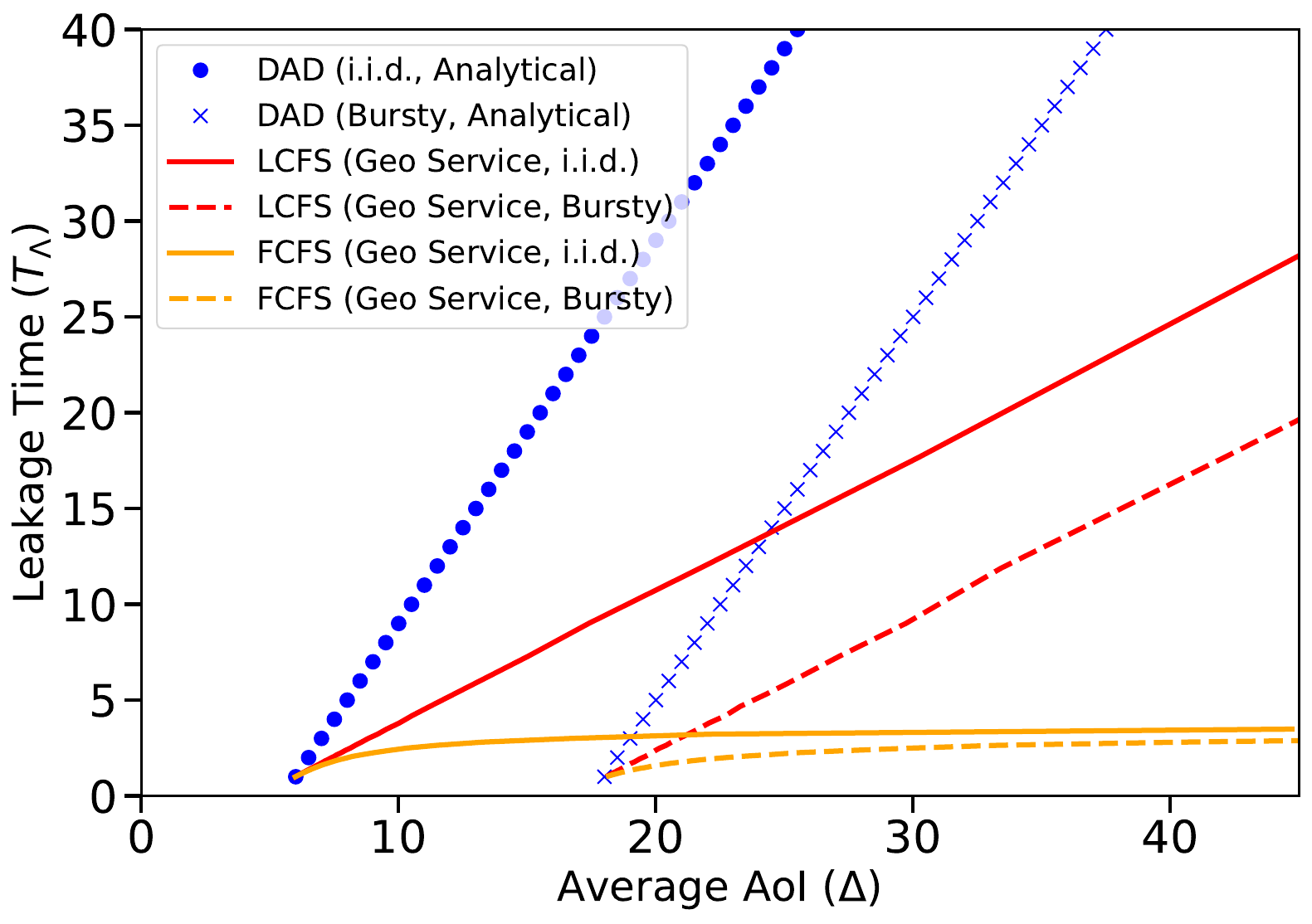}
    \caption{Leakage time as a function of the average AoI for FCFS, LCFS, and DAD policies under i.i.d.~Bernoulli and bursty Markov traffic ($P_{01}=0.05, P_{10}=0.2$). Both sources have $\lambda_{\mathrm{eff}}= 0.2$. The FCFS and LCFS curves are generated by varying the geometric service rate $\mu = 1/\tau$. The DAD curves are generated by varying the deterministic dump period $\tau$ from $1$ to $40$.}
    \label{fig:iidbursty_TMaxL}
\end{figure}

\begin{figure}[t]
    \centering
    \includegraphics[width=0.5\textwidth]{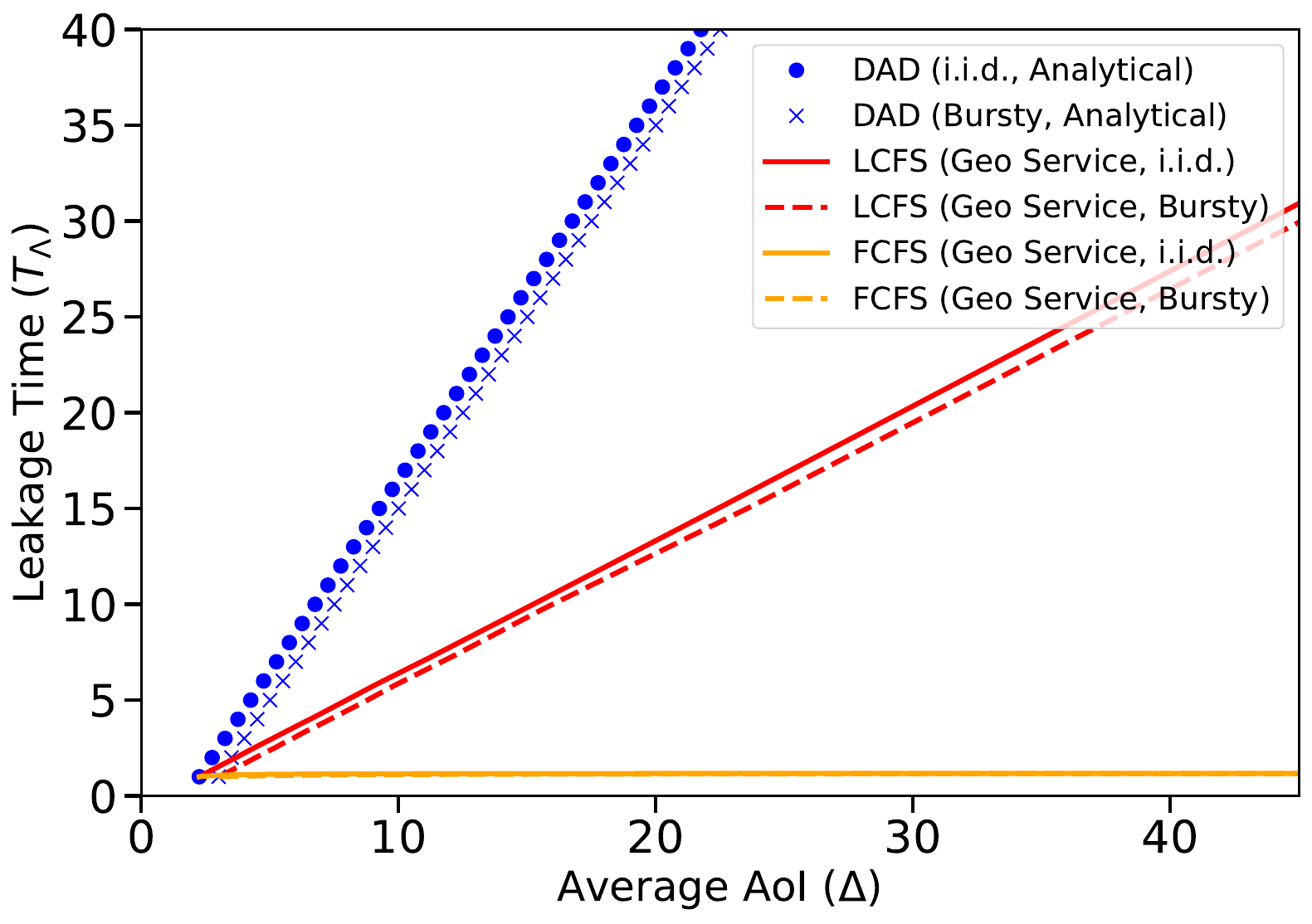}
    \caption{Leakage time as a function of the average AoI for FCFS, LCFS, and DAD policies under i.i.d.~Bernoulli and bursty Markov traffic ($P_{01}=0.2, P_{10}=0.05$). Both sources have $\lambda_{\mathrm{eff}}= 0.8$. The FCFS and LCFS curves are generated by varying the geometric service rate $\mu = 1/\tau$. The DAD curves are generated by varying the deterministic dump period $\tau$ from $1$ to $40$.}
    \label{fig:iidburstylamb0.8_TMaxL}
\end{figure}

In the low-rate scenario, the performance of all policies degrades under bursty traffic, shifting the curves to the right, and have higher age for the same leakage time. For the DAD and LCFS policies, this shift is a constant. This gap is exactly the difference in the average age at the input of the server $\E{A_s}$ between the bursty and i.i.d.~sources, representing the additional staleness caused by the source's long inactive periods. However, a key insight from the plot is that the asymptotic slopes for the bursty and i.i.d.~cases converge to the same values in the high-age regime. This indicates that the privacy efficiency of the server policies is determined by their structure (coupled vs.~decoupled), while source burstiness just contributes an additive age penalty.

In the high-rate scenario, the difference between i.i.d.~and bursty traffic becomes negligible for the LCFS and DAD policies. These policies discard stale updates and are hence resilient to the long bursts of activity. Their slopes are identical to the i.i.d.~case. In contrast, the FCFS policy suffers significantly because the long bursts cause the  queue to become backlogged, making the age large.

\subsection{Impact of Full-Support Assumption}

Our analysis has focused on sources with full support (FS). As our source is a Bernoulli process, any input sequence $x^n\in\{0,1\}^n$ is possible. This means that the adversary cannot rule out any input sequence a~priori. Thus the service policy must protect the user's privacy without relying on any assumed structure in the user's behavior. Moreover, this assumption permits a broad class of inputs for which we can rigorously analyze MaxL and draw general conclusions regarding the trade-offs offered by different server policies. 

In contrast, for sources with restricted support (RS), the input sequences are constrained to a subset of the entire input space, $\mathcal{X}_\mathrm{RS}\subset \{0,1\}^n $. Examples include sources with a minimum inter-arrival time or sources that can transmit only in specific slots (e.g., even time slots). For any fixed service policy, a direct consequence of the maximum likelihood and summation operations in the MaxL definition \eqnref{eqn:maxL_def} being performed over smaller sets is  that the calculated leakage for an  RS source is less than or equal to that of a FS source. An extreme instance is a deterministic source, for which the MaxL is 0. This shows that MaxL measures the reduction in uncertainty. When the source is deterministic, the adversary's a~priori guessing probability is already 1. Observing the output sequence, cannot improve the guessing probability, and thus no new information is leaked. Thus, prior information about the source's structure leaves less information to leak, and our analysis for FS sources serves as a worst-case privacy guarantee. 

Optimizing the age-leakage trade-off for RS sources can be challenging. The class of restricted sources is ill-defined, as there are many ways an arrival process can be constrained (e.g., specific periodicities, or forbidden patterns). Also, exploiting these structural restrictions would require the server to employ source-specific policies, for which evaluating both leakage and age becomes more difficult compared to the policies we analyze in this paper.

\section{Conclusion and Future Work}\label{conclusion}

In this paper we investigated the trade-off between AoI and MaxL in a discrete-time status updating system. We considered a model with Bernoulli source arrivals and an eavesdropping adversary and analyzed the performance of two broad classes of server policies: coupled policies, where the service is tied to arrivals, and decoupled policies, where service is based on an independent timer. For the coupled class, we analyzed FCFS as a benchmark and showed that an LCFS policy with a greedy service time distribution is optimal within the class of SMP policies. 

For the decoupled class, we characterize the structure of the optimal policy as dithering between two adjacent deterministic dumps. We also find that the optimal decoupled policy (D-DAD) achieves a better age-leakage trade-off than the optimal coupled policy (LCFS greedy). The optimal strategy in both the classes is a dithering policy. The D-DAD policy dithers between two deterministic periods that differ by one, while the LCFS greedy policy dithers between two discrete-uniform distributions whose support differs by one. 

This work opens several avenues for future research. We limited our analysis to Bernoulli arrival processes that simplify the leakage analysis because all input sequences $x^n$ are in the support set. Future work should explore sources with memory, such as Markovian arrival processes or run-length limited constraints. Such models would not have full support, which would change the structure of the maximum likelihood input sequence and alter the age-leakage trade-off. It remains to be seen if a server efficiency of greater than $2$ can be achieved. 

We also limited our attention to a system with a single source. Extending the analysis to a system with multiple sources competing for a single server would be an interesting problem. This introduces  scheduling problems where the server must decide which source to update to manage the system-wide balance of age and privacy. 

Many real-world sensor networks are resource constrained. Incorporating factors such as energy harvesting at the source or server would create  more complex trade-offs between age, leakage, and energy consumption. 
 In this work, we disallowed policies which generate ``dummy'' updates. A compelling extension would be to allow the server to inject dummy traffic but at a certain cost. Allowing $\delta(n)$ dummy updates would be an interesting direction for future work but seems significantly more challenging. This would create a more complicated policy space where the server's actions are state-dependent (e.g., based on the current AoI). While  age-optimality of server policies has been formulated as an MDP~\cite{CeranHARQ, tang2020minimizing, bedewy2021optimal, hsu2018age, sathyavageeswaran2024timely},
 analyzing the MaxL for such a policy is especially difficult, as the set of achievable output sequences would depend on the complex evolution of the system state.

\section{Appendix}
\subsection{Proof of Lemma ~\ref{maxLlemma}}
\label{Lemma: FCFS Leakage}
    We prove this claim by induction. We start by showing the claim is true for $n = s_1$. There are two possible achievable output sequences: $y^{s_1} = \mathbf{0}^{s_1}$ and $y^{s_1} = (0, \ldots, 0, 1)$.
    
    Consider the case $y^{s_1} = (0, \ldots, 0, 1)$. 
    \begin{subequations}
\eqnlabel{y1givenx11-x10}
    \begin{align}
    \pmf{Y^{s_1}|X^{s_1}}{(0,\ldots,0,1)|(1,0,\ldots,0)} &= g(s_1) \\
    \pmf{Y^{s_1}|X^{s_1}}{(0,\ldots,0,1)|x^{s_1}} &= 0, \quad \text{for any } x^{s_1} \text{ with } x_1=0.
\end{align}
\end{subequations}
It follows from \eqnref{y1givenx11-x10} that
\begin{align}
     \max_{x^{s_1} \in \mathcal{X}^{s_1}} \pmf{Y^{s_1}|X^{s_1}}{(0,\ldots,0,1)|x^{s_1}} = g(s_1). \eqnlabel{basecase1}
\end{align}
Similarly, for $y^{s_1} = \mathbf{0}^{s_1}$, 
\begin{subequations}
\eqnlabel{y10givenx11-x10}
    \begin{align}
    \pmf{Y^{s_1}|X^{s_1}}{\mathbf{0}^{s_1}|\mathbf{0}^{s_1}} &= 1 \\
    \pmf{Y^{s_1}|X^{s_1}}{\mathbf{0}^{s_1}|(1,0,\ldots,0)} &= 1 - g(s_1).
\end{align}
\end{subequations}
It follows from \eqnref{y10givenx11-x10} that
\begin{align}
    \max_{x^{s_1} \in \mathcal{X}^{s_1}} \pmf{Y^{s_1}|X^{s_1}}{\mathbf{0}^{s_1}|x^{s_1}} = 1. \eqnlabel{basecase2}
\end{align}
It follows from \eqnref{basecase1} and \eqnref{basecase2} that
\begin{align}
    \max_{x^{s_1} \in \mathcal{X}^{s_1}} \pmf{Y^{s_1}|X^{s_1}}{y^{s_1}|x^{s_1}} = [g(s_1)]^{\text{wt}(y^{s_1})}, 
\end{align}
and the maximum is achieved by the time-shifted input sequence. Thus the claim holds for $n = s_1$. Now assume the claim holds for all sequence lengths $j < n$. Then for any achievable output sequence $y^j$, where $j < n$, we have
\begin{align}
    \max_{x^j \in \mathcal{X}^j} \pmf{Y^j|X^j}{y^j|x^j} = [g(s_1)]^{\text{wt}(y^j)}, \qquad j < n
\end{align} which is achieved by the time-shifted input sequence $x^j$. 

Now we need to show the claim holds for length $n$. Let $y^n$ be an achievable output sequence. Let $m$ be the time of the last departure before time $n$, defined as
\begin{align}
    m \triangleq \max\{t \colon  t < n, y_t=1\}, 
\end{align} with $m=0$ if no such $t$ exists.  The prefix $y^m$ is also achievable, because $y^n$ is achievable. 
\begin{align}
\spmf{Y^n|X^n}{y^n|x^n} &= \spmf{Y^m|X^m}{y^m|x^m} \spmf{Y_{m+1}^n|Y^m,X^n}{y_{m+1}^n|y^m,x^n}
\end{align}
Since the departure process until time $m$ does not depend on the arrivals after time $m$, we can write
\begin{align}
\max_{x^n \in \support{X}} \pmf{Y^n|X^n}{y^n|x^n} \nonumber \
&= \max_{x^n \in \support{X}} \left[ \spmf{Y^m|X^m}{y^m|x^m}  \spmf{Y_{m+1}^n|Y^m,X^n}{y_{m+1}^n|y^m,x^n} \right] \ \\
&= \left(\max_{x^m \in \mathcal{X}^m} \spmf{Y^m|X^m}{y^m|x^m}\right)  \left(\max_{x_{m+1}^n} \spmf{Y_{m+1}^n|Y^m,X^n}{y_{m+1}^n|y^m,x^n}\right) , \eqnlabel{twomax}
\end{align}
and this maximum is acheived by the time-shifted input sequence $x^m$. 

By the induction hypothesis
\begin{align}
    \max_{x^m \in \mathcal{X}^m} \pmf{Y^m|X^m}{y^m|x^m} = [g(s_1)]^{\mathrm{wt}(y^m)}.
\end{align}
For the second factor on the right side of \eqnref{twomax}, based on the sequence $y_{m+1}^n$ we consider two cases, $Y_n = 0$ and $Y_n = 1$. 
For the case when $Y_n = 0$ ($y_{m+1}^n = (0, 0, \ldots, 0)$ ), we have 
\begin{align}
    \max_{x^n} \spmf{Y_{m+1}^n|Y^m,X^n}{(0,\ldots,0)|y^m,x^n} \le  1,
\end{align}
where the unity upper bound holds trivially. This unity upper bound is achieved when the input for the prefix $y^m$ is the time-shifted input, and the input for the suffix $x_{m + 1}^n $ is the all zero vector. 

We now consider the case when $Y_n = 1$. The suffix has exactly one departure, $y_{m + 1} ^n = (0,0,\ldots, 1)$. This output is achievable only if $n - m \ge  s_1$. To generate this output, an update must arrive at some time $i$ where ($m < i \le n $) to an idle server. The probability of this event is $g(n - i + 1)$. To maximize this probability, the input suffix $x_{m+1}^n$ must be chosen to yield the most probable service time. By our assumptions on the service time distribution, the most probable service time is $s_1$, occurring with probability $g(s_1)$. This is achieved by setting the arrival time to $i=n -s_1+1$.
Hence, it follows that
\begin{align}
    \max_{x_{m+1}^n } \spmf{Y_{m+1}^n|Y^m,X^n}{y_{m+1}^n|y^m,x^n} = g(s_1).
\end{align}
This maximum is achieved by an input suffix $x_{m+1}^n$ that is an all-zero vector except for a single arrival at time $n-s_1+1$. This corresponds exactly to the input constructed by the time-shifting rule for $y^n$. Now returning to \eqnref{twomax} we see that the maximum is achieved by time-shifted input sequence $x^n$. Hence the claim holds for $n$, thereby completing the proof. 

\subsection{Proof of Corollary~\ref{cor:lcfsage}}
\label{app: corollary1}

    When the arrival process is Bernoulli with parameter $\lambda$, we have $\E{X}= \frac{1}{\lambda}$ and $\E{X^2} = \frac{2-\lambda}{\lambda^2}$.   Defining $\lambar \triangleq 1 - \lambda$, we have $P(X \ge s) = \sum_{k = s} ^\infty {(1 - \lambda) }^ {k - 1} \lambda = {\lambar} ^ {s - 1} $.
    
We also have 
\begin{align}
    P(S\le X) &= \sum _{s = 1}^ \infty P(S \le X | S  =s) P (S = s)\\
    &= \sum _{s = 1} ^ \infty P(X \ge  s) P(S = s)\eqnlabel{PSlessX}\\
    &= \sum _{s = 1} ^ \infty {\lambar} ^ {s - 1} P(S = s)= \E{\lambar ^ {S - 1}}. \eqnlabel{Prob1}
\end{align}

Next we compute $\E{\min(X,S)}$. We write
\begin{align}
    \E{\min (X,S)}& = \sum _{k = 1} ^ \infty P(\min(X,S) \ge k) = \sum _{k = 1} ^ \infty P(X \ge k , S \ge k ).
\end{align}
Since $S$ and $X$ are independent,
\begin{align}
    \E{\min (X,S)}& = \sum _{k = 1} ^ \infty P(X\ge k) P(S \ge k )\\
    & = \sum _{k = 1} ^ \infty {\lambar ^ {k - 1} \sum _ {j = k} ^ \infty P(S = j)}\\
     & = \sum _{j = 1} ^ \infty P(S = j) \sum _{k = 1} ^ j {\lambar} ^ {k - 1}\\
     & = \sum _{j = 1} ^ \infty P(S = j) \left( \frac{1 - {\lambar^ j}}{1 - \lambar}\right)\\
     & = \frac{1 - \E{\lambar^ S}}{\lambda} .\eqnlabel{Emin}
\end{align}

It follows from \eqnref{LCFSTripathi}, \eqnref{Prob1} and \eqnref{Emin} that 
\begin{align}
    \Delta_\mathrm{LCFS-Ber/G/1 } &= \frac{1}{\lambda} + \frac{1- \E{\lambar ^ S}  }{\lambda \E{\lambar ^ {S-1}}} 
    = 1 + \frac{1}{\lambda \E{\lambar ^ {S-1}}}.
\end{align}

\subsection{Proof of Theorem~\ref{thm:optimal_pmfLCFS}}
\label{App: Optimal LCFS}
    The proof consists of two parts. First we prove that the optimal PMF must have a minimum service time of $s_1 = 1$. Second, we prove, given $s_1=1$, the PMF must have the ``greedy" structure.
    From Corollary~\ref{cor:lcfsage}, the LCFS AoI $\Delta_\mathrm{LCFS-Ber/G/1 }$, denoted here by the shorthand notation $\Delta(g)$, 
    is a strictly decreasing function of $\E{\lambar ^ {S-1}}$. 

Let $g(s)$ be any valid PMF in $\mathcal{G}_\mathrm{SMP}$ with a minimum service time $s_1 > 1$ and a corresponding leakage constraint $g(s_1) = \MLC$. We construct a new PMF $g'(s)$ by shifting the entire distribution to the left:
\begin{align}
    g'(s) = g(s + s_1 - 1) \qquad \text{for } s = 1, 2, \ldots.
\end{align}
The minimum service time for the new PMF is $1$, and $g'(1) = \MLC$. Since $g(s)$ is an SMP distribution, $g'(1) = \MLC \ge g(s_1 + k) = g(1+k)$ for $k \ge 1 $, so $g'(s)$ also satisfies the SMP property and meets the same leakage constraint. 

Now we compare the age performance by comparing the term $\E{\lambar^{S-1}}$ for both distributions. 
\begin{align}
    \operatorname{E}_{g'}[\lambar^{S-1}] &= \sum_{s=1}^{\infty} g'(s) \lambar^{s-1} = \sum_{s=1}^{\infty} g(s + s_1 - 1) \lambar^{s-1}. \eqnlabel{Egprime}
\end{align}
Let $t = s + s_1 - 1$, then \eqnref{Egprime} becomes
\begin{align}
\operatorname{E}_{g'}[\lambar^{S-1}] &= \sum_{t=s_1}^{\infty} g(t) \lambar^{t - s_1} = \lambar^{1-s_1} \sum_{t=s_1}^{\infty} g(t) \lambar^{t-1} = \lambar^{1-s_1} \operatorname{E}_g[\lambar^{S-1}].
\end{align}
Since we assumed $s_1 > 1$ and we know $0 < \lambar < 1$,  $\lambar^{1-s_1} > 1$. Therefore, $\operatorname{E}_{g'}[\lambar^{S-1}] > \operatorname{E}_g[\lambar^{S-1}]$, which implies that $\Delta(g') < \Delta(g)$.

Any policy with a minimum service time greater than 1 can be strictly improved by shifting it to start at $s_1=1$. Thus, the optimal policy must have a minimum service time of $s_1=1$.

Now, consider the set of all valid PMFs with $s_1=1$ and leakage constraint $g(1)=\MLC$. Assume there exists a PMF $g(s)$ in this set that does not have the ``greedy" structure. By definition, this means there must exist at least two service times, $s_a$ and $s_b$, with $1 < s_a < s_b$, such that $g(s_a) < \MLC$ and $g(s_b) > 0$. We can construct a new PMF $g'(s)$ by moving a small probability mass $\epsilon > 0$ from the larger service time $s_b$ to the smaller one $s_a$:
\begin{align}
 g'(s_a) &= g(s_a) + \epsilon, \\
 g'(s_b) &= g(s_b) - \epsilon, \\
 g'(s) &= g(s) \quad \text{for } s \notin \{s_a, s_b\}.
\end{align}
For a sufficiently small $\epsilon$, we have $g'(s_a) \le \MLC$ and $g'(s_b) \ge 0$, so $g'(s)$ remains a valid PMF in the feasible set. The leakage constraint is unaffected since $g'(1)=g(1)=\MLC$.

We now show this shifting of probability mass strictly improves the age.
Since $0 < \lambar < 1$ and $s_a < s_b$,
\begin{align}
    \operatorname{E}_{g'}[\lambar^{S-1}] - \operatorname{E}_g[\lambar^{S-1}] &= (g'(s_a)\lambar^{s_a-1} + g'(s_b)\lambar^{s_b-1}) - (g(s_a)\lambar^{s_a-1} + g(s_b)\lambar^{s_b-1}) \\
    &= \epsilon \left(\lambar^{s_a-1} - \lambar^{s_b-1}\right) >0.
\end{align}
Thus $\Delta(g')< \Delta(g)$. Moreover, any PMF that is not of the greedy form can be strictly improved by shifting probability mass from a larger service time to a smaller one. Therefore, the greedy PMF $g_\MLC^*$ uniquely minimizes the average AoI. 

\subsection{Proof of Theorem \ref{thm: RADleakage}}
\label{app:RADleakage}
This proof will employ the following lemma. 
\begin{lemma}
\label{lemma:equivalence}
    For any pair of binary sequences $(y^n, u^n)$, the following equivalence holds:
    \begin{align}
        u^n\in \AU(y^n) \iff y^n \in \AY(u^n).
    \end{align}
\end{lemma}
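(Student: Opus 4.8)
The plan is to observe that membership in $\AU(y^n)$ and membership in $\AY(u^n)$ are governed by conditions that are contrapositives of one another, checked coordinate by coordinate. Fix an arbitrary pair of binary sequences $(y^n, u^n) \in \{0,1\}^n \times \{0,1\}^n$. By definition, $u^n \in \AU(y^n)$ holds precisely when, for every $i \in [n]$, the implication $y_i = 1 \Rightarrow u_i = 1$ is true. Likewise, $y^n \in \AY(u^n)$ holds precisely when, for every $i \in [n]$, the implication $u_i = 0 \Rightarrow y_i = 0$ is true.

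The key step is the elementary observation that, for each fixed index $i$, the implication ``$y_i = 1 \Rightarrow u_i = 1$'' and its contrapositive ``$u_i = 0 \Rightarrow y_i = 0$'' are logically equivalent. (Equivalently, both simply forbid the single pattern $(y_i, u_i) = (1, 0)$.) Taking the conjunction of this equivalence over all $i \in [n]$ yields that the defining condition of $\AU(y^n)$ is satisfied by $u^n$ if and only if the defining condition of $\AY(u^n)$ is satisfied by $y^n$, which is exactly the claimed equivalence $u^n \in \AU(y^n) \iff y^n \in \AY(u^n)$.

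I do not anticipate a genuine obstacle here: the statement is a purely propositional identity once the two set definitions are unwound, so the only thing to be careful about is making the coordinate-wise contraposition explicit rather than leaving it implicit. If a reviewer wants extra rigor, one can spell out both directions separately — assuming $u^n \in \AU(y^n)$ and deducing $y_i = 0$ whenever $u_i = 0$ (since $u_i = 0$ rules out $y_i = 1$), and symmetrically for the converse — but this adds nothing beyond the contrapositive already given.
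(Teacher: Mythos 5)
Your proof is correct and follows essentially the same approach as the paper's: both unwind the two set definitions and observe that the per-coordinate conditions ``$y_i = 1 \Rightarrow u_i = 1$'' and ``$u_i = 0 \Rightarrow y_i = 0$'' are contrapositives of one another, hence logically equivalent. The paper phrases this as a chain of biconditionals, but the content is identical.
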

\begin{proof}
     Let $S_1 = \{(y^n,u^n): u^n\in \AU(y^n)\}$ and $S_2 = \{(y^n, u^n): y^n \in \AY(u^n)\}$. 
The equivalence is shown as:
\begin{align}
    (y^n,u^n) \in S_1 &\iff  u^n \in \AU(y^n)\\
    &\iff (y_i = 1\implies u_i = 1) \text{ for all }i\\
    &\iff (u_i= 0 \implies y_i = 0)\text{ for all }i\\
    &\iff y^n \in \AY(u^n)\\
    &\iff (y^n, u^n) \in S_2.
\end{align}
Thus $S_1 = S_2$.
\end{proof}
    We now apply the result from Lemma~\ref{lemmaRAD} to the definition of MaxL~\eqnref{eqn:maxL_def}.
\begin{align}
     \L(X^n\to Y^n)
&=\log \sum_{y^n\in \support{Y}} P_{u^n}(u^n\in \AU(y^n)) \eqnlabel{definition}\\
&=\log \sum_{y^n\in \support{Y}} \sum_{u^n\in \AU(y^n)}P_{u^n}(u^n) \eqnlabel{eqn:swapsum}\\
&=\log \sum_{u^n} \sum_{y^n\in \AY(u^n)}P_{u^n}(u^n) \eqnlabel{ynsubzn}\\
&=\log \sum_{u^n} \left| \AY(u^n) \right| P_{u^n}(u^n)  \\
&= \log \sum_{u^n} 2^{\mathrm{wt}(u^n)} P_{u^n}(u^n) \\
&= \log \sum_{k=0}^{n} 2^k P_{u^n}\left( \{ \mathrm{wt}(u^n) = k \} \right)\\
&=\log \left(\E{2^{\mathrm{wt}(u^n)}}\right).
\end{align}
To arrive at \eqnref{ynsubzn}, we swap the order of summation. This is justified by Lemma~\ref{lemma:equivalence}.

\subsection{Proof of Lemma~\ref{lemma:unique}}
\label{app:unique}
    Let $f(z) = \E{z^{-D}} = \sum_{d = 1}^ \infty z^{-d} g(d)$. Since $f'(z) = \sum_{d = 1}^ \infty -d z^{-d - 1} g(d) < 0$, for $z> 0$, $f(z)$ is strictly decreasing for positive real numbers. 
    We also have $f(1)= 1$ and $\lim_{z\to\infty}f(z) = 0$. Since $f(z)$ is a continuous and strictly decreasing function that goes from $1$ to $0$ on the interval $[1,\infty]$, it crosses the value $1/2$ exactly once in this interval. This proves the existence and uniqueness of the positive real root $z_0$. 

    For any other root $z$ (real or complex) of $f(z) =\E{z^{-D}}= 1/2$, we have 
    \begin{align}
        |f(z)| = \left|\sum_{d = 1} ^\infty z^{-d} g(d) \right|.
    \end{align}
    Using the triangle inequality we can write
    \begin{align}
        |f(z)| &\le \sum _{d = 1}^\infty {\left|z\right|}^ {-d} g(d)=f(|z|).
    \end{align}
    Since $f(z_0) = \frac{1}{2}$, this implies
    \begin{align}
        f(z_0) = \frac{1}{2} \le f(|z|).
    \end{align}
    Since $f(x)$ is strictly decreasing for $x > 0$, it follows that $z_0 > |z|$. 

\subsection{Proof of Lemma~\ref{thm:RADleakagerate}}
\label{app: RADleakagerate}

    From Lemma \ref{lemma:unique} we know that the unique positive real root $z_0$ of $\E{z^{-D}}=1/2$ has the largest magnitude among all roots of $z_i$. The poles of $M(z)$ are $p_i = 1/z_i$.  
Then using partial fraction expansion, we can express $M(z)$ in terms of the poles $p_0, p_1, \ldots, p_{d_\mathrm{max} - 1}$ as 
\begin{align}
    M(z) &= \frac{C_0}{1 - \frac{z}{p_0}} + \frac{C_1}{1 - \frac{z}{p_1}} + \ldots + \frac{C_{d_\mathrm{max} - 1}}{1 - \frac{z}{p_{d_\mathrm{max} - 1}}}\eqnlabel{mofz},
\end{align}
where $C_0$, $C_1, \ldots , C_{d_\mathrm{max} - 1}$ are constants. Each term in the expansion corresponds to a geometric series. Using the geometric series expansion we have
\begin{align}
    \frac{C_i}{1  - \frac{z}{p_i}} &= C_i \sum_{n = 0}^ \infty {\left(\frac{z}{p_0}\right)}^n \eqnlabel{geoseries}.
\end{align}
Therefore, \eqnref{mofz} and \eqnref{geoseries} imply
\begin{align}
    \sum_{n = 0}^ \infty m(n) z^n &= \sum_{n = 0}^ \infty C_0 {\left(\frac{1}{p_0}\right)}^n z^n + \sum _{n = 0}^ \infty C_1 {\left(\frac{1}{p_1}\right)}^n z^n + \ldots + \sum _{n = 0}^\infty C_{d_\mathrm{max} - 1} {\left(\frac{1}{p_{d_\mathrm{max} - 1}}\right)}^n z^n.
\end{align}
Substituting $p_i= 1/z_i$ for $i \ge 0$ yields
\begin{align}
    m(n) &= C_0 {z_0}^n + C_1 {z_1}^n + \ldots + C_{d_{\max} - 1} {z_{d_\mathrm{max}-1}^n}\\
    &= C_0{z_0}^n \left(1 + \frac{C_1}{C_0} {\left(\frac{z_1}{z_0}\right)}^ n + \frac{C_2}{C_0} {\left(\frac{z_2}{z_0}\right)}^ n + \ldots +  \frac{C_{d_\mathrm{max}-1}}{C_0} {\left(\frac{z_{d_{\mathrm{max}}-1}}{z_0}\right)}^ n \right)\\
    &=  C_0{z_0}^n (1 + \epsilon(n)),
\end{align}
where $\epsilon (n) =  \frac{C_1}{C_0} {\left(\frac{z_1}{z_0}\right)}^ n + \frac{C_2}{C_0} {\left(\frac{z_2}{z_0}\right)}^ n + \ldots +  \frac{C_{d_\mathrm{max}-1}}{C_0} {\left(\frac{z_{d_{\mathrm{max}}-1}}{z_0}\right)}^ n $. Since Lemma \ref{lemma:unique}  guarantees $z_0 > |z_i|$ for all $i \ge 1$, $|z_i/z_0| < 1$.  Thus, $\lim _{n \to \infty} \epsilon(n) = 0$. 

The value of $z_0$ is determined by the  equation $G(1/z_0) = 1/2$, which is equivalent to
\begin{align}
    \E{z_0^{-D}} = \frac{1}{2}.
\end{align}
The MaxL rate is now given by 
\begin{align}
    \Lambda &= \frac{1}{n}\L(X^n\to Y^n) = \lim _ {n\to \infty }\frac{1}{n}\log (m(n))\\
    &= \lim_{n \to \infty} \frac{1}{n} \log(C_0 {z_0}^n (1 + \epsilon(n)))\\
    &= \lim_{n\to\infty} \left( \frac{\log(C_0)}{n} + \log(z_0) + \frac{\log(1 + \epsilon(n))}{n} \right). 
\end{align}
As $n\to\infty$, $\frac{\log(C_0)}{n}\to 0 $ and $\epsilon(n)\to 0$. Thus the MaxL rate is 
\begin{align}
    \Lambda = \log(z_0).
\end{align}
where $z_0$ is the unique positive real root of $\E{z^{-D}} = 1/2$.

\subsection{Proof of Corollary \ref{cor:RADGeo}}
\label{app:RADgeo}
Since the inter-dump attempt times $D$ are i.i.d.~geometric with mean $\E{D}=\tau$, the system is equivalent to a system where dump attempts are i.i.d.~Bernoulli trials with a success probability of $\mu = 1/\tau$. Thus, the total number of attempts $\mathrm{wt}(u^n)$ follows a $\mathrm{Binomial}(n, \mu)$ distribution. The expectation in Theorem~\ref{thm: RADleakage} becomes 
    \begin{align}
    \E{2^{\mathrm{wt}(u^n)}} &= \sum _{k=0} ^n \binom{n}{k} \mu ^k {(1- \mu )}^ {n-k} 2 ^k\\
    &= \sum _{k=0} ^n \binom{n}{k}
    {(2\mu)} ^k {(1- \mu )}^ {n-k} \\
    & = {(2\mu + (1 - \mu))}^n\\
        &= {(1 + \mu)} ^n .
    \end{align}
Thus the MaxL is
\begin{align}
    \L(X^n\to Y^n) &= n \log (1 + \mu )
\end{align}
and the asymptotic leakage rate  is
\begin{align}
    \Lambda = \lim_{n\to\infty} \frac{1}{n} \L(X^n\to Y^n) = \log(1 + \mu).
\end{align}
Substituting $\mu = 1/\tau$ gives the final result
\begin{align}
    \Lambda = \log(1 + 1/\tau).
\end{align}

\subsection{Proof of Corollary \ref{cor:RADunifleakage}}
\label{app:RADunifleakage}

    From Theorem~\ref{thm:RADleakagerate}, the asymptotic MaxL rate is $\Lambda = \log(z_0)$, where $z_0$ is the unique positive real root of the equation $\E{z^{-D}} = 1/2$. The policy is a discrete uniform distribution with mean $\E{D}=\tau$. This corresponds to a distribution over the set $\{1, 2, \ldots, k\}$ where the mean is $\E{D} = (k+1)/2 = \tau$. This gives the relationship $k = 2\tau - 1$. For a discrete uniform distribution on $\{1,2,\ldots, k\}$, the expectation is
   \begin{align}
    \E{z_0^{-D}} &= \sum_{d=1}^{k} g(d) z_0^{-d} \\
    &= \sum_{d=1}^{k} \frac{1}{k} z_0^{-d} \\
    &= \frac{1}{k} \left( \frac{z_0^{-1}(1 - z_0^{-k})}{1 - z_0^{-1}} \right) \\
    &= \frac{1 - z_0^{-k}}{k(z_0 - 1)} \eqnlabel{unifk}.
\end{align}

Substituting $k = 2\tau - 1$ into \eqnref{unifk} and equating \eqnref{unifk} with $1/2$ yields the claim.

\subsection{Justification for the Dinkelbach Method}
\label{app: dinkelbach}
Dinkelbach's method addresses problems of the form $\min_{x \in S} \{N(x)/D(x)\}$, where the following three assumptions hold:
\begin{enumerate}
    \item The set $S$ is a compact and connected subset of a Euclidean space.
    \item The functions $N(x)$ and $D(x)$ are continuous over $S$.
    \item The denominator $D(x) > 0$ for all $x \in S$.
\end{enumerate}
We verify that our optimization problem in \eqnref{eq:OptimalDumpProblem} satisfies these conditions. In our problem, the optimization variable $x$, is the PMF $g = (g(1), g(2), \ldots, g(d_{\max}))$. The numerator is $N(g) = \E{D^2} = \sum_{d = 1}^{d_{\max}} d^2 g(d)$. The denominator is $D(g) = \E{D} = \sum_{d = 1}^{d_{\max}} d g(d)$. The feasible set $S$ is the set of all PMFs $g$ satisfying the constraints in \eqnref{eq:LeakageConstraint} and \eqnref{eq:ProbabilityConstraint}. Now we verify each condition below:
\begin{enumerate}
    \item Compactness and connectedness of $S$: The feasibility set $S$ is the set of all PMFs that satisfy the leakage constraint. Since all of these constraints are linear, the set $S$ is convex, and thus connected. Since the maximum inter-dump time is bounded by $d_\mathrm{max}$, the feasibility set $S$ is a closed and bounded set, and therefore compact.
    \item Continuity of $N(g)$ and $D(g)$: Both the numerator $N(g) = \E{D^2}$ and the denominator $D(g) = \E{D}$ are linear functions with respect to the PMF $g$, and are thus continuous.
    \item Positivity of the denominator $D(g)$: The denominator is $D(g) = \E{D}$. Since the inter-dump time $d$ can only take values in $\{1, \ldots,d_{\max}\}$, the condition $D(g) > 0$ is satisfied for all $g \in S$.
\end{enumerate}

\subsection{Proof of Lemma \ref{lemma:convexity}}
\label{app:convexity}
    Using the chain rule,  $f''(x) = C''(d(x)) \cdot (d'(x))^2 + C'(d(x)) \cdot d''(x)$, we have
    \begin{align}
        f''(x) &= 2 \left(-\frac{1}{x \log(z_0)}\right)^2 + (2d(x) - \gamma) \left(\frac{1}{x^2 \log(z_0)}\right) \\
        &= \frac{1}{x^2 \log(z_0)} \left( \frac{2}{\log(z_0)} - \frac{2\log(x)}{\log (z_0)} - \gamma \right).
    \end{align}
     Since $x = z_0^{-d} > 0$ and $z_0 > 1$, the term $\frac{1}{x^2 \log(z_0)} > 0$. Thus, $f''(x) > 0$ if and only if
    \begin{align}
        \gamma &< \frac{2}{\log(z_0)}(1 - \log (x)).
    \end{align}
     For  $x \in[0, z_0^{-1}]$, $f''(x) > 0$ if  
     \begin{align}
         \gamma & < \frac{2}{\log (z_0)}(1 + \log (z_0))
         = 2 + \frac{2}{\log(z_0)}. \eqnlabel{eq:convexity_condition}
     \end{align}

\subsection{Age Analysis of the Markov Source}
\label{app:markov_age}
We analyze the age $A_s$ of a two-state discrete-time Markov source with an inactive state (0) and an active state (1, corresponding to a generated update). The transition probabilities are $P_{01}$ (inactive to active) and $P_{10}$ (active to inactive), which implies $P_{00} = 1-P_{01}$ and $P_{11} = 1-P_{10}$.

Let $\pi_{i,a}$ be the stationary probability of the joint state where the source is in state $i \in \{0,1\}$ and the age is $A_s=a$. The age $a=1$ occurs only if an update was generated in the previous slot. For an age $a \ge 2$, no update was generated in the previous slot. The balance equations are
\begin{align}
    \pi_{0,a} &= P_{00} \pi_{0, a-1} \quad (a \ge 2), \\
    \pi_{1,a} &= P_{01} \pi_{0, a-1} \quad (a \ge 2).
\end{align}
This implies $\pi_{0,a} = P_{00}^{a-1} \pi_{0,1}$ and $\pi_{1,a} = P_{01}P_{00}^{a-2}\pi_{0,1}$ for $a \ge 2$.

The base cases for age $a=1$ are
\begin{align}
    \pi_{0,1} &= P_{10} \sum_{a=1}^{\infty} \pi_{1,a} = P_{10} \pi_1 ,\\
    \pi_{1,1} &= P_{11} \sum_{a=1}^{\infty} \pi_{1,a} + P_{01} \sum_{a=1}^{\infty} \pi_{0,a} = P_{11}\pi_1 + P_{01}\pi_0,
\end{align}
where $\pi_i = \sum_{a=1}^{\infty} \pi_{i,a}$ are the marginal state probabilities satisfying the balance equation $\pi_0 P_{01} = \pi_1 P_{10}$.

Solving this system yields  $\pi_{0,1} = \frac{P_{01}P_{10}}{P_{01}+P_{10}}$.

The marginal probability distribution of the age $A_s$ is then
\begin{align}
    P(A_s=1) &= \pi_{0,1} + \pi_{1,1} = \frac{P_{01}}{P_{01}+P_{10}}, \\
    P(A_s=a) &= \pi_{0,a} + \pi_{1,a} = (P_{00}^{a-1} + P_{01}P_{00}^{a-2})\pi_{0,1} \nonumber \\
             &= (P_{00}+P_{01})P_{00}^{a-2}\pi_{0,1} = P_{00}^{a-2}\pi_{0,1} \quad (a \ge 2).
\end{align}
The average age at the input to the server $\E{A_s}$ is calculated as follows
\begin{align}
    \E{A_s} &= \sum_{a=1}^{\infty} a P(A_s=a) \\
    &= P(A_s=1) + \sum_{a=2}^{\infty} a P_{00}^{a-2}\pi_{0,1} \\
    &= \frac{P_{01}}{P_{01}+P_{10}} + \frac{\pi_{0,1}}{P_{00}} \sum_{a=2}^{\infty} a P_{00}^{a-1}\\
    &= \frac{P_{01}}{P_{01}+P_{10}} + \frac{P_{10}}{P_{01}+P_{10}}\left(\frac{1-P_{01}}{P_{01}} + 2\right) \\
    &= 1 + \frac{P_{10}}{P_{01}(P_{01}+P_{10})}.
\end{align}

\bibliographystyle{IEEEtran}
\bibliography{privacyagebib.bib,ton.bib}

\end{document}